\definecolor{SkyBlue}{RGB}{86,180,233}
\definecolor{BluishGreen}{RGB}{0,158,115}
\definecolor{Orange}{RGB}{230,159,0}
\algrenewcommand\algorithmicrequire{\textbf{Input:}}
\algrenewcommand\algorithmicensure{\textbf{Output:}}
\crefname{claim}{Claim}{Claims}
\crefname{lemma}{Lemma}{Lemmas}
\crefname{theorem}{Theorem}{Theorems}
\crefname{part}{part}{parts}
\crefname{alt}{alternative}{alternatives}
\crefname{step}{step}{steps}
\crefname{algocf}{Algorithm}{Algorithms}
\crefname{eq}{equality}{equalities}
\crefname{eqs}{equalities}{equalities}
\crefname{eqn}{equation}{equations}
\crefname{ineq}{inequality}{inequalities}
\crefname{ineqs}{inequalities}{inequalities}
\crefname{exp}{expression}{expressions}
\crefname{prop}{property}{properties}
\crefname{props}{properties}{properties}
\crefname{imp}{implication}{implications}
\crefname{eqv}{equivalence}{equivalences}
\crefname{def}{definition}{definitions}
\newtheorem{theorem}{Theorem}[section]
\newtheorem{lemma}[theorem]{Lemma}
\newtheorem{claim}{Claim}[section]
\newtheorem{corollary}[theorem]{Corollary}
\newtheorem{proposition}[theorem]{Proposition}
\newcommand{\PP}{\mathbb{P}}
\newcommand{\R}{\mathbb{R}}
\newcommand{\N}{\mathbb{N}}
\newcommand{\E}{\mathbb{E}}
\newcommand{\GG}{\mathcal{G}}
\newcommand{\Ev}{D}
\newcommand{\score}{\sigma}
\newcommand{\iPerm}{i^{\mathrm{Pm}}}
\newcommand{\pred}{\hat{\imath}}
\newcommand{\algPerm}{\text{Pm}}
\newcommand{\algPermBi}{\text{Pm}_{\normalfont\text{bi}}}
\newcommand{\algPart}{\text{Pt}}
\newcommand{\SPart}{S^{\mathrm{Pt}}}
\newcommand{\iPart}{i^{\mathrm{Pt}}}
\title{Impartial Selection with Predictions}
\newcommand{\email}[1]{\protect\href{mailto:#1}{#1}}
\author{Javier Cembrano\thanks{Department of Algorithms and Complexity, Max Planck Institut für Informatik; Department of Industrial Engineering, Universidad de Chile; \email{jcembran@mpi-inf.mpg.de}.}
\and Felix Fischer\thanks{School of Mathematical Sciences, Queen Mary University of London; \email{felix.fischer@qmul.ac.uk}.}
\and Max Klimm \thanks{Institute of Mathematics, Technische Universität Berlin; \email{klimm@math.tu-berlin.de}.}
}
\date{}
\begin{document}

\maketitle

\begin{abstract}
We study the selection of agents based on mutual nominations, a theoretical problem with many applications from committee selection to AI alignment.
As agents both select and are selected, they may be incentivized to misrepresent their true opinion about the eligibility of others to influence their own chances of selection. Impartial mechanisms circumvent this issue by guaranteeing that the selection of an agent is independent of the nominations cast by that agent. Previous research has established strong bounds on the performance of impartial mechanisms, measured by their ability to approximate the number of nominations for the most highly nominated agents. We study to what extent the performance of impartial mechanisms can be improved if they are given a prediction of a set of agents receiving a maximum number of nominations. Specifically, we provide bounds on the consistency and robustness of such mechanisms, where consistency measures the performance of the mechanisms when the prediction is accurate and robustness its performance when the prediction is inaccurate. 
For the general setting where up to $k$ agents are to be selected and agents nominate any number of other agents, we give a mechanism with consistency \smash{$1-O\big(\frac{1}{k}\big)$} and robustness \smash{$1-\frac{1}{e}-O\big(\frac{1}{k}\big)$}.
For the special case of selecting a single agent based on a single nomination per agent, we prove that $1$-consistency can be achieved while guaranteeing  \smash{$\frac{1}{2}$}-robustness.
A close comparison with previous results shows that (asymptotically) optimal consistency can be achieved with little to no sacrifice in terms of robustness.
\end{abstract}

\section{Introduction}\label{sec:intro}

Majority voting is a simple but very important mechanism for collective decision making.
Its use dates back at least to ancient Athens, where it was employed for example to decide on the expulsion of citizens from the city~\citep{heinberg1926history}.
A much more recent proposal uses majority voting to aggregate the solutions of multiple calls to large language models (LLMs)~\citep{chen2024more}.
Some proposals even go so far as using it in AI alignment, and destroying AI entities if they are perceived as unaligned with human ethics by other AI entities~\citep{lesswrong2025}; see also \citet{aaronson2022my,irving2018ai}.

The motivation for using majority decisions in these applications is their superior robustness to outliers compared to decisions made by a single entity.
This argument requires, of course, that each entity is incentivized to reveal its true opinion about others rather than following its selfish interests.
This is true for voting in general, but even more so in settings like those described above where the set of candidates and the set of voters overlap or are the same. 
Indeed, it is reasonable to assume that an Athenian citizen in fear of expulsion would have cast their vote for someone they considered likely to receive a large number of nominations, rather than someone they considered worthy of expulsion, in order to minimize their own risk of being expelled.
Similarly, it is naïve to assume that AI entities risking destruction due to misalignment will truthfully report on the misalignment of other entities if this negatively affects their own chances of survival. 
What is needed are voting mechanisms for which the probability that an entity is selected is independent of the nominations cast by that entity. Such mechanisms are called \emph{impartial} in the literature.

While impartiality is obviously appealing, previous work has established strong impossibility results for mechanisms that satisfy it.
Deterministic impartial mechanisms that select a fixed number~$k$ of entities must fail natural axioms \citep{holzman2013impartial,cembrano2024impartial}, and the overall number of nominations for the selected entities cannot provide a constant approximation to the maximum number of nominations for any set of $k$ entities~\citep{alon2011sum}.
Even randomized impartial mechanisms are relatively limited; for example, for the selection of a single entity they can only approximate the maximum number of nominations to a factor of~\smash{$\frac{1}{2}$}~\citep{alon2011sum,fischer2015optimal}. 

To improve the performance of impartial mechanisms, we will assume that the mechanism has access to a \emph{prediction} of the entities most suitable for selection.
Depending on the application, the prediction could for example come from another LLM not participating in the voting process or from expert advice.
Our work is part of a growing literature on algorithms and mechanisms with advice; a website maintained by~\citet{lindermayr2025algorithms} provides an excellent overview of the area.
The area is motivated by the fact that LLMs often provide astonishingly accurate answers, but also sometimes fail spectacularly. 
Mechanisms with advice therefore need to be able to cope with good as well as bad predictions, without a clear way to distinguish between the two.
This trade-off is studied formally by considering the \emph{consistency} and \emph{robustness} of a mechanism.
The consistency of a mechanism describes its ability to produce good outcomes when the predictions are accurate; the robustness its ability to produce reasonable results even when the predictions are inaccurate.
The ability of a mechanism to move gracefully between these extremes is referred to as \textit{smoothness}.

We will specifically consider deterministic and randomized impartial selection mechanisms with predictions. 
As it is standard in the literature on impartial selection, we formalize nominations among entities as a directed graph, where the set $[n]=\{1,\dots,n\}$ of vertices represent the entities and an edge from $i$ to~$j$ indicates that $i$ casts a nomination for~$j$. 
A deterministic $k$-selection mechanism with predictions is given such a graph and a prediction  \smash{$\hat{S}\subseteq [n]$} with  \smash{$|\hat{S}|=k$}, and returns a set of at most $k$ vertices. 
A randomized $k$-selection mechanism is a lottery over deterministic mechanisms. 
Letting~$\Delta_k$ denote the maximum sum of indegrees of any $k$ vertices in the graph, a mechanism is called $\alpha$-consistent for some $\alpha\in[0,1]$ if the (expected) sum of indegrees of the selected vertices is at least $\alpha\Delta_k$ when the prediction is accurate, i.e., when the total indegree of the vertices in  \smash{$\hat{S}$} is indeed equal to $\Delta_k$.
While it is trivial to achieve $1$-consistency in an impartial way, by simply returning the predicted set  \smash{$\hat{S}$}, this would lead to arbitrarily bad performance when the predictions are inaccurate. 
To measure the performance of a mechanism in such cases, a mechanism is called $\beta$-robust for some $\beta\in [0,1]$ if the (expected) sum of indegrees of the selected vertices is at least $\beta\Delta_k$ regardless of the quality of the prediction.
We will be interested in the largest possible values of $\alpha$ and $\beta$ for which impartial $\alpha$-consistent and $\beta$-robust mechanisms can be found. 

\paragraph{Our Results.}

We study impartial mechanisms with predictions in different settings; \Cref{fig:results} summarizes our results and compares them with previous work.
As we initiate the study of impartial mechanisms with predictions, all previous mechanisms are unable to deal with predictions and consequently have equal robustness and consistency.
Comparing our results with the baseline mechanism defined as a lottery between the best known mechanisms from the literature and the trivial mechanism that always selects the predicted set shows significant improvements.
\begin{figure}[tb]
\centering
\begin{tikzpicture}[scale=2.6]
\scriptsize
\useasboundingbox (-.25,-0.75) rectangle (1.25,1.25);
  \path[fill=lightgray] (1,0) -- (1,1)  -- (0,1) -- (0,1/2) -- (1/2,1/2) -- cycle;
  \path[fill=Orange] (0,0) -- (1/2,0) -- (1/2,1/2) -- (0,1/2) -- cycle;
  \path[fill=BluishGreen] (1/2,1/2) -- (1/2,0) -- (1,0) -- cycle;
  \draw[dotted,gray] (0,1) -- (1,1) -- (1,0);
  \draw[very thick] (0,1.1) node[above]{$\beta$} -- (0,0) -- (1.1,0) node[right]{$\alpha$};
  \draw (1,0.05) -- (1,-0.05) node[below] {$1$};
  \draw (1/2,0.05) -- (1/2,-0.05) node[below] {$\frac{1}{2}$};
  \draw (0.05,1/2) -- (-0.05,1/2) node[left] {$\frac{1}{2}$};
  \draw (0.05,1) -- (-0.05,1) node[left] {$1$};

  \draw[fill=SkyBlue,rounded corners=1.5pt] (1+0.015,0+0.015) -- (1/2+0.015, 1/2 + 0.015) -- (1/2-0.015, 1/2 - 0.015) -- (1 - 0.015,0-0.015) -- cycle;

  \node[circle, fill=Orange,draw=black,minimum size=3pt,inner sep=0pt] (perm) at (1/2,1/2) {};
  \node[circle, fill=BluishGreen,draw=black,minimum size=3pt,inner sep=0pt] (base) at (1,0) {}; 

  \draw (1/2,0) node[below=4ex]{\textcolor{Orange}{uniform permutation}};
  \draw (1/2,0) node[below=6ex]{\textcolor{SkyBlue}{$\rho$-permutation}};
  \draw (1/2,0) node[below=8ex]{\textcolor{BluishGreen}{baseline}};
  \draw (1/2,0) node[below=10ex]{\textcolor{darkgray}{impossible}}; 

  \draw (1/2,0) node[below=14ex]{$k=1$};

\end{tikzpicture}
\begin{tikzpicture}[scale=2.6]
\scriptsize
\useasboundingbox (-.25,-0.75) rectangle (1.25,1.25);
  \path[fill=lightgray] (0,3/4) -- (0,1) -- (1,1) -- (1,1/2) -- (3/4,3/4) -- cycle;
  \path[fill=Orange] (0,0) -- (0,2/3) -- (2/3,2/3) -- (2/3,0) -- cycle;
  \path[fill=BluishGreen] (2/3,0) -- (2/3,2/3) -- (1,0) -- cycle;
  \path[fill=SkyBlue] (2/3,2/3) -- (1,1/2) -- (1,0) -- cycle;
  
  \draw[dotted,gray] (0,1) -- (1,1) -- (1,0);
  \draw[very thick] (0,1.1) node[above]{$\beta$} -- (0,0) -- (1.1,0) node[right]{$\alpha$};
  \draw (1,0.05) -- (1,-0.05) node[below] {$1$};
  \draw (1/2,0.05) -- (1/2,-0.05) node[below] {$\frac{1}{2}$};
  \draw (2/3,0.05) -- (2/3,-0.05) node[below] {$\frac{2}{3}$};
  \draw (3/4,0.05) -- (3/4,-0.05) node[below] {$\frac{3}{4}$};
  \draw (0.05,1/2) -- (-0.05,1/2) node[left] {$\frac{1}{2}$};
  \draw (0.05,2/3) -- (-0.05,2/3) node[left] {};
  \draw (0.05,3/4) -- (-0.05,3/4) node[left] {$\frac{3}{4}$};

  \draw (0.05,1) -- (-0.05,1) node[left] {$1$};
  \draw (1/2,0) node[below=4ex]{\textcolor{Orange}{uniform permutation}};
  \draw (1/2,0) node[below=6ex]{\textcolor{SkyBlue}{$1$-permutation}};
  \draw (1/2,0) node[below=8ex]{\textcolor{BluishGreen}{baseline}};
  \draw (1/2,0) node[below=10ex]{\textcolor{darkgray}{impossible}};

  \draw (1/2,0) node[below=14ex]{plurality, $k=1$}; 
  \node[circle, fill=Orange,draw=black,minimum size=3pt,inner sep=0pt] (perm) at (2/3,2/3) {};
    \node[circle, fill=BluishGreen,draw=black,minimum size=3pt,inner sep=0pt] (base) at (1,0) {}; 

  \node[circle, fill=SkyBlue,minimum size=3pt,draw=black,inner sep=0pt] (perm) at (1,1/2) {};
\end{tikzpicture}
\begin{tikzpicture}[scale=2.6]
\scriptsize
\useasboundingbox (-.25,-0.75) rectangle (1.25,1.25);
  \path[fill=lightgray] (0,3/4) -- (0,1) -- (1,1) -- (1,1/2) -- (3/4,3/4) -- cycle;
  \path[fill=Orange] (0,0) -- (0,2/3) -- (2/3,2/3) -- (2/3,0) -- cycle;
  \path[fill=BluishGreen] (2/3,0) -- (2/3,2/3) -- (1,0) -- cycle;
  \path[fill=SkyBlue] (2/3,2/3) -- (1,1/2) -- (1,0) -- cycle;
  \draw[dotted,gray] (0,1) -- (1,1) -- (1,0);
  \draw[very thick] (0,1.1) node[above]{$\beta$} -- (0,0) -- (1.1,0) node[right]{$\alpha$};
  \draw (1,0.05) -- (1,-0.05) node[below] {$1$};
  \draw (1/2,0.05) -- (1/2,-0.05) node[below] {$\frac{1}{2}$};
  \draw (2/3,0.05) -- (2/3,-0.05) node[below] {$\frac{2}{3}$};
  \draw (3/4,0.05) -- (3/4,-0.05) node[below] {$\frac{3}{4}$};
  \draw (0.05,1/2) -- (-0.05,1/2) node[left] {$\frac{1}{2}$};
  \draw (0.05,2/3) -- (-0.05,2/3) node[left] {};
  \draw (0.05,3/4) -- (-0.05,3/4) node[left] {$\frac{3}{4}$};

  \draw (0.05,1) -- (-0.05,1) node[left] {$1$};
  \node[circle, fill=SkyBlue,draw=black,minimum size=3pt,inner sep=0pt] (perm) at (1,1/2) {};
    \node[circle, fill=BluishGreen,draw=black,minimum size=3pt,inner sep=0pt] (base) at (1,0) {};

  \draw (1/2,0) node[below=14ex]{$k=2$};
  \draw (1/2,0) node[below=4ex]{\textcolor{Orange}{rand.~bidirect.~permutation}};
  \draw (1/2,0) node[below=6ex]{\textcolor{SkyBlue}{fixed bidirect.~permutation}};
  \draw (1/2,0) node[below=8ex]{\textcolor{BluishGreen}{baseline}};
  \draw (1/2,0) node[below=10ex]{\textcolor{darkgray}{impossible}};

\node[circle,fill=Orange,draw=black,minimum size=3pt,inner sep=0pt] (perm) at (2/3,2/3) {};
\end{tikzpicture}
\begin{tikzpicture}[scale=2.6]
\scriptsize
\useasboundingbox (-.25,-0.75) rectangle (1.25,1.25);
  \path[fill=lightgray] (0,4/5) -- (0,1)  -- (1,1) -- (1,2/3) -- (11/12,7/9) -- (4/5,4/5) --  cycle;
    \path[fill=Orange] (0,0) -- (0,65/108)  -- (65/108,65/108) -- (65/108,0) -- cycle ;
  \path[fill=BluishGreen] (65/108,0) -- (65/108,65/108)  -- (1,0) -- cycle ;
  \path[fill=SkyBlue] (1,0) -- (65/108,65/108) -- (5/6,19/36) --  (1,19/54) -- cycle;
  \draw[dotted,gray] (0,1) -- (1,1) -- (1,0);
  \draw[very thick] (0,1.1) node[above]{$\beta$} -- (0,0) -- (1.1,0) node[right]{$\alpha$};
  \draw (1,0.05) -- (1,-0.05) node[below] {$1$};
  \draw (5/6,0.05) -- (5/6,-0.05) node[below] {$\frac{5}{6}$};
  \draw (65/108,0.05) -- (65/108,-0.05) node[below] {$\frac{65}{108}$};
   \draw (4/5,1-0.05) -- (4/5,1+0.05) node[above] {$\frac{4}{5}$};
   \draw (11/12,1-0.05) -- (11/12,1+0.05) node[above] {$\frac{11}{12}$};

  \draw (1 - 0.05,19/36) -- (1 + 0.05,19/36) node[right] {$\frac{19}{36}$};
  \draw (0.05,4/5) -- (-0.05,4/5) node[left] {$\frac{4}{5}$};
  \draw (0.05,65/108) -- (-0.05,65/108) node[left] {$\frac{65}{108}$};
  \draw (1-0.05,19/54) -- (1+0.05,19/54) node[right] {$\frac{19}{54}$};
  \draw (1-0.05,2/3) -- (1+0.05,2/3) node[right] {$\frac{2}{3}$};
  \draw (1-0.05,7/9) -- (1+0.15,7/9) node[right] {$\frac{7}{9}$};

  \draw (0.05,1) -- (-0.05,1) node[left] {$1$};
  \node[circle, fill=BluishGreen,draw=black,minimum size=3pt,inner sep=0pt] (base) at (1,0) {}; 

  \draw[fill=SkyBlue,rounded corners=1.5pt] (1+0.015,19/54+0.015) -- (5/6 + 0.015, 19/39 + 0.055) -- (5/6 - 0.015, 19/36 - 0.015) -- (1 - 0.015,19/54-0.015) -- cycle;

  \draw (1/2,0) node[below=14ex]{$k=3$};
  \draw (1/2,0) node[below=4ex]{\textcolor{Orange}{$k$-partition w/ permutation}};
  \draw (1/2,0) node[below=6ex]{\textcolor{SkyBlue}{$\rho$-partition}};
  \draw (1/2,0) node[below=8ex]{\textcolor{BluishGreen}{baseline}};
  \draw (1/2,0) node[below=10ex]{\textcolor{darkgray}{impossible}}; 

  \node[circle,fill=Orange,draw=black,minimum size=3pt,inner sep=0pt] (perm) at (65/108,65/108) {};
\end{tikzpicture}
\caption{Trade-off between $\alpha$-consistency and $\beta$-robustness of impartial $k$-selection mechanisms.
\textcolor{Orange}{Orange dots} are the best mechanisms from previous work; \textcolor{Orange}{orange areas} are the whole ranges of possible consistency--robustness combinations implied by them.
\textcolor{BluishGreen}{Green dots} are the trivial mechanisms always selecting the predicted set; \textcolor{BluishGreen}{green areas} are new ranges of consistency--robustness combinations implied by lotteries of them with previous work.
\textcolor{SkyBlue}{Blue dots} and \textcolor{SkyBlue}{blue rounded rectangles} are new mechanisms introduced in this paper; \textcolor{SkyBlue}{blue areas} are new ranges of consistency--robustness combinations implied by them or by lotteries of them with previous work. 
\textcolor{darkgray}{Gray areas} are impossible consistency--robustness combinations as shown in \Cref{thm:ubs}. Whether the combinations in the white areas are achievable by impartial mechanisms is left for future research.
\label{fig:results}}
\vspace{-.2cm}
\end{figure}
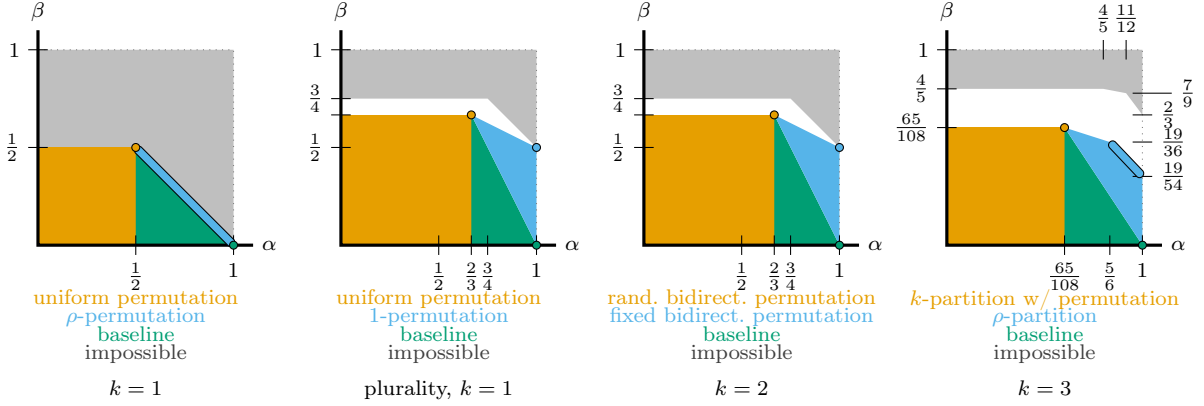

We first study the classic setting of randomized impartial $1$-selection mechanisms for approval voting. We propose a family of mechanisms we call $\rho$-permutation mechanisms that are parametrized by a confidence parameter $\smash{\rho\in\bigl[\frac{1}{2},1\bigr]}$. 
The mechanisms build upon the so-called (uniform) permutation mechanism~\citep{fischer2015optimal,bjelde2017impartial}, which does not use any predictions and is $\frac{1}{2}$-robust.
In a nutshell, this mechanism permutes the vertices uniformly at random and carefully selects a vertex with maximum indegree from vertices that appear previously in the permutation.
Our mechanisms favor permutations where the predicted vertex appears towards the end so that most of its incoming edges are likely observed, with a bias that depends on the confidence parameter. 
We show that for any~$\smash{\rho \in \bigl[\frac{1}{2},1\bigr]}$ the resulting mechanism is $\rho$-consistent and $(1-\rho)$-robust (\Cref{prop:perm-rho}), and that this trade-off between consistency and robustness is best-possible (\Cref{thm:ubs}). 
While this optimal consistency--robustness trade-off can also be achieved by the baseline mechanism that randomizes between the uniform permutation mechanism and the mechanism that selects the predicted vertex, such a mechanism would fail basic fairness notions, as discussed in \Cref{sec:1-selection}.

We then study $1$-selection mechanisms for plurality voting, where each vertex has exactly one outgoing edge. In this setting, we establish that the $1$-permutation mechanism that puts the predicted vertex at the end of the permutation is $1$-consistent and \smash{$\frac{1}{2}$}-robust (\Cref{thm:last-fixed-perm}). Prior work had established that the uniform permutation mechanism is \smash{$\frac{2}{3}$}-robust, which also implies \smash{$\frac{2}{3}$}-consistency~\citep{cembrano2023single}. By an appropriate lottery between both mechanisms, we achieve \smash{$\big(\frac{2}{3}+\frac{1}{3}\rho\big)$}-consistency and \smash{$\big(\frac{2}{3}-\frac{1}{6}\rho\big)$}-robustness for all $\rho\in [0,1]$ (\Cref{cor:plurality}). We further show in \Cref{thm:ubs} that for any $\alpha$-consistent and $\beta$-robust impartial mechanism, \smash{$\beta\leq\frac{3}{4}$} and \smash{$\alpha+\beta\leq\frac{3}{2}$}.

We next consider $2$-selection mechanisms. 
In this setting, the bidirectional permutation mechanism~\citep{bjelde2017impartial} was shown to achieve the optimal robustness guarantee of ~$\frac{1}{2}$.
We show that, by placing the predicted vertices at both ends of the permutation, we obtain the best-possible consistency guarantee of $1$ without any sacrifice of robustness (\Cref{thm:permutation-bi}). 
For randomized mechanisms, an appropriate lottery between this mechanism and the randomized permutation mechanism~\citep{bjelde2017impartial} achieves \smash{$\big(\frac{2}{3}+\frac{1}{3}\rho\big)$}-consistency and \smash{$\big(\frac{2}{3}-\frac{1}{6}\rho\big)$}-robustness for all $\rho\in [0,1]$; see \Cref{thm:randomized-k2}. We further show in \Cref{thm:ubs} that, for any $\alpha$-consistent and $\beta$-robust impartial mechanism, \smash{$\beta \leq \frac{3}{4}$} and \smash{$\alpha + \beta \leq \frac{3}{2}$}.

We finally study randomized $k$-selection mechanisms for an arbitrary number $k \in \mathbb{N}$ and obtain \Cref{thm:kpartition}, our most challenging result in terms of technical difficulty. \citet{bjelde2017impartial} proposed the $k$-partition mechanism with permutation, which partitions the vertices randomly into $k$ sets and selects one vertex from each set in a similar way to the permutation mechanism, but also accounting for edges from outside the set. 
We propose the $\rho$-partition mechanism for \smash{$\rho\in \bigl[\frac{1}{2},1\bigr]$}, that partitions the vertices randomly into $k$ sets but enforces that each set contains exactly one of the predicted vertices. In each set, the predicted vertex is put at position $\rho$ while all other vertices obtain a position drawn uniformly from the unit interval. We then select one vertex from each set, as the $k$-partition mechanism with permutation.
The mechanism achieves higher consistency by avoiding that more than one of the predicted vertices ends up in the same set, but the analysis requires new techniques because the probabilities of two optimal vertices being in the same set are no longer independent.

In the realm of mechanisms with predictions, it is common to also study approximation guarantees as a function of the prediction error, commonly referred to as smoothness. In our context, a natural notion of error of a predicted set of vertices is the difference between the maximum indegree of a set of $k$ vertices and the indegree of the predicted set, normalized by the maximum indegree so it lies in the interval $[0,1]$.
Since all our $\alpha$-consistent and $\beta$-robust mechanisms provide an $\alpha$-approximation of the indegree of the predicted set, independently of whether this set is or is not optimal, they immediately yield a smoothness guarantee of $\max\big\{\alpha(1-\eta),\beta\}$ for an error $\eta\in[0,1]$.

\paragraph{Related Work.}

Impartiality, as we study it here, was first considered by \citet{de2008impartial} for the division of a divisible resource among members of a set of agents based on divisions proposed by the agents.
In the context of selection, it was first studied by \citet{holzman2013impartial} and \citet{alon2011sum}.
\citeauthor{holzman2013impartial} studied deterministic mechanisms for the special case of \emph{plurality voting}, where each member casts exactly one nomination for another member of the set. They showed that, even in this restricted setting, impartiality is incompatible with the axioms of negative and positive unanimity, where the former requires that a member receiving no nomination is never selected and the latter that a member nominated by all members except themselves is always selected.
\citeauthor{alon2011sum} studied the more general setting of approval voting, where members may nominate an arbitrary number of other members and a fixed number~$k$ of members is to be selected.
Call a mechanism an \emph{exact} $k$-selection mechanism if it always selects exactly $k$ members, and $\alpha$-optimal for $\alpha \in [0,1]$ if the (expected) number of nominations that the selected members receive is always at least an $\alpha$-fraction of the total number of nominations that the $k$ best members receive.
In this terminology, \citeauthor{alon2011sum} showed that no deterministic, impartial, and exact $k$-selection mechanism can be $\alpha$-optimal for any fixed~$\alpha>0$.
They further provided a randomized impartial \smash{$\frac{1}{4}$}-optimal $1$-selection mechanism, and a randomized impartial $(1-o(1))$-optimal $k$-selection mechanism for $k\to\infty$.
\citet{fischer2015optimal} proposed and analyzed the permutation mechanism and showed that it is \smash{$\frac{1}{2}$}-optimal, which is best-possible for $1$-selection. They further showed that for plurality votes, the same mechanism is $\alpha$-optimal for \smash{$\alpha=\frac{67}{108} \approx 0.620$}.
\citet{cembrano2023single} gave a tight analysis of the permutation mechanism for plurality votes, showing that it is even \smash{$\frac{2}{3}$}-optimal.
They further proposed a new mechanism that is \smash{$\frac{2105}{3147}$}-optimal, where \smash{$\frac{2105}{3147} \approx 0.669$}.
\citet{bjelde2017impartial} showed that deterministic impartial $k$-selection mechanisms that are allowed to sometimes select fewer than $k$ members can perform better than exact $k$-selection mechanisms, and bounded the approximation guarantees of randomized mechanisms that select $k>1$ members.
\citet{caragiannis2019impartial} studied the additive approximation guarantees of impartial selection mechanisms, and \citet{cembrano2024impartial} gave a deterministic mechanism with an improved additive guarantee for plurality votes.
\citet{Caragiannis0P23} considered the additive approximation guarantees of impartial mechanisms that receive \emph{prior information} as additional input. They looked at two different models where members choose their nominations based on a known probability distribution or based on the popularity of a member. We note that this approach differs from ours, since it does not bound the approximation guarantees if the prior information is inaccurate.

The robustness--consistency framework was first used by \citet{purohit2018improving} to study the performance of online algorithms with predictions.
Predictions have been recently incorporated by \citet{berger2024learning} into the voting setting of metric distortion, where a candidate is to be selected based on rankings cast by voters with costs given by distances on a common metric space, and the goal is to minimize the ratio between the social cost of the selected candidate and that of the optimal one. 
More broadly, mechanisms with predictions were first studied by \citet{AgrawalBGOT24} for facility location, which has been further considered by \citet{balkanski2024randomized} for randomized mechanisms and different types of predictions, by \citet{FangFLN24} for a restricted set of candidate locations, and by \citet{istrate2022mechanism} for the case where agents' objective is to maximize rather than minimize their distance to the facilities. 
\citet{balkanski2023strategyproof} incorporated predictions into the design of strategyproof mechanisms for makespan minimization in scheduling.
\citet{XuL22} also studied a range of mechanism design problems with and without money, including facility location, scheduling, and auction design.

\section{Preliminaries}\label{sec:prelims}

For $n \in \N$, let $[n] = \{1,\dots,n\}$ and let $\GG_n = \bigl\{ ([n], E) : E \subseteq ([n] \times [n]) \setminus \bigcup_{i \in [n]} \{(i, i)\} \bigr\}$ denote the set of simple graphs with vertex set $[n]$ and without self-loops.
For $S,T \in 2^{[n]}$, we denote the edges from vertices in $S$ to vertices in $T$ by $N^-_S(T,G) = \{ (j,i) \in E : G = ([n],E), j \in S, i \in T\}$, and the number of such edges by $\delta^-_S(T,G)$. 
We omit $S$ from the previous notation when $S=[n]$, and
we write $N^-(i,G)$ instead of $N^-(\{i\},G)$ and $\delta^-(i,G)$ instead of $\delta^-(\{i\},G)$.
For $k \in [n]$, we write $\smash{\Delta_k(G) = \max_{T\subseteq [n] : |T| = k} \delta^-(T,G)}$. We omit $k$ when it is equal to $1$ and $G$ whenever it is clear from the context.
We refer to the graphs $G=([n],E)\in \GG_n$ such that $|\{ (i,j) \in E : j \in [n]\}| = 1$ for every $i\in [n]$, in which all vertices have outdegree exactly one, as \textit{plurality graphs}.

We consider selection mechanisms that obtain a prediction for the set of vertices with maximum indegrees.
A \emph{$k$-selection mechanism with predictions} is a family of functions \smash{$f \colon \binom{[n]}{k} \times \GG_n \to [0,1]^n$} with \smash{$\sum_{i \in [n]} f_i(\hat{S}, G) \leq k$} for all $G \in \GG_n$, where $f_i(\hat{S}, G)$ denotes the probability assigned by the mechanism to agent $i$.\footnote{It is not hard to see that such a distribution over vertices can be translated into a probability over sets of size at most $k$ via the Birkhoff-von Neumann Theorem; see \citet[][Lemma 2.1]{bjelde2017impartial} for the details.}
For a graph $G \in \GG_n$ and $i \in [n]$, the number \smash{$f_i(\hat{S},G)$} is the probability that $f$ selects vertex $i$ when \smash{$(\hat{S},G)$} is the input.
A mechanism is called \emph{deterministic} if it only assigns probabilities $0$ and $1$, and is called \textit{impartial} if \smash{$f_i(\hat{S},G) = f_i(\hat{S},G')$} whenever for two graphs $G = ([n],E)$ and $G' = ([n],E')$ we have $E \setminus \bigcup_{j \in [n]}\{(i,j)\} = E' \setminus \bigcup_{j \in [n]}\{(i,j)\}$.

For $\alpha \in [0,1]$, we call a $k$-selection mechanism with predictions $\alpha$-consistent if it achieves an $\alpha$-ap\-prox\-i\-mation when the predictions are accurate, i.e., \smash{$\sum_{i \in [n]} f_i(\hat{S},G) \delta^-(i,G) \geq \alpha\Delta_k(G)$} for all $n \in \N$, $G \in \GG_n$, and \smash{$\hat{S} \in \binom{[n]}{k}$} with \smash{$\delta^-(\hat{S},G) = \Delta_k(G)$}.
For $\beta \in [0,1]$, we call a $k$-selection mechanism with predictions $\beta$-robust if it achieves a $\beta$-approximation regardless of the predictions' quality, i.e., \smash{$\sum_{i \in [n]} f_i(\hat{S},G) \delta^-(i,G) \geq \beta \Delta_k(G)$} for all $n \in \N$, $G \in \GG_n$, and \smash{$\hat{S} \in \binom{[n]}{k}$}.

We finally require some notation regarding permutations.
For a (vertex) set $S$, we let $\Pi_S \subset S^{|S|}$ denote the set of permutations of the set $S$; we refer to the order induced by a permutation as an order \textit{from left to right} for ease of notation.
We write $\Pi_n$ as a shorthand for \smash{$\Pi_{[n]}$}.
For a permutation $\pi\in \Pi_S$, a set $S'\subseteq S$, and a vertex $i\in S$, we write $\pi_{<i}=\{j\in S: j=\pi_r, \text{$i=\pi_t$ for some $r<t$}\}$ for the set of vertices that appear to the left of $i$, $\pi(S')\in \Pi_{S'}$ for the restriction of $\pi$ to $S'$, and $\bar{\pi}\in \Pi_S$ for the reverse of $\pi$.
Sometimes we fix the position of some vertices in the permutation. 
For a set of distinct vertices $\{i_j:j\in [m]\}$ and distinct positions $\{r_j: j\in [m]\}$, we write $\Pi_S(i_1\to r_1,\ldots,i_m\to r_m)$ for the set of permutations $\pi\in \Pi_S$ such that $i_j=\pi_{r_j}$ for every $j\in [m]$.

\section{Selecting a Single Vertex}\label{sec:1-selection}

In this section, we study $1$-selection mechanisms with predictions.
For ease of notation, we denote the predicted set by \smash{$\hat{S} = \{\pred\}$} and write $\Delta(G)$ instead of $\Delta_1(G)$ for the maximum indegree.

It is well known that deterministic mechanisms cannot achieve any constant approximation in the classic setting without predictions, which for our setting has the direct implication that no deterministic mechanism with predictions can be $\beta$-robust for a constant $\beta>0$.
Thus, the trivial answer to the best-possible trade-off between consistency and robustness is given by the mechanism that selects the predicted vertex $\pred$ and achieves $1$-consistency and $0$-robustness.

The problem becomes more interesting with randomization, as the best-known mechanism for the setting without predictions achieves a $\frac{1}{2}$-approximation. 
We refer to the mechanism achieving this approximation, introduced by \citet{fischer2015optimal}, as the \textit{uniform permutation mechanism}.
This mechanism sorts the vertices uniformly at random and considers them one by one according to this order while maintaining a candidate vertex, initially the first vertex.
A vertex is taken as the new candidate if its observed indegree is larger than that of the current candidate, where \textit{observed indegree} refers to the indegree when only considering incoming edges from previous vertices and omitting a potential edge from the current candidate.
The vertex that is the candidate in the end is selected.

For later use, we define a more general version of this mechanism, where in addition to the graph $G=([n],E)$, the mechanism receives a subset of vertices $S\subseteq [n]$ and a vector $x\in [0,1]^S$.
Vertices in $S$ are those taken into account for the permutation, while all other vertices in $[n]\setminus S$ are not eligible for selection and the incoming edges from these vertices are always considered.
The vector $x\in [0,1]^S$ defines the permutation $\pi\in \Pi_S$: $i$ comes before $j$ if its associated value $x_i$ is smaller than $x_j$. 
Formally, for every $i,j\in S$ we have $i\in \pi_{<j}$ if and only if either $x_i<x_j$ or both $x_i=x_j$ and $i<j$ hold (we break ties in favor of vertices with smaller indices).
We denote the permutation $\pi\in \Pi_S$ constructed in this way from $x\in [0,1]^S$ by $\pi(x)$.

The permutation mechanism for a fixed set $S$ and vector $x\in [0,1]^S$ is formally described in \Cref{alg:permutation}; we refer to its output for a graph $G$, a set $S$, and a vector $x$ by $\algPerm(G,S,x)$. 
The uniform permutation mechanism, providing the best-possible guarantee among randomized $1$-selection mechanisms without prediction, corresponds to the mechanism that receives a graph $G$ and returns $\algPerm(G,[n],x)$, where $x_i\in [0,1]$ is taken uniformly at random for each $i\in [n]$.

Instead of the uniform permutation mechanism, we consider in the setting with predictions the $\rho$-permutation mechanism, given in \Cref{alg:permutation-rho}.
\begin{figure}[t]
	\begin{minipage}[t]{0.49\textwidth}
	\begin{algorithm}[H]
	\caption{\small Permutation mechanism $\algPerm(G,S,x)$\label{alg:permutation}}
    \small
		\begin{algorithmic}
		\Require graph $G=([n],E)$, set $S\subseteq [n]$, $x\in [0,1]^S$.
		\Ensure vertex $\iPerm \in [n]$.
		\State $\pi\gets \pi(x)\in \Pi_S$ 
		\State initialize $\iPerm \xleftarrow{} \pi_1$ and $d\xleftarrow{} \delta^-_{[n]\setminus S}(\pi_1)$\;
		\For{$r\in \{2,\ldots,|S|\}$}
			\State $i\xleftarrow{} \pi_r$\;
			\If{$\delta^-_{([n]\setminus S) \cup (\pi_{<i}\setminus \{\iPerm\})}(i)\geq d$}
				\State update $\iPerm \xleftarrow{} i$ and $d\xleftarrow{} \delta^-_{([n]\setminus S) \cup \pi_{<i}}(i)$
			\EndIf
		\EndFor
		\State {\bf return} $\iPerm$
		\end{algorithmic}
	\end{algorithm}
\end{minipage}
\hfill
\begin{minipage}[t]{0.49\textwidth}
\begin{algorithm}[H]
\small
	\caption{\small$\rho$-permutation mechanism $\algPerm^\rho(\pred,G)$}
	\label{alg:permutation-rho}
	\begin{algorithmic}
	\Require graph $G=([n],E)$, predicted vertex $\pred\in [n]$.
	\Ensure vertex $\iPerm \in [n]$.
	\State $x_{\pred}\gets \rho$\
	\State sample $x_i \!\in\! [0,1]$ uniformly at random $\;\forall i \!\in\! [n]\setminus \{\pred\}$
	\State {\bf return} $\algPerm(G,[n],x)$
	\end{algorithmic}
\end{algorithm}
\end{minipage}
\end{figure}
This mechanism receives a graph $G=([n],E)$ and a predicted vertex $\pred\in [n]$, and returns $\algPerm(G,[n],x)$, where now $\pred$ has an associated value $x_{\pred}=\rho$ and all values $x_i$ for $i\in [n]\setminus \{\pred\}$ are sampled uniformly at random.
The value $\rho$ then has the natural interpretation of a confidence parameter: Taking $\rho=1$ ensures seeing all incoming edges of the predicted vertex, while smaller values of $\rho$ increase the probability of seeing potential outgoing edges of $\pred$.
This mechanism attains any convex combination of $\alpha$-consistency and $\beta$-robustness between the points $(\alpha,\beta)\in \big\{(1,0),(\frac{1}{2},\frac{1}{2}\big)\big\}$.
In \Cref{sec:upper-bounds}, we will see that this trade-off is actually best-possible.

\begin{proposition}\label{prop:perm-rho}
    For any confidence parameter $\rho \in \bigl[\frac{1}{2},1\bigr]$ 
    the $\rho$-permutation mechanism is impartial, $\rho$-consistent and $(1-\rho)$-robust. 
\end{proposition}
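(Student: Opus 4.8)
The plan is to derive all three properties from deterministic features of the permutation mechanism $\algPerm(\cdot,[n],\cdot)$, together with two elementary computations of the probability that one vertex precedes another in the random permutation $\pi(x)$. For impartiality I would run the standard argument for the permutation mechanism: fix any $x\in[0,1]^n$, so that the permutation $\pi=\pi(x)$ is fixed, and modify the outgoing edges of some vertex $i$. Then (i) whether $i$ ever becomes the maintained candidate $\iPerm$ depends only on the indegrees of $i$ and of the vertices preceding it in $\pi$, none of which counts edges leaving $i$; and (ii) while $i$ is the candidate, the test performed when a later vertex $j=\pi_r$ is processed uses $\delta^-_{\pi_{<j}\setminus\{\iPerm\}}(j)$, which disregards a possible edge $(i,j)$, so whether $i$ survives that step is also independent of $i$'s outgoing edges. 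Hence the event $\{\algPerm(G,[n],x)=i\}$ is unaffected; since in the $\rho$-permutation mechanism the law of $x$, hence of $\pi(x)$, depends only on $\pred$ and not on $G$, averaging over $x$ gives impartiality for every vertex, $\pred$ included. I would either cite this from \citet{fischer2015optimal,bjelde2017impartial} or include the short argument above.

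The technical core is a deterministic lower bound: for every $G\in\GG_n$ and every $x\in[0,1]^n$, writing $\pi=\pi(x)$ and $w=\algPerm(G,[n],x)$,
\[
 \delta^-(w,G)\ \ge\ \delta^-_{\pi_{<u}}(u,G)\qquad\text{for all }u\in[n].
\]
To prove it I would follow the threshold variable $d$ in \Cref{alg:permutation} (here $[n]\setminus S=\emptyset$). First, $d$ never decreases: whenever a vertex $i$ replaces the candidate $c$, the update sets $d\gets\delta^-_{\pi_{<i}}(i)\ge\delta^-_{\pi_{<i}\setminus\{c\}}(i)\ge d_{\text{old}}$, the last step being exactly the inequality that triggered the replacement. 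Second, right after $u$ has been processed we have $d\ge\delta^-_{\pi_{<u}}(u)$: equality holds if $u$ became the candidate, and otherwise the failed test reads $\delta^-_{\pi_{<u}\setminus\{c\}}(u)<d$, so by integrality $d\ge\delta^-_{\pi_{<u}\setminus\{c\}}(u)+1\ge\delta^-_{\pi_{<u}}(u)$, since deleting the single vertex $c$ from the set kills at most one edge into $u$. Combining these, the terminal value of $d$ is at least $\delta^-_{\pi_{<u}}(u)$ for every $u$; and this terminal value equals $\delta^-_{\pi_{<w}}(w)$ (it is what was assigned when $w$ last became the candidate, or $0=\delta^-_{\emptyset}(w)$ if $w=\pi_1$), which is at most $\delta^-(w,G)$. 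This yields the displayed bound.

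Both remaining claims then fall out of how $\pi(x)$ is sampled in the $\rho$-permutation mechanism, where $x_{\pred}=\rho$ deterministically and $x_j$ is uniform on $[0,1]$ for $j\ne\pred$, independently. For any $u$ and any in-neighbor $j\in N^-(u,G)$ (so $j\ne u$), the probability that $j$ precedes $u$ in $\pi(x)$ equals $\rho$ when $u=\pred$, equals $1-\rho$ when $j=\pred\ne u$, and equals $\tfrac{1}{2}$ when neither vertex is $\pred$; in particular it is always at least $1-\rho$, since $\rho\ge\tfrac{1}{2}$. Taking the expectation of the deterministic bound at $u=\pred$, using linearity over $N^-(\pred,G)$ and $\delta^-(\pred,G)=\Delta(G)$ when the prediction is accurate, gives $\E[\delta^-(w,G)]\ge\rho\,\delta^-(\pred,G)=\rho\,\Delta(G)$, i.e.\ $\rho$-consistency. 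Taking it at $u=i^*$ for any $i^*$ with $\delta^-(i^*,G)=\Delta(G)$ gives, for \emph{every} prediction, $\E[\delta^-(w,G)]\ge(1-\rho)\,\delta^-(i^*,G)=(1-\rho)\,\Delta(G)$, i.e.\ $(1-\rho)$-robustness.

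I expect the only delicate point to be the deterministic lower bound, where one must read \Cref{alg:permutation} carefully enough to keep the comparison set $\pi_{<i}\setminus\{\iPerm\}$ separate from the assignment set $\pi_{<i}$, and then use integrality to absorb the ``$-1$'' created by excluding the current candidate. Once the fixed-permutation impartiality statement is in place, the impartiality of the mechanism is routine bookkeeping, and the consistency and robustness conclusions are one-line computations of expectations.
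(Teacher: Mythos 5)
Your proposal is correct and follows essentially the same route as the paper: both arguments rest on the deterministic fact that the permutation mechanism returns a vertex whose true indegree is at least the from-the-left indegree of every vertex (the paper's \Cref{lem:perm-max-indegree-left}, cited from \citet{bjelde2017impartial}), and then compute the three precedence probabilities $\rho$, $1-\rho$, and $\tfrac{1}{2}$ exactly as the paper does. The only difference is that you prove the key lemma from scratch via the monotonicity of the threshold $d$, which is correct and makes the argument self-contained where the paper simply cites prior work.
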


We need some notation.
For a fixed graph $G=([n],E)\in \GG_n$, set $S\subseteq [n]$, and vector $x\in [0,1]^S$, we let $\iPerm(G,S,x)$ denote the outcome of $\algPerm(G,S,x)$.
Whenever $x$ is fixed, we write $\pi$ for the induced permutation instead of $\pi(x)$.
As a key property for the analysis of the (uniform) permutation mechanism, \citet{bousquet2014near} showed that, for any fixed permutation, it selects a vertex with maximum indegree from the left.
\citet{bjelde2017impartial} extended this result to the case where we restrict to a set of vertices and consider all incoming edges from other vertices.
We phrase the latter result with our notation as the following lemma, which we apply in \Cref{app:prop:perm-rho} to prove \Cref{prop:perm-rho}.
\begin{lemma}[\citet{bjelde2017impartial}]\label{lem:perm-max-indegree-left}
    For every $G=([n],E)\in \GG_n$, $S\subseteq [n]$, and $x\in [0,1]^S$, it holds that $\iPerm(G,S,x)\in \arg\max\{\delta_{([n]\setminus S) \cup \pi_{<i}}(i,G): i\in [n]\}$.
\end{lemma}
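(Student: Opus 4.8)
The plan is to prove the stronger statement that the candidate maintained by the mechanism never falls behind any vertex already examined, and then to read off the lemma at termination. Throughout, fix $G$, $S$, and $x$; write $m=|S|$ and $v_r=\pi_r$ for the $r$-th vertex of the induced permutation $\pi=\pi(x)$; and for each $i\in S$ abbreviate the ``observed indegree'' appearing in the lemma by $d(i)=\delta^-_{([n]\setminus S)\cup\pi_{<i}}(i,G)$. Let $c_r\in S$ denote the value of $\iPerm$ after the $r$-th iteration of the loop in \Cref{alg:permutation}, so that $c_1=v_1$ and $c_m=\iPerm(G,S,x)$ is the output. First I would record two bookkeeping facts obtained by inspecting the algorithm: (i) the stored threshold $d$ equals $d(c_r)$ after iteration $r$, since whenever the candidate is updated to $v_r$ the algorithm sets $d\gets\delta^-_{([n]\setminus S)\cup\pi_{<v_r}}(v_r,G)=d(v_r)$, and otherwise both $c$ and $d$ are left unchanged; and (ii) the quantity tested in the if-condition, $a_r:=\delta^-_{([n]\setminus S)\cup(\pi_{<v_r}\setminus\{c_{r-1}\})}(v_r,G)$, differs from $d(v_r)$ only in whether the single possible edge from the current candidate is counted, so $d(v_r)=a_r+\mathbf{1}\{(c_{r-1},v_r)\in E\}$ and in particular $a_r\le d(v_r)\le a_r+1$.

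The heart of the argument is the loop invariant $I(r)$: after iteration $r$ one has $d(c_r)\ge d(v_s)$ for every $s\le r$. I would prove $I(r)$ by induction on $r$. The base case $I(1)$ is immediate, as $c_1=v_1$. For the inductive step I assume $I(r-1)$ and split on the outcome of the if-condition, which by fact (i) reads $a_r\ge d(c_{r-1})$. If the candidate is updated, then $c_r=v_r$ and $d(c_r)=d(v_r)$: the case $s=r$ is trivial, while for $s\le r-1$ the chain $d(v_r)\ge a_r\ge d(c_{r-1})\ge d(v_s)$ (using $a_r\le d(v_r)$, the update condition, and the inductive hypothesis) gives the claim. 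If the candidate is not updated, then $c_r=c_{r-1}$ and the cases $s\le r-1$ are exactly $I(r-1)$; the only remaining case is $s=r$.

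For that case the condition failed, $a_r<d(c_{r-1})$, and since both sides are integers this is the strict gap $a_r\le d(c_{r-1})-1$; combining with fact (ii) gives $d(v_r)\le a_r+1\le d(c_{r-1})=d(c_r)$, completing the induction. Evaluating $I(m)$ then yields $d(\iPerm(G,S,x))=d(c_m)\ge d(v_s)$ for all $s\in[m]$, i.e.\ $\iPerm(G,S,x)$ attains the maximum of $d(\cdot)$ over $S$; as the mechanism only ever selects vertices of $S$ and these are exactly the vertices for which $\pi_{<i}$ (hence the score in the $\arg\max$) is defined, this is the assertion of the lemma.

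I expect the only delicate point — and the reason the mechanism deliberately omits the candidate's own edge from the comparison $a_r$ — to be precisely the no-update case. Without this exclusion one could have $d(v_r)=a_r+1=d(c_{r-1})+1$, so the freshly examined vertex would genuinely outscore the retained candidate and the invariant would fail; it is exactly the exclusion-plus-integrality implication $a_r\le d(c_{r-1})-1\Rightarrow d(v_r)\le d(c_{r-1})$ that rescues it. Everything else is routine bookkeeping, so I would keep facts (i)--(ii) explicit and let the two-case induction carry the proof.
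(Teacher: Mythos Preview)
The paper does not give its own proof of this lemma: it is quoted verbatim as a result of \citet{bjelde2017impartial} (with the base case attributed to \citet{bousquet2014near}) and used as a black box throughout. Your argument is a correct, self-contained proof by loop invariant, and it is exactly the standard one; the only subtlety, which you identify, is the no-update case where integrality together with the deliberate exclusion of the current candidate's edge from the comparison is what keeps the invariant alive. Your closing remark about the quantifier $i\in[n]$ versus $i\in S$ is also apt: since $\pi_{<i}$ is only defined for $i\in S$, the $\arg\max$ is effectively over $S$, and your proof establishes precisely that.
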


It is worth noting that the consistency and robustness guarantees of \Cref{prop:perm-rho} are also achieved by a baseline mechanism that returns the predicted vertex with probability $\rho$ and runs the uniform permutation mechanism with probability $1-\rho$. However, the baseline mechanism fails a basic \emph{unanimity} notion introduced by \citet{holzman2013impartial}: If a vertex $v$ is such that all other vertices have a single outgoing edge to $v$, then $v$ should be selected. 
Whenever $v$ is not the predicted vertex, the baseline mechanism fails to select $v$ with constant probability, while the $\rho$-permutation mechanism returns $v$ as long as it is not first or second in the permutation, i.e., with probability $1-O\big(\frac{1}{n}\big)$.

\paragraph*{Plurality Voting.}

A usual restriction in voting is that each member nominates one other member, which in our graph representation implies having vertices with outdegree one.
This paradigm of \textit{plurality voting}, extensively considered in the impartial selection literature~\citep{holzman2013impartial,mackenzie2015symmetry,cembrano2023single}, has been shown to enable better approximation guarantees for randomized mechanisms.\footnote{The impossibility of providing a constant approximation of the maximum indegree with deterministic mechanisms remains true in this restricted setting~\citep{holzman2013impartial}.}
In particular, \citet{cembrano2023single} proved that the uniform permutation mechanism provides an improved approximation ratio of $\frac{2}{3}$ in this case.

In our setting, we show that the $\rho$-permutation mechanism with $\rho=1$, where the predicted vertex is deterministically placed at the end of the permutation and all other vertices are sorted uniformly at random, achieves $1$-consistency and $\frac{1}{2}$-robustness.
The following theorem provides a more fine-grained bound on the robustness of this mechanism as a function of the maximum indegree $\Delta$ of the input graph; the bound of $\frac{1}{2}$ follows by taking the worst case over $\Delta$. 
\begin{theorem}\label{thm:last-fixed-perm}
    The $1$-permutation mechanism is impartial, $1$-consistent, and $\beta(\Delta)$-robust on plurality graphs with maximum indegree $\Delta\geq 2$, where
    \[
        \beta(\Delta) = \begin{cases} \frac{3\Delta-2}{4\Delta} & \text{if } \Delta \text{ is even,}\\ \frac{3\Delta^2-2\Delta-1}{4\Delta^2} & \text{if } \Delta \text{ is odd.} \end{cases}
    \]
    Moreover, this function $\beta$ is increasing, implying that this mechanism is impartial, $1$-consistent, and $\frac{1}{2}$-robust on plurality graphs.
\end{theorem}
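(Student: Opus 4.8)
The plan is to dispatch the three assertions in turn. \emph{Impartiality} is immediate, since the $1$-permutation mechanism is the $\rho$-permutation mechanism of \Cref{alg:permutation-rho} with $\rho=1$, so the claim follows from \Cref{prop:perm-rho}. For \emph{$1$-consistency}, assume $\delta^-(\pred,G)=\Delta$. As $x_{\pred}=1$, the vertex $\pred$ is last in $\pi=\pi(x)$, hence $\pi_{<\pred}=[n]\setminus\{\pred\}$ and $\delta^-_{\pi_{<\pred}}(\pred,G)=\delta^-(\pred,G)=\Delta$ (there are no self-loops). Since $\delta^-_{\pi_{<i}}(i,G)\le\delta^-(i,G)\le\Delta$ for all $i$, the maximum in \Cref{lem:perm-max-indegree-left} with $S=[n]$ equals $\Delta$, so $\delta^-(\iPerm,G)\ge\delta^-_{\pi_{<\iPerm}}(\iPerm,G)=\Delta$, i.e.\ equality holds.

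The bulk of the work is the robustness bound. Fix a plurality graph $G$ with $\Delta(G)=\Delta\ge 2$ and a prediction $\pred$, and set $p:=\delta^-(\pred,G)$; if $p=\Delta$ we are in the consistent case, so assume $p\le\Delta-1$ and fix $v$ with $\delta^-(v,G)=\Delta$ (so $v\ne\pred$). Let $b:=\delta^-_{\pi_{<v}}(v,G)$ be the (random) number of in-neighbors of $v$ preceding $v$ in $\pi$. From \Cref{lem:perm-max-indegree-left} and $x_{\pred}=1$ we always have $\delta^-(\iPerm,G)\ge\max_i\delta^-_{\pi_{<i}}(i,G)\ge\max\{b,\,\delta^-_{\pi_{<\pred}}(\pred,G)\}=\max\{b,p\}$, which already yields a floor of $p$. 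The point of the proof is that this bound is wasteful when $b$ is large: once $v$ becomes the maintained candidate its recorded value is exactly $b$, and by \Cref{lem:perm-max-indegree-left} any vertex later replacing it must have indegree at least $b$; combining this with the fact that the update rule of \Cref{alg:permutation} discounts at most one incoming edge (the one from the current candidate), one concludes that whenever $b$ exceeds the indegree of every vertex other than $v$, the outcome $\iPerm$ is selected with its full indegree $\Delta$. The analysis splits according to whether $\pred\in N^-(v,G)$: as $\pred$ is placed last, in that case $b$ equals the number of in-neighbors of $v$ other than $\pred$ preceding $v$ and is therefore uniform on $\{0,\dots,\Delta-1\}$, while otherwise it is uniform on $\{0,\dots,\Delta\}$; in either case one must still track ties in the update rule of \Cref{alg:permutation}.

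I expect the genuine obstacle to be ruling out \emph{intermediate} outcomes in which a vertex of indegree strictly between $p$ and $\Delta$ takes over the candidacy from $v$ precisely when $b$ lies in that range. I would control this by showing that any such competing vertex of indegree $q$ simultaneously raises the guaranteed floor --- with constant probability it is itself observed with indegree close to $q$ --- so that the extremal configuration is the one with a single vertex $v$ of indegree $\Delta$, all other vertices of indegree at most $p$, and $\pred$ an in-neighbor of $v$. There $\delta^-(\iPerm,G)$ equals $p$ when $b\le p$ and equals $\Delta$ when $p<b\le\Delta-1$, so $\E[\delta^-(\iPerm,G)]\ge\tfrac{1}{\Delta}\big((p+1)p+(\Delta-1-p)\Delta\big)$ (and an analogous, larger expression with denominator $\Delta+1$ when $\pred\notin N^-(v,G)$). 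Minimizing the right-hand side over the integer $p\in\{0,\dots,\Delta-1\}$ --- the real optimizer is $p=\tfrac{\Delta-1}{2}$, which is an integer exactly for odd $\Delta$ and otherwise forces $p\in\{\tfrac{\Delta}{2}-1,\tfrac{\Delta}{2}\}$ --- yields precisely $\beta(\Delta)\,\Delta$, with the even/odd split of the statement.

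Finally, for the monotonicity claim I would rewrite $\beta(\Delta)=\tfrac{3}{4}-\tfrac{1}{2\Delta}$ for even $\Delta$ and $\beta(\Delta)=\tfrac{3}{4}-\tfrac{1}{2\Delta}-\tfrac{1}{4\Delta^2}$ for odd $\Delta$; each expression is increasing within its parity class, and a short computation gives $\beta(2k\pm1)\ge\beta(2k)$ for all $k\ge1$, so $\beta$ is increasing on $\{\Delta\in\N:\Delta\ge2\}$ and attains its minimum at $\Delta=2$, where $\beta(2)=\tfrac12$. Hence the mechanism is $\tfrac12$-robust on all plurality graphs.
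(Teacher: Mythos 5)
Your treatment of impartiality and $1$-consistency is correct and matches the paper, and your closed-form computation at the end (minimizing $\frac{1}{\Delta}\bigl((p+1)p+(\Delta-1-p)\Delta\bigr)$ over integer $p$, with the even/odd split at $p=\frac{\Delta-1}{2}$) does reproduce the stated $\beta(\Delta)$. The monotonicity argument via $\beta(\Delta)=\frac34-\frac{1}{2\Delta}$ (even) and $\frac34-\frac{1}{2\Delta}-\frac{1}{4\Delta^2}$ (odd) is also fine.

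The robustness argument, however, has a genuine gap exactly where you flag it yourself: the reduction to the ``extremal configuration'' (a single vertex $v$ of indegree $\Delta$, all others of indegree at most $p$, and $\pred\in N^-(v)$) is asserted, not proven. The sentence ``I would control this by showing that any such competing vertex of indegree $q$ simultaneously raises the guaranteed floor --- with constant probability it is itself observed with indegree close to $q$'' is not an argument: the difficulty is that the event ``$v$ has small observed indegree'' and the event ``some competitor has large observed indegree'' are \emph{correlated} through the random permutation, and without controlling the sign and magnitude of that correlation you cannot conclude that the intermediate outcomes average out to the extremal bound. This is precisely the technical heart of the paper's proof: it decomposes $\E[\delta^-(\iPerm)]$ over the disjoint events $A_r\cap\neg B_r$ and $A_s\cap B_r\cap\neg B_{r+1}$ (where $A_r$ fixes the observed indegree of the maximum-indegree vertex and $B_r$ records whether some other vertex reaches observed indegree $r$), and then needs \Cref{lem:correlation}, a negative-correlation inequality $\PP[B_r\mid A_s]\geq\PP[B_r\mid A_r]$ for $s<r$, proved by an explicit injection between sets of permutations. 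Nothing in your sketch substitutes for that lemma, and without it the cross terms $\PP[B_r\mid A_s]$ cannot be bounded in the direction you need. Two further points you would have to address even after fixing this: (i) when several vertices attain indegree $\Delta$, the paper deliberately chooses $i^*$ with $(\pred,i^*)\notin E$ (possible since $\pred$ has outdegree one), and in the complementary case $(\pred,i^*)\in E$ it uses that $i^*$ is then the \emph{unique} maximum-indegree vertex so that every competitor has observed indegree at most $\Delta-1$; your case split does not record this. (ii) Your claim that $v$ is selected with its full indegree ``whenever $b$ exceeds the indegree of every vertex other than $v$'' must be phrased in terms of a strict inequality on \emph{observed} indegrees because of the tie-breaking ($\geq$) in the update rule of \Cref{alg:permutation}; you acknowledge the tie issue but do not resolve it.
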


We prove this theorem in \Cref{app:thm:last-fixed-perm}, using a strengthened version of a lemma of \citet{cembrano2023single} establishing a negative correlation between the indegree from the left of the maximum-indegree vertex and that of all other vertices.
We show as \Cref{lem:correlation} that this result remains true for the non-uniform distribution over permutations induced by the vector $x$ defined in the $1$-permutation mechanism, and that it holds not only for the maximum-indegree vertex but for any fixed vertex.
The proof adapts that of \citet{cembrano2023single}, defining an injective function between sets of permutations to couple the probabilities that certain indegrees are observed in the permutation taken in the mechanism.
We then use the lemma to prove \Cref{thm:last-fixed-perm}. 
The most challenging case, which ultimately leads to a worse robustness guarantee than in the setting without predictions, is when the maximum-indegree vertex has an incoming edge from the predicted vertex, as this edge is never considered by the mechanism when observing the indegrees from the left.
However, since all outdegrees are $1$, we can still obtain a lower bound on the probability of selecting this maximum-indegree vertex or another vertex with high indegree.

We now state the implications of \Cref{thm:last-fixed-perm}, in terms of the trade-off between consistency and robustness we can achieve by combining the $1$-permutation mechanism with the uniform permutation mechanism. 
The proof of this result is deferred to \Cref{app:cor:plurality}. 
In \Cref{sec:upper-bounds}, we will see that this trade-off is not far from tight.
\begin{corollary}\label{cor:plurality}
    For every $\rho\in [0,1]$, there exists a randomized $1$-selection mechanism with predictions that is impartial, $\alpha$-consistent, and $\beta$-robust on plurality graphs with maximum indegree $\Delta$, where
    \begin{align*}
        \alpha(\Delta) & = \begin{cases} \frac{3\Delta+2}{4(\Delta+1)} + \frac{\Delta+2}{4(\Delta+1)}\rho & \text{if } \Delta \text{ is even,}\\
        \alpha(\Delta-1) & \text{if } \Delta \text{ is odd,}\end{cases}
        \qquad\quad \beta(\Delta) = \begin{cases} \frac{3\Delta+2}{4(\Delta+1)} - \frac{\Delta+2}{4\Delta(\Delta+1)}\rho & \text{if } \Delta \text{ is even,}\\
        \frac{3\Delta-1}{4\Delta} - \frac{\Delta+1}{4\Delta^2}\rho & \text{if } \Delta \text{ is odd.}\end{cases}
    \end{align*}
    In particular, for every $\rho\in [0,1]$, there exists a randomized $1$-selection mechanism with predictions that is impartial, \smash{$\big(\frac{2}{3}+\frac{1}{3}\rho\big)$}-consistent, and \smash{$\big(\frac{2}{3}-\frac{1}{6}\rho\big)$}-robust on plurality graphs.
\end{corollary}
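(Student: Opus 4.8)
The plan is to obtain \Cref{cor:plurality} as a lottery between the two mechanisms already analyzed: the $1$-permutation mechanism from \Cref{thm:last-fixed-perm}, which is $1$-consistent and $\beta_{\mathrm{perm}}(\Delta)$-robust, and the uniform permutation mechanism of \citet{cembrano2023single}, which on plurality graphs is $\frac{2}{3}$-optimal and hence both $\frac{2}{3}$-consistent and $\frac{2}{3}$-robust. Fix $\rho \in [0,1]$ and let the new mechanism run the $1$-permutation mechanism with probability $\rho$ and the uniform permutation mechanism with probability $1-\rho$. Impartiality is immediate since a convex combination of impartial mechanisms is impartial (the selection probability of each vertex is a convex combination of two quantities, each independent of that vertex's outgoing edges). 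The expected indegree of the selected vertex is the corresponding convex combination of the two individual guarantees, so the consistency is $\rho \cdot 1 + (1-\rho)\cdot \frac{2}{3} = \frac{2}{3} + \frac{1}{3}\rho$ and the robustness is at least $\rho\, \beta_{\mathrm{perm}}(\Delta) + (1-\rho)\cdot \frac{2}{3}$, where $\beta_{\mathrm{perm}}(\Delta)$ is the function from \Cref{thm:last-fixed-perm}.

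The remaining work is purely arithmetic bookkeeping: one has to rewrite $\rho\,\beta_{\mathrm{perm}}(\Delta) + (1-\rho)\frac{2}{3}$ in the two parity cases and check it matches the stated $\beta(\Delta)$, and likewise confirm that the stated $\alpha(\Delta)$ equals $\frac{2}{3}+\frac{1}{3}\rho$ after the odd/even reindexing (the reindexing $\alpha(\Delta-1)$ for odd $\Delta$ reflecting that $\beta_{\mathrm{perm}}$ is worse for even arguments, so for an odd-$\Delta$ graph the relevant consistency bound is governed by the even value $\Delta-1$; one has to be slightly careful here, since \Cref{thm:last-fixed-perm} gives consistency exactly $1$ on every plurality graph, but the stated $\alpha(\Delta)$ in the corollary is expressed so as to be jointly achievable with the $\beta(\Delta)$ curve). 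For the final ``in particular'' claim, I take the worst case over $\Delta \geq 2$: since $\beta_{\mathrm{perm}}(\Delta) \geq \frac{1}{2}$ with equality in the limit $\Delta \to \infty$ (indeed $\beta_{\mathrm{perm}}$ is increasing by \Cref{thm:last-fixed-perm}), the infimum of the robustness over all plurality graphs is $\rho \cdot \frac{1}{2} + (1-\rho)\cdot\frac{2}{3} = \frac{2}{3} - \frac{1}{6}\rho$, and the consistency $\frac{2}{3}+\frac{1}{3}\rho$ is already $\Delta$-independent.

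\textbf{Main obstacle.} There is no real obstacle: the only thing to watch is the parity case analysis and the reindexing convention for odd $\Delta$, so that the pair $(\alpha(\Delta), \beta(\Delta))$ is stated consistently and the worst-case-over-$\Delta$ step correctly identifies $\Delta \to \infty$ (even $\Delta$) as the bottleneck for robustness. One should also confirm that taking the worst case separately in $\alpha$ and in $\beta$ is legitimate here, which it is because $\alpha$ does not depend on $\Delta$ once we use the clean $1$-consistency of \Cref{thm:last-fixed-perm}; thus the claimed $\big(\tfrac{2}{3}+\tfrac{1}{3}\rho\big)$-consistent, $\big(\tfrac{2}{3}-\tfrac{1}{6}\rho\big)$-robust guarantee holds uniformly over all plurality graphs.
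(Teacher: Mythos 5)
Your overall plan --- a lottery that runs the $1$-permutation mechanism with probability $\rho$ and the uniform permutation mechanism with probability $1-\rho$ --- is exactly the paper's construction, and it does yield the ``in particular'' claim. But there is a genuine gap for the fine-grained, $\Delta$-dependent part of the statement. You use only the worst-case $\frac{2}{3}$ guarantee for the uniform permutation mechanism, which caps the consistency you can prove at $\rho+\frac{2}{3}(1-\rho)=\frac{2}{3}+\frac{1}{3}\rho$ for every $\Delta$. The corollary, however, asserts $\alpha(\Delta)=\frac{3\Delta+2}{4(\Delta+1)}+\frac{\Delta+2}{4(\Delta+1)}\rho$ for even $\Delta$, which strictly exceeds $\frac{2}{3}+\frac{1}{3}\rho$ as soon as $\Delta\geq 4$ (e.g.\ at $\Delta=4$, $\rho=0$ it is $\frac{7}{10}>\frac{2}{3}$). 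Your parenthetical hope that ``the stated $\alpha(\Delta)$ equals $\frac{2}{3}+\frac{1}{3}\rho$ after the odd/even reindexing'' is false. To get the stated $\alpha(\Delta)$ and $\beta(\Delta)$ you must invoke the $\Delta$-dependent robustness of the uniform permutation mechanism on plurality graphs due to \citet{cembrano2023single}, namely $\beta_2(\Delta)=\frac{3\Delta+2}{4(\Delta+1)}$ for even $\Delta$ and $\frac{3\Delta-1}{4\Delta}$ for odd $\Delta$; then $\alpha(\Delta)=\rho+\beta_2(\Delta)(1-\rho)$ and $\beta(\Delta)=\rho\,\beta_1(\Delta)+(1-\rho)\beta_2(\Delta)$ reproduce the corollary exactly (including the identity $\rho+\beta_2(\Delta)(1-\rho)=\alpha(\Delta-1)$ for odd $\Delta$, which is where the reindexing actually comes from).

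A second, smaller error: you locate the robustness bottleneck at $\Delta\to\infty$, claiming $\beta_{\mathrm{perm}}(\Delta)\geq\frac{1}{2}$ ``with equality in the limit.'' This contradicts the monotonicity you yourself cite: $\beta_{\mathrm{perm}}$ is increasing with $\beta_{\mathrm{perm}}(2)=\frac{1}{2}$ exactly and limit $\frac{3}{4}$, so the worst case over plurality graphs is attained at $\Delta=2$. The same is true of $\beta_2$, and this is how the paper extracts the ``in particular'' numbers: both $\alpha(\Delta)$ and $\beta(\Delta)$ are non-decreasing convex combinations of non-decreasing functions, so evaluating at $\Delta=2$ gives the uniform guarantees $\frac{2}{3}+\frac{1}{3}\rho$ and $\frac{2}{3}-\frac{1}{6}\rho$. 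Your final numerical answer happens to be correct, but the justification for why it is the infimum is not.
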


\section{Selecting Two Vertices}\label{sec:2-selection}

In this brief section, we state our results for the selection of two vertices.
In terms of mechanisms without predictions, the best-known deterministic and randomized impartial mechanisms achieve \smash{$\frac{1}{2}$}- and \smash{$\frac{2}{3}$}-optimality, respectively. While the bound for deterministic mechanisms is best-possible, only an upper bound of \smash{$\frac{3}{4}$} is known for randomized mechanisms~\citep{bjelde2017impartial}.
For compactness, throughout this section we denote the predicted set by \smash{$\hat{S} = \{\pred_1,\pred_2\}$}.

The deterministic mechanism achieving \smash{$\frac{1}{2}$}-optimality is based on the permutation mechanism.
It  runs, for an arbitrarily fixed permutation $\pi$, the permutation mechanism for both $\pi$ and its reverse $\bar{\pi}$, and returns the selected vertices for each direction (potentially the same vertex). 
A natural approach to incorporate the prediction is to run this mechanism with the predicted vertices at both extremes of the fixed permutation. 
The resulting mechanism, which we call \textit{fixed bidirectional permutation}, maintains the best-possible robustness of \smash{$\frac{1}{2}$} while achieving $1$-consistency.
The formal description of the mechanism as \Cref{alg:permutation-bi} and the proof of this result are deferred to \Cref{app:thm:permutation-bi}.

\begin{theorem}\label{thm:permutation-bi}
The fixed bidirectional permutation mechanism is impartial, $1$-consistent, and $\frac{1}{2}$-robust.
\end{theorem}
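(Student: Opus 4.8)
The plan is to establish the three properties in turn, dispatching impartiality and robustness quickly and reserving the real work for consistency. The mechanism fixes a permutation $\pi$ that seats the two predicted vertices at its two extremes, say $\pred_1$ first and $\pred_2$ last, and runs $\algPerm(G,[n],\cdot)$ once for $\pi$ and once for its reverse $\bar\pi$, returning the (at most two) selected vertices $o_1:=\iPerm(G,[n],\pi)$ and $o_2:=\iPerm(G,[n],\bar\pi)$. For impartiality I would use that $G\mapsto\iPerm(G,[n],x)$ is impartial for each fixed $x$ — in $\algPerm$ the observed indegree of any vertex always excludes the current candidate, which makes whether a given vertex ends up being selected independent of that vertex's outgoing edges — and then note that the probability a vertex lies in $\{o_1,o_2\}$ is a fixed Boolean combination of its two single-run selection indicators, hence still independent of its outgoing edges. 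For $\frac{1}{2}$-robustness I would observe that our mechanism is an instance of the bidirectional permutation mechanism of \citet{bjelde2017impartial} — which is $\frac{1}{2}$-optimal for any fixed permutation, in particular for the one used here — so that $\delta^-(\{o_1,o_2\},G)\ge\frac{1}{2}\Delta_2(G)$ holds for every $G$ irrespective of the prediction; as the mechanism is deterministic, this is exactly $\frac{1}{2}$-robustness.

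The heart of the argument is $1$-consistency, so assume the prediction is accurate, $\delta^-(\pred_1,G)+\delta^-(\pred_2,G)=\Delta_2(G)$, and assume $\Delta_2(G)>0$ (otherwise there is nothing to prove). Since $\pred_2$ is the last vertex of $\pi$ we have $\pi_{<\pred_2}=[n]\setminus\{\pred_2\}$, so $\delta^-_{\pi_{<\pred_2}}(\pred_2,G)=\delta^-(\pred_2,G)$; by \Cref{lem:perm-max-indegree-left}, $o_1$ maximizes $i\mapsto\delta^-_{\pi_{<i}}(i,G)$ over $[n]$, hence $\delta^-(o_1,G)\ge\delta^-_{\pi_{<o_1}}(o_1,G)\ge\delta^-(\pred_2,G)$. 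Symmetrically, since $\pred_1$ is the last vertex of $\bar\pi$, $\delta^-(o_2,G)\ge\delta^-(\pred_1,G)$. If $o_1\ne o_2$, the returned set $\{o_1,o_2\}$ has indegree at least $\delta^-(\pred_1,G)+\delta^-(\pred_2,G)=\Delta_2(G)$, and $1$-consistency follows.

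It thus remains to rule out the degenerate case $o_1=o_2=:v$. First, $v\notin\{\pred_1,\pred_2\}$: on the full vertex set $\algPerm$ starts with threshold $0$ and candidate $\pi_1$, and its second vertex always passes the takeover test (its observed indegree is taken over the empty set, so $0\ge 0$), so $o_1\ne\pi_1=\pred_1$ and, by the mirror argument, $o_2\ne\bar\pi_1=\pred_2$; thus $v$ lies strictly inside both $\pi$ and $\bar\pi$. Second, since $v$ is the final candidate of the first run and is processed before $\pred_2$, the takeover test of $\pred_2$ against $v$ fails; writing this test out and combining it with the bound $\delta^-(v,G)\le\delta^-(\pred_2,G)$ (valid because the pair $\{v,\pred_1\}$ cannot beat the optimal pair $\hat S$) and the integrality of indegrees forces $\delta^-_{\pi_{<v}}(v,G)=\delta^-(v,G)$ — i.e.\ every in-neighbour of $v$ precedes $v$ in $\pi$. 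The symmetric argument for the second run shows every in-neighbour of $v$ precedes $v$ in $\bar\pi$. But $\bar\pi$ is the reverse of $\pi$, so no vertex can precede $v$ in both, forcing $v$ to have no in-neighbour at all; this contradicts $\delta^-(v,G)\ge\max\{\delta^-(\pred_1,G),\delta^-(\pred_2,G)\}\ge\frac{1}{2}\Delta_2(G)>0$. Hence $o_1\ne o_2$, which completes the proof of $1$-consistency.

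I expect this last step — ruling out that the two runs collapse onto a single vertex — to be the main obstacle, since it is the only place where the exact update rule of $\algPerm$ (in particular excluding the current candidate from observed indegrees), the choice to seat the predicted vertices at both extremes, the reversal symmetry of the two runs, and the integrality of indegrees must all be combined; the rest of the argument is essentially bookkeeping with \Cref{lem:perm-max-indegree-left} and the cited optimality of the bidirectional permutation mechanism.
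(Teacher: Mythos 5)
Your proof is correct, and its skeleton --- impartiality from the fixed permutation, robustness via the \smash{$\frac{1}{2}$}-optimality of the bidirectional permutation mechanism for any fixed permutation, and consistency from applying \Cref{lem:perm-max-indegree-left} to the two runs with $\pred_1$ and $\pred_2$ seated at the extremes --- matches the paper's. Where you diverge is in the collision case $o_1=o_2=v$ of the consistency argument: you rule it out entirely by unwinding the takeover test of $\pred_2$ against the final candidate $v$, combining it with $\delta^-(v)\leq\delta^-(\pred_2)$ and integrality to force $N^-(v)\subseteq\pi_{<v}\cap\bar{\pi}_{<v}=\emptyset$, a contradiction with $\delta^-(v)\geq\frac{1}{2}\Delta_2>0$. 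This is valid (including the preliminary step that $o_1\neq\pi_1$ because the second vertex always passes the takeover test), but it is a detour: the paper handles both cases uniformly with the identity $\delta^-(v)=\delta^-_{\pi_{<v}}(v)+\delta^-_{\bar{\pi}_{<v}}(v)$, which holds because $\pi_{<v}$ and $\bar{\pi}_{<v}$ partition $[n]\setminus\{v\}$, so that even when the two runs coincide, two applications of \Cref{lem:perm-max-indegree-left} give $\delta^-(\{v\})\geq\delta^-_{\pi_{<\pred_2}}(\pred_2)+\delta^-_{\bar{\pi}_{<\pred_1}}(\pred_1)=\Delta_2$ --- exactly the same partition trick that underlies the robustness bound you cite. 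Your route buys a slightly stronger structural fact (under an accurate prediction with $\Delta_2>0$ the two runs never collapse onto a single vertex) at the cost of relying on low-level details of $\algPerm$ (the exclusion of the current candidate from observed indegrees, integrality of degrees); the paper's route is shorter and needs only the lemma plus the partition identity.
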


In terms of randomized mechanisms, convex combinations of the best-known mechanism without prediction, achieving \smash{$\frac{2}{3}$}-robustness~\citep{bjelde2017impartial}, and the fixed bidirectional permutation mechanism, achieving $1$-consistency and $\frac{1}{2}$-robustness, allows us to attain combinations of $\alpha$-consistency and $\beta$-robustness between $(\alpha,\beta)=\big(\frac{2}{3},\frac{2}{3}\big)$ and \smash{$(\alpha,\beta)=\big(1,\frac{1}{2}\big)$}.
We state this simple fact in the following proposition; we will see in \Cref{sec:upper-bounds} that this combination of consistency and robustness is not far from tight.
\begin{proposition}\label{thm:randomized-k2}
    For every $\rho\in [0,1]$, there exists a randomized $2$-selection mechanism with predictions that is impartial, \smash{$\big(\frac{2}{3}+\frac{1}{3}\rho\big)$}-consistent, and \smash{$\big(\frac{2}{3}-\frac{1}{6}\rho\big)$}-robust.
\end{proposition}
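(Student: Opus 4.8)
The plan is to realize the desired mechanism as an explicit lottery between two mechanisms whose guarantees are already in hand, and then read off the trade-off by linearity of expectation. Concretely, fix $\rho\in[0,1]$ and let the mechanism, on input $(\hat{S},G)$ with $\hat{S}=\{\pred_1,\pred_2\}$, run the fixed bidirectional permutation mechanism of \Cref{thm:permutation-bi} with probability $\rho$, and the randomized permutation mechanism of \citet{bjelde2017impartial} (which simply ignores $\hat{S}$) with probability $1-\rho$. Each of these is a lottery over deterministic impartial $2$-selection mechanisms, hence so is their convex combination; in particular the combined selection probabilities still sum to at most $2$, so this is a well-defined $2$-selection mechanism with predictions. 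It is impartial because the defining condition $f_i(\hat{S},G)=f_i(\hat{S},G')$ is linear in $f$ and therefore preserved under convex combinations.

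For consistency, assume $\delta^-(\hat{S},G)=\Delta_2(G)$. By \Cref{thm:permutation-bi} the first component contributes expected total indegree at least $\Delta_2(G)$, and since the randomized permutation mechanism is $\frac{2}{3}$-optimal in the classical (prediction-free) sense, it contributes at least $\frac{2}{3}\Delta_2(G)$ on every graph, in particular on inputs where the prediction is accurate. By linearity of expectation the combined mechanism attains expected total indegree at least $\rho\,\Delta_2(G)+(1-\rho)\tfrac{2}{3}\Delta_2(G)=\bigl(\tfrac{2}{3}+\tfrac{1}{3}\rho\bigr)\Delta_2(G)$, which is the claimed consistency bound.

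For robustness no assumption on $\hat{S}$ is made. The fixed bidirectional permutation mechanism is $\frac{1}{2}$-robust by \Cref{thm:permutation-bi}, and the randomized permutation mechanism is $\frac{2}{3}$-optimal and hence unconditionally $\frac{2}{3}$-robust (it discards the prediction). Again by linearity of expectation, the combined mechanism attains expected total indegree at least $\rho\,\tfrac{1}{2}\Delta_2(G)+(1-\rho)\tfrac{2}{3}\Delta_2(G)=\bigl(\tfrac{2}{3}-\tfrac{1}{6}\rho\bigr)\Delta_2(G)$, giving the claimed robustness bound.

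There is essentially no obstacle here beyond bookkeeping: the only points that need a line of justification are that impartiality and the ``probabilities sum to at most $k$'' property are closed under convex combinations (both immediate from linearity), and that a classical $\frac{2}{3}$-optimal impartial mechanism, viewed as a mechanism with predictions that ignores $\hat{S}$, is simultaneously $\frac{2}{3}$-consistent and $\frac{2}{3}$-robust. Everything else is the arithmetic of the two convex combinations displayed above.
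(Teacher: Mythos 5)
Your proof is correct and takes exactly the same route as the paper, which states this proposition as a "simple fact" obtained by a convex combination of the fixed bidirectional permutation mechanism (run with probability $\rho$) and the $\frac{2}{3}$-optimal randomized mechanism of \citet{bjelde2017impartial} (run with probability $1-\rho$). Your bookkeeping of impartiality, the probability-sum condition, and the two linear trade-offs matches the intended argument.
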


\section{Selecting $k\geq 3$ Vertices}\label{sec:k-selection}

In this section, we study the impartial selection of $k\geq 3$ vertices when the mechanism is equipped with a prediction on the optimal set.

In terms of deterministic mechanisms, the setting without predictions is far from well understood. Indeed, a large gap remains between the best-known lower and upper bounds of \smash{$\frac{1}{k}$} and \smash{$\frac{k-1}{k}$} on the approximation guarantee that impartial mechanisms can achieve~\citep{bjelde2017impartial}.
Recently, \citet{cembrano2024deterministic} improved the lower bound for cases where $k$ is larger than (approximately) $2\sqrt{n}$, but the lower bound of \smash{$\frac{1}{k}$} remains the best-known bound for an arbitrary number of agents $n$.
This guarantee comes from the bidirectional permutation mechanism explained in the previous section, whose \smash{$\frac{1}{2}$}-approximation of the optimal set of two agents translates into a \smash{$\frac{1}{k}$}-approximation of the optimal committee of $k$ agents. 
Similarly to the previous section, we can modify this mechanism to maintain its robustness guarantee and achieve $1$-consistency.
Specifically, we select $k-2$ vertices from the predicted set and one or two more vertices through our fixed bidirectional permutation mechanism, with the remaining two predicted vertices at the extremes of the permutation. 
We state the properties of this simple mechanism in the following proposition, proven in \Cref{app:prop:det-klarge}.
\begin{proposition}\label{prop:det-klarge}
    There exists a deterministic $k$-selection mechanism with predictions that is impartial, $1$-consistent, and \smash{$\frac{1}{k}$}-robust.
\end{proposition}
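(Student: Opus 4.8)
The plan is to build the mechanism directly from the fixed bidirectional permutation mechanism of \Cref{thm:permutation-bi}, treating the bulk of the predicted set as a free guarantee and delegating only the delicate part to the impartial sub-mechanism. Given a prediction \smash{$\hat{S} = \{\pred_1,\dots,\pred_k\}$} with \smash{$|\hat{S}| = k$}, partition it (arbitrarily but in a fixed way) into a "trusted core" \smash{$T = \{\pred_1,\dots,\pred_{k-2}\}$} of size $k-2$ and a "permutation pair" \smash{$P = \{\pred_{k-1},\pred_k\}$}. The mechanism outputs $T$ outright, and additionally runs the fixed bidirectional permutation mechanism on the graph to obtain one or two further vertices; since that mechanism may output a single vertex or two, and since those vertices might coincide with elements of $T$, the final output is the union, which has size at most $k$. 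This is a legal $k$-selection mechanism with predictions.

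\emph{Impartiality.} Each vertex $i$ contributes to the output in two ways: by membership in $T$, which depends only on \smash{$\hat{S}$} and not on $i$'s outgoing edges; and by being selected by the fixed bidirectional permutation mechanism, which is impartial by \Cref{thm:permutation-bi}. The probability (here, indicator) that $i$ is in the final output is therefore unchanged if $i$ alters only its own out-edges, so the combined mechanism is impartial. (One subtlety: because we take a union, the event "$i$ selected" is a disjunction of the two events, but each disjunct is individually insensitive to $i$'s out-edges, and a disjunction of such events is too; this is immediate for deterministic mechanisms.)

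\emph{Consistency.} Suppose the prediction is accurate, i.e.\ \smash{$\delta^-(\hat{S},G) = \Delta_k(G)$}. The core $T$ is selected for sure, contributing \smash{$\delta^-(T,G)$}. The remaining two predicted vertices $P$ contribute \smash{$\delta^-(P,G)$} to $\Delta_k(G)$, and by \Cref{thm:permutation-bi} applied with $P$ as the predicted pair, the fixed bidirectional permutation mechanism is $1$-consistent, so it recovers an outcome whose total indegree is at least \smash{$\delta^-(P, G)$} \emph{when $P$ is optimal among pairs}. The point to check is that accuracy of the size-$k$ prediction indeed forces $P$ to be a maximum-indegree pair \emph{within the induced instance the sub-mechanism faces}; the clean way to argue this is that \smash{$\delta^-(T,G) + \delta^-(P,G) = \Delta_k(G)$} together with optimality of \smash{$\hat{S}$} over all $k$-sets forces \smash{$\delta^-(P,G)$} to equal the maximum indegree of any pair disjoint from... — and here one should be slightly careful, since the edges counted need to be summed without double-counting overlaps between the sub-mechanism's output and $T$. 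The cleanest fix, which I would adopt, is to have the sub-mechanism count only in-edges and to bound the total indegree of the \emph{union} from below by \smash{$\delta^-(T,G)$} plus the indegree contributed by the genuinely new vertices; in the worst case the new vertices coincide with $T$ and contribute nothing extra, but then $T$ alone already had indegree $\geq \delta^-(P,G)$ by optimality of \smash{$\hat S$}, so the union still attains $\Delta_k(G)$. Spelling out this small case analysis gives $1$-consistency.

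\emph{Robustness.} Here the prediction may be worthless, so $T$ gives nothing we can rely on, but the fixed bidirectional permutation mechanism is $\frac12$-robust \emph{as a $2$-selection mechanism}, i.e.\ its output has total indegree at least \smash{$\frac12\Delta_2(G)$}. The final step is the standard reduction, identical to the one used for the classical bidirectional permutation bound: a set of $k$ vertices of total indegree $\Delta_k(G)$ contains a pair of total indegree at least \smash{$\frac{2}{k}\Delta_k(G)$} (average over the \smash{$\binom{k}{2}$} pairs, or just take the two highest-indegree vertices), hence \smash{$\Delta_2(G) \geq \frac{2}{k}\Delta_k(G)$}, and therefore the union output has indegree at least \smash{$\frac12 \cdot \frac{2}{k}\Delta_k(G) = \frac{1}{k}\Delta_k(G)$}. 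Since the core $T \subseteq$ output only helps, $\frac1k$-robustness follows.

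\emph{Main obstacle.} None of the steps is deep; the one place that needs genuine care rather than routine checking is the consistency argument, specifically the bookkeeping when the fixed bidirectional permutation mechanism's output overlaps the trusted core $T$ or outputs only one vertex — one must make sure the indegree of the \emph{union} (not the multiset sum) still reaches $\Delta_k(G)$, which is why I would phrase the sub-mechanism's guarantee as a lower bound on the indegree of its output set and then do the small case split described above. Everything else is a direct invocation of \Cref{thm:permutation-bi} plus the elementary averaging inequality \smash{$\Delta_2(G)\geq \frac{2}{k}\Delta_k(G)$}.
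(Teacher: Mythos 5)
Your mechanism is exactly the paper's: output $\{\pred_1,\dots,\pred_{k-2}\}\cup\algPermBi(\{\pred_{k-1},\pred_k\},G)$, with impartiality inherited from \Cref{thm:permutation-bi} and robustness obtained from the $\frac{1}{2}$-robustness of the bidirectional sub-mechanism together with $\Delta_2\geq\frac{2}{k}\Delta_k$; those two parts match the paper and are fine. Your side worry about whether $P=\{\pred_{k-1},\pred_k\}$ must be an optimal \emph{pair} is also resolved the way you suggest: the proof of \Cref{thm:permutation-bi} in fact establishes the unconditional bound $\delta^-(\algPermBi(P,G))\geq\delta^-(\pred_{k-1})+\delta^-(\pred_k)$, which is what is needed here (and what the paper implicitly invokes).

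The genuine problem is in the consistency step, precisely at the overlap issue you flagged — and your patch does not close it. If the sub-mechanism's output $W$ is contained in $T=\{\pred_1,\dots,\pred_{k-2}\}$, the union has indegree $\delta^-(T)=\Delta_k-\delta^-(P)$; your claim that ``$T$ alone already had indegree $\geq\delta^-(P,G)$, so the union still attains $\Delta_k(G)$'' is a non sequitur, since $\delta^-(T)\geq\delta^-(P)$ does not give $\delta^-(T)\geq\delta^-(T)+\delta^-(P)$ unless $\delta^-(P)=0$. The overlap case is not vacuous: take $k=3$, $\hat{S}=\{1,2,3\}$ optimal with $\delta^-(1)=12$ (six in-neighbours on each side of its position in the fixed permutation) and $\delta^-(2)=\delta^-(3)=5$; both directions of $\algPermBi(\{2,3\},G)$ then return vertex $1\in T$, and the union $\{1\}$ has indegree $12<22=\Delta_3$. (The paper's own write-up silently sidesteps this by adding $\delta^-(T)$ and $\delta^-(W)$ as if the sets were disjoint, so you have in fact put your finger on a real gap rather than introduced one.) The clean repair is to make disjointness hold by construction: run the permutation sub-mechanism with eligible set $[n]\setminus T$, i.e., call $\algPerm(G,[n]\setminus\{\pred_1,\dots,\pred_{k-2}\},x)$ in both directions, so that vertices of $T$ cannot be selected again but their outgoing edges are still counted (this is exactly what the general interface $\algPerm(G,S,x)$ is for). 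Then $W\cap T=\emptyset$, consistency follows from $\delta^-(T\cup W)=\delta^-(T)+\delta^-(W)\geq\delta^-(T)+\delta^-(P)=\Delta_k$, impartiality is unaffected, and your robustness argument goes through after one extra case: if the maximum-indegree vertex lies in $T$ it is selected outright, and otherwise the restricted bidirectional permutation still observes all of its incoming edges, so in either case the output has indegree at least $\frac{1}{2}\Delta_2\geq\frac{1}{k}\Delta_k$.
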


Regarding randomized mechanisms, the best-known mechanism for $k$-selection was developed by \citet{bjelde2017impartial} and provides an approximation guarantee of \smash{$\frac{k}{k+1}\big(1-\big(\frac{k-1}{k}\big)^{k+1}\big)$}, which starts at \smash{$\frac{7}{12}\approx 0.5833$} for $k=2$, \smash{$\frac{65}{108}\approx0.6019$} for $k=3$, and approaches \smash{$1-\frac{1}{e} \approx 0.6321$} as $k$ grows.
The mechanism assigns each vertex to one out of $k$ sets uniformly at random.
It then selects one vertex from each set via the permutation mechanism restricted to that set with an internal permutation taken uniformly at random.
While its impartiality is easy to see, the approximation guarantee requires a careful analysis of the expected observed indegree of optimal vertices in each set.
In the following, we develop a randomized mechanism with predictions inspired by this mechanism that achieves almost optimal robustness while losing very little in terms of consistency, especially as $k$ grows.

As in the mechanism by \citeauthor{bjelde2017impartial}, vertices are assigned to one of $k$ sets, and one vertex is selected from each set by running the permutation mechanism restricted to the set.
However, both the assignment to sets and the permutation are not taken independently and uniformly for each vertex anymore. 
Instead, we assign one predicted vertex to each set; all other vertices are still assigned to a set chosen independently and uniformly at random. 
Within each set, the permutation is sampled as in the $\rho$-permutation mechanism from \Cref{sec:1-selection}: For each set $A_j$ with a predicted vertex $\pred_j$, we take a vector \smash{$x\in [0,1]^{A_j}$} such that \smash{$x_{\pred_j}=\rho$} and $x_i\in [0,1]$ is taken uniformly at random for each \smash{$i\in A_j\setminus \{\pred_j\}$}.
Intuitively, these changes allow the mechanism to see most incoming edges of the predicted vertices while only mildly affecting the distributions to keep a strong robustness guarantee.

The mechanism, which we refer to as the $\rho$-partition mechanism, is formally presented in \Cref{alg:partition-rho}. 
For $\rho\in [0,1]$, we denote its output by \smash{$\algPart^\rho(\hat{S},G)$} for each graph $G=([n],E)$ and predicted set~\smash{$\hat{S}$}.
\begin{figure}[t]
\begin{minipage}[t]{0.47\textwidth}
	\begin{algorithm}[H]
    \small
    \caption{\small $\rho$-partition mechanism $\algPart^\rho(\hat{S},G)$}
    \label{alg:partition-rho}
    \begin{algorithmic}
    \Require graph $G=([n],E)$, \newline predicted set $\hat{S}=\{\pred_1,\ldots,\pred_k\}\in {[n]\choose k}$.
    \Ensure set $\SPart \in {[n]\choose k}$.
    \State sample $j_i \in [k]$ uniformly $\;\forall i\in [n]\setminus \hat{S}$
    \State assign $i$ to $A_{j_i} \;\forall i\in [n]\setminus \hat{S}$
    \For{$j\in [k]$}
        \State $A_j\gets A_j\cup \{\pred_j\}$,
        $x_{\pred_j}\gets \rho$
        \State sample $x_i\in [0,1]$ uniformly $\;\forall i\in A_j\setminus \{\pred_j\}$
        \State $\iPart_j \gets \algPerm(G,A_j,x)$
        \State $\SPart\gets \SPart\cup \{\iPart_j\}$
    \EndFor
    \State {\bf return} $\SPart$
\end{algorithmic}
\end{algorithm}
\end{minipage}
\hfill
\begin{minipage}[t]{0.45\textwidth}
\centering
		\vspace*{2ex}
        \begin{tikzpicture}[xscale=0.55,yscale=4.2]
			\draw[very thick] (0,1.1) node[above]{$\alpha,\beta$} -- (0,0) -- (11.1,0) node[right]{$k$};
			\draw (2,0.05) -- (2,-0.05) node[below] {$2$};
			\draw (3,0.05) -- (3,-0.05) node[below] {$3$};
			\draw (4,0.05) -- (4,-0.05) node[below] {$4$};
			\draw (5,0.05) -- (5,-0.05) node[below] {$5$};
			\draw (6,0.05) -- (6,-0.05) node[below] {$6$};
			\draw (7,0.05) -- (7,-0.05) node[below] {$7$};
			\draw (8,0.05) -- (8,-0.05) node[below] {$8$};
			\draw (9,0.05) -- (9,-0.05) node[below] {$9$};
			\draw (10,0.05) -- (10,-0.05) node[below] {$10$};
			
			\draw (0.2,0.5) -- (-.2,0.5) node[left] {$\frac{1}{2}$};
			\draw (0.2,1) -- (-.2,1) node[left] {$1$};

			\draw[very thick, dotted,smooth,domain=2:10,variable=\x]
			plot({\x},{\x/(\x+1) * (1 - ((\x-1)/\x)^(\x+1) });
			\draw[very thick, SkyBlue,smooth,domain=2:10,variable=\x]
			plot({\x},{1 - 0.5/\x}) node[right,yshift=-7pt] {\textcolor{SkyBlue}{$\rho=\nicefrac{1}{2}$}};
			\draw[very thick, BluishGreen,smooth,domain=2:10,variable=\x]
			plot({\x},{1 - 0.25/\x}) node[right] {\textcolor{BluishGreen}{$\rho=\nicefrac{3}{4}$}};
			\draw[very thick, Orange,smooth,domain=2:10,variable=\x]
			plot({\x},{1}) node[right,yshift=7pt] {\textcolor{Orange}{$\rho=1$}};
			\draw[very thick,dashed, Orange,smooth,domain=2:10,variable=\x]
			plot({\x},{(1 - 2/(\x+1))*(1 - ((\x-1)/\x)^(\x)}) node[right,yshift=-7pt] {\textcolor{Orange}{$\rho=1$}};
			\draw[very thick,dashed, BluishGreen,smooth,domain=2:10,variable=\x]
			plot({\x},{(1 - 1.5/(\x+1))*(1 - ((\x-1)/\x)^(\x)}) node[right] {\textcolor{BluishGreen}{$\rho=\nicefrac{3}{4}$}};
			\draw[very thick,dashed, SkyBlue,smooth,domain=2:10,variable=\x]
			plot({\x},{(1 - 1/(\x+1))*(1 - ((\x-1)/\x)^(\x)}) node[right,yshift=7pt] {\textcolor{SkyBlue}{$\rho=\nicefrac{1}{2}$}};
		\end{tikzpicture}
\end{minipage}
\caption{
		\label{fig:rho-partition}
		The $\rho$-partition mechanism (left) and a plot of its $\alpha$-consistency (solid) and $\beta$-robustness (dashed) for the values \textcolor{Orange}{$\rho=1$}, \textcolor{BluishGreen}{\smash{$\rho=\frac{3}{4}$}}, and \textcolor{SkyBlue}{\smash{$\rho=\frac{1}{2}$}} as a function of $k$ (right). 
		The dotted black line is the consistency and robustness of the $k$-partition mechanism of \citet{bjelde2017impartial}.
}
\vspace{-.2cm}
\end{figure}
By tuning the confidence parameter $\rho$ between \smash{$\frac{1}{2}$} and $1$, we achieve a consistency between \smash{$1-\frac{1}{2k}$} and $1$ while only losing \smash{$O\bigl(\frac{1}{k}\bigr)$} in robustness compared to the best-known mechanism without prediction.

\begin{theorem}\label{thm:kpartition}
     For any confidence parameter \smash{$\rho \in \big[\frac{1}{2},1\big]$}, the $\rho$-partition mechanism is impartial, $\alpha$-consistent, and $\beta$-robust, where
     $
        \smash{\alpha=1-\frac{1-\rho}{k}}$ and $\smash{\beta= \bigl(1-\frac{2\rho}{k+1}\bigr) \bigl(1-\bigl(\frac{k-1}{k}\bigr)^k\bigr)}
     $.
\end{theorem}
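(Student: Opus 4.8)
\textbf{Proof plan for \Cref{thm:kpartition}.}
The plan is to treat the three properties separately, with the bulk of the work going into robustness. Impartiality is immediate from the structure of the mechanism: the assignment of a non-predicted vertex $i$ to a set $A_{j_i}$, the sampling of the $x$-values, and the internal run of $\algPerm$ on each $A_j$ never consult the outgoing edges of $i$, and $\algPerm$ itself is impartial by design (it ignores a potential edge from the running candidate when deciding replacements, cf.\ \Cref{alg:permutation}); for a predicted vertex $\pred_j$ the same holds since its placement at position $\rho$ in $A_j$ does not depend on its out-edges. For consistency, suppose the prediction is accurate, i.e.\ $\delta^-(\hat S,G)=\Delta_k(G)$, so $\hat S=\{\pred_1,\dots,\pred_k\}$ is an optimal committee and $\sum_{j} \delta^-(\pred_j,G)=\Delta_k$. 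Fix $j$ and condition on everything except the $x$-values inside $A_j$. Vertex $\pred_j$ sits at position $\rho$, so with probability $1$ it sees every in-edge from vertices placed to its left; the in-edges it \emph{misses} are exactly those from the non-predicted vertices of $A_j$ landing to its right, each of which happens independently with probability $1-\rho$. Hence, using \Cref{lem:perm-max-indegree-left} (the permutation mechanism picks a vertex maximizing indegree-from-the-left, so in particular does at least as well as $\pred_j$ itself), the expected observed indegree contributed by set $A_j$ is at least $\delta^-(\pred_j,G)$ minus the expected number of missed in-edges of $\pred_j$. A clean way to bound the latter: only in-neighbours of $\pred_j$ that are themselves predicted vertices assigned to a \emph{different} set cannot be missed, and crudely $\E[\#\text{missed in-edges of }\pred_j]\le (1-\rho)\,\delta^-(\pred_j,G)$, since each of its in-neighbours, whether predicted or not, contributes a missed edge with probability at most $1-\rho$ (a non-predicted in-neighbour must both land in $A_j$ and to the right of position $\rho$; a predicted in-neighbour $\pred_\ell$ with $\ell\ne j$ never lands in $A_j$ at all, contributing $0\le 1-\rho$). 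Summing over $j$ gives expected total observed indegree $\ge (1-(1-\rho))\,\Delta_k\cdot\frac{1}{?}$ — and here one must be slightly more careful, distributing the $\delta^-(\pred_j)$ mass so the worst case is when it is concentrated on one set, yielding $\alpha = 1-\frac{1-\rho}{k}$; I would formalise this by writing the loss as $\sum_j (1-\rho)\,p_j\,\delta^-(\pred_j,G)$ where $p_j\le 1/k$ is (a bound on) the probability a given non-predicted in-neighbour lands in $A_j$ and noting $\sum_j \delta^-(\pred_j,G)=\Delta_k$.

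For robustness the prediction carries no information, so I would essentially redo the Bjelde--Gupta--von Stengel analysis of the $k$-partition mechanism and track how the non-uniform placement of the predicted vertices perturbs it. Let $T=\{t_1,\dots,t_k\}$ be a genuine optimal committee (disjoint from any assumptions about $\hat S$) with $\delta^-(T,G)=\Delta_k$. The key quantity, as in the original, is for each $t\in T$ the expected observed indegree of $t$ \emph{in whichever set it lands in}; by \Cref{lem:perm-max-indegree-left} the mechanism's contribution from that set is at least this. Condition on the set $A_j$ containing $t$ and on the positions; an in-edge $(v,t)$ is observed iff $v$ is (effectively) to the left of $t$ in the induced permutation on $A_j$, where "to the left" now includes all vertices outside $A_j$ (they count as observed in-neighbours via $\delta^-_{[n]\setminus S}$ in \Cref{alg:permutation}). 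So an in-edge $(v,t)$ is \emph{unobserved} only if $v\in A_j$ and $v$ lies to the right of $t$. I would split on whether $v$ is a predicted vertex:
\begin{itemize}
\item If $v$ is non-predicted: $v$ lands in $A_j$ with probability $1/k$ and then, conditioning further, lies to the right of $t$; the delicate point, exactly as in the original analysis, is that $t$'s own position is not uniform on $[0,1]$ when $t$ happens to coincide with a predicted vertex, but for $t\in T$ not predicted, $x_t\sim U[0,1]$ and a symmetry/averaging argument gives probability-of-unobserved $=\frac12\cdot\frac1k$ for that edge — summed over the $\le$ (indegree) in-neighbours and combined with the $(1-(\frac{k-1}{k})^{\,?})$-type term coming from the event that $t$ is actually \emph{selected as the representative} rather than beaten.
\item If $v$ is predicted, say $v=\pred_\ell$: it lands in $A_j$ only if $\ell=j$, and then sits at fixed position $\rho$, so it is to the right of $t$ iff $x_t<\rho$ (or the tie rule), probability $\rho$. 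This is where the factor $\bigl(1-\frac{2\rho}{k+1}\bigr)$ enters rather than the cleaner $\bigl(1-\frac{2}{k+1}\bigr)$ of the unprediced mechanism.
\end{itemize}
Combining, the expected observed indegree of each $t\in T$ is at least $\bigl(1-\frac{2\rho}{k+1}\bigr)$ times its "would-be" contribution in the idealized uniform analysis, and then the standard bound that the selected representative of $A_j$ has expected observed indegree at least $\bigl(1-(\frac{k-1}{k})^k\bigr)$ of $\sum_{t\in T\cap A_j}\delta^-(t,G)$ carries through, giving $\beta=\bigl(1-\frac{2\rho}{k+1}\bigr)\bigl(1-(\frac{k-1}{k})^k\bigr)$ after summing over $j$ and using $\sum_j \delta^-(T\cap A_j,G)=\Delta_k$.

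The main obstacle — and what the paper itself flags as the source of technical difficulty — is precisely the step I glossed over with "symmetry/averaging": in the original $k$-partition mechanism the events "$t_a$ and $t_b$ land in the same set" are independent across pairs and each has probability $1/k$, which makes the expected-indegree computation factor nicely; here, because each set is forced to contain exactly one predicted vertex and the non-predicted optimal vertices are thrown uniformly into the $k$ sets, the co-location events among the $t\in T$ are negatively correlated rather than independent, and one cannot simply multiply probabilities. I would handle this by re-deriving the relevant expectation via linearity over ordered pairs $(v,t)$ of (in-neighbour, optimal-vertex) and bounding each term using only the marginal that a single non-predicted vertex lands in a given set with probability $1/k$, which is all that is needed — the negative correlation can only help (it makes large pile-ups in one set less likely), so the independence-based bound of the original analysis remains a valid \emph{lower} bound on the mechanism's performance. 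Making that monotonicity rigorous — formalizing "negative correlation only helps here" for the specific functional being bounded — is the crux, and I expect it to require either an explicit coupling between the forced-assignment distribution and the i.i.d.\ distribution, or a careful term-by-term comparison showing every summand in the forced model dominates its i.i.d.\ counterpart.
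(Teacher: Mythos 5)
Your skeleton matches the paper's: impartiality is immediate, consistency follows by lower-bounding the observed indegree of each $\pred_j$ edge by edge (your final correction to a loss of $\sum_j(1-\rho)p_j\,\delta^-(\pred_j)$ with $p_j\le\frac1k$ is exactly the paper's computation and gives $\alpha=1-\frac{1-\rho}{k}$), and robustness is an adaptation of the $k$-partition analysis. The gap is in robustness, and it is precisely the step you defer to "symmetry/averaging" and "negative correlation only helps." You have misidentified which correlation is the problem. Any lower bound on $\E[\delta^-(\iPart_j)]$ via \Cref{lem:perm-max-indegree-left} must go through a \emph{witness} optimal vertex in $A_j$; the paper samples $i^*$ uniformly from $S^*\cap A_j$ (event $\Ev_{i^*}$) and must then bound $\PP[i\in A_j\mid \Ev_{i^*}]$ for each in-neighbour $i$ of $i^*$. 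When $i\in\hat S\setminus S^*$ and $i^*\in S^*\setminus\hat S$, this conditional probability can greatly exceed the marginal $\frac1k$: e.g.\ if all predicted vertices except $i$ are optimal, then $i\in A_j$ forces $i^*$ to be the unique optimal vertex in $A_j$, so $\PP[\Ev_{i^*}\mid i\in A_j]$ is much larger than $\PP[\Ev_{i^*}]$. This correlation is \emph{adverse}, not helpful, and it cannot be dismissed by passing to marginals over ordered pairs $(v,t)$ — the summand you want to bound is $\PP[\Ev_t]\cdot\PP[v\text{ unobserved}\mid \Ev_t]$, which inherently involves the joint law. Indeed, your per-edge accounting for a predicted in-neighbour (lands in the same set w.p.\ $\frac1k$, then to the right w.p.\ $\rho$) would yield a loss of $\frac{\rho}{k}$ per edge and hence a factor $1-\frac{\rho}{k}$, which is \emph{strictly better} than the claimed $1-\frac{2\rho}{k+1}$ for all $k\ge 2$ — a sign that the marginal-based bound is not actually attainable by this decomposition.

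The paper's resolution, which your plan does not anticipate, is to abandon the attempt to bound $\PP[\Ev_{i^*}\mid i\in A_j]$ by $\PP[\Ev_{i^*}]$ in this one case and instead compute $\PP[\Ev_{i^*}]$ explicitly in closed form as a function of $p=|S^*\cap\hat S|$ (\Cref{claim:prob-main-event-nicer}, a nontrivial binomial-sum calculation), prove the resulting expression is non-increasing in $p$ (\Cref{claim:function-decreasing}), and conclude $\PP[\Ev_{i^*}]\ge\frac{k+1}{2k}$, whence $\PP[i\in A_j\mid\Ev_{i^*}]\le\frac{1}{\PP[\Ev_{i^*}]}\cdot\frac1k\le\frac{2}{k+1}$ by Bayes' rule. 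This is where the factor $1-\frac{2\rho}{k+1}$ actually comes from; the remaining cases ($i\notin S^*\cup\hat S$; $i^*\in\hat S$; $i,i^*\in S^*$) are handled by Bayes-rule comparisons showing $\PP[i\in A_j\mid\Ev_{i^*}]\le\frac1k$ and give the weaker losses $\frac{1}{2k}$, $\frac{1-\rho}{k}$, $\frac{\rho}{k}$ that you correctly anticipate. Without the explicit computation and monotonicity argument for $\PP[\Ev_{i^*}]$ — or some genuinely different device for the case $i\in\hat S\setminus S^*$ — the robustness bound does not follow, so as written the proposal has a genuine gap at its central step.
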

For example, when taking \smash{$\rho=\frac{1}{2}$} to prioritize robustness, our mechanism achieves a robustness guarantee of \smash{$\frac{1}{2}$} for $k=2$, \smash{$\frac{19}{36}\approx 0.5278$} for $k=3$, \smash{$\frac{35}{64}\approx 0.5469$} for $k=4$, and approaching \smash{$1-\frac{1}{e}\approx 0.6321$} for $k\to \infty$. The consistency guarantee for this value of $\rho$ and any $k\geq 2$ is \smash{$1-\frac{1}{2k}$}, which is \smash{$\frac{3}{4}=0.75$} for $k=2$, \smash{$\frac{5}{6}\approx 0.8333$} for $k=3$, \smash{$\frac{7}{8}=0.875$} for $k=4$, and approaches $1$ for $k\to \infty$.
When taking $\rho=1$ to maximize consistency, the mechanism is $1$-consistent for any $k$ and achieves a robustness guarantee of \smash{$\frac{1}{4}=0.25$} for $k=2$, \smash{$\frac{19}{54}\approx 0.3519$} for $k=3$, \smash{$\frac{105}{256}\approx 0.0.4102$} for $k=4$, and again approaching \smash{$1-\frac{1}{e}\approx 0.6321$} for $k\to \infty$.
\Cref{fig:rho-partition} illustrates the performance of the $\rho$-partition mechanism for \smash{$\rho\in \bigl\{ \frac{1}{2},\frac{3}{4},1 \bigr\}$} and $k\in\{2,\ldots,10\}$, and compares it with the $k$-partition mechanism of \citet{bjelde2017impartial}.

The proof of \Cref{thm:kpartition} is deferred to \Cref{app:thm:kpartition}; here we briefly describe the main ideas behind the robustness guarantee, which constitutes the most difficult part of the proof.
For the analysis we consider an optimal set $S^*$ and $j\in[k]$ such that $A_j$ contains an optimal vertex, i.e., $S^*\cap A_j\neq\emptyset$, and sample a vertex $i^*$ from $S^*\cap A_j$ uniformly at random.
We then bound the expected indegree of $i^*$ that the mechanism observes by bounding the probability that each in-neighbor $i$ of $i^*$ lies in a set other than $A_j$ or in the set $A_j$ but before $i^*$ according to the internal permutation.
What complicates the analysis is that, unlike in the mechanism without predictions, the events $i^*\in A_j$ and $i\in A_j$ are not independent.
However, it is not difficult to see that when $i\notin\smash{S^*\cup\hat{S}}$, the probability of $i$ being in $A_j$ is the same as in the independent case.
We show further that when $i\in S^*$ or \smash{$i^*\in \hat{S}$}, the probability of $i$ being in $A_j$ cannot increase much, and the only difference is given by the position of the predicted vertex in the internal permutation.
The most intricate part of the proof is the case where \smash{$i^*\in S^*\setminus \hat{S}$} and \smash{$i\in \hat{S}\setminus S^*$}, because the events of $i^*$ being sampled in the set $A_j$ and $i$ being in this set can be strongly correlated.
Indeed, the probability of the former event conditional on $i\in A_j$ can be as large as $1$ if, for example, all predicted vertices except $i$ belong to $S^*$, as in this case $i\in A_j$ implies that $i^*$ is the unique vertex in $S^*\cap A_j$.
We tackle this difficulty by directly computing a lower bound on the (unconditional) probability of $i^*$ being sampled as the optimal vertex in $A_j$.

\section{Upper Bounds}\label{sec:upper-bounds}

To put our consistency and robustness results into perspective, we will now give upper bounds on the values $\alpha$ and $\beta$ for which an impartial selection mechanism with predictions can simultaneously guarantee $\alpha$-consistency and $\beta$-robustness. We do so for $k$-selection with $k\in\{1,2,3\}$, and for $1$-selection from plurality graphs. The upper bounds are shown in \Cref{fig:results} alongside the lower bounds obtained in earlier sections.

\begin{theorem}\label{thm:ubs}
	The following statements hold:\vspace{-5pt}
	\begin{enumerate}[label=(\roman*)]
		\item If a randomized $1$-selection mechanism with predictions is impartial, $\alpha$-consistent, and $\beta$-robust, then $\beta\leq\frac{1}{2}$ and $\alpha+\beta\leq 1$.\label{thm:ub-1selection}\vspace{-5pt}
		\item If a randomized $1$-selection mechanism with predictions is impartial, $\alpha$-consistent, and $\beta$-robust on plurality graphs, then $\beta\leq\frac{3}{4}$ and $\alpha+\beta\leq\frac{3}{2}$.	\label{thm:ub-1selection-plurality}\vspace{-5pt}
		\item If a randomized $2$-selection mechanism with predictions is impartial, $\alpha$-consistent, and $\beta$-robust, then $\beta\leq \frac{3}{4}$ and $\alpha+\beta\leq \frac{3}{2}$.\label{thm:ub-rand-2selection}\vspace{-5pt}
		\item If a randomized $3$-selection mechanism with predictions is impartial, $\alpha$-consistent, and $\beta$-robust, then \smash{$\beta\leq \frac{4}{5}$}, $4\alpha+3\beta\leq 6$, and $4\alpha+21\beta\leq 20$.\label{thm:ub-rand-3selection}
	\end{enumerate}
\end{theorem}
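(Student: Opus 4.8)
The plan is to prove each of the four statements by the same scheme: assuming an impartial mechanism is $\alpha$-consistent and $\beta$-robust with $(\alpha,\beta)$ violating the claimed inequality, I would construct a small family of instances — each an ordered pair consisting of a graph and a predicted set — on which impartiality, combined with the consistency guarantee on the instances where the prediction is accurate and the robustness guarantee on those where it is inaccurate, together with the constraint that selection probabilities sum to at most $k$, yields a contradiction. The only way the prediction affects the mechanism's behaviour is through whether the predicted set is optimal, so within a family the instances are made ``indistinguishable'' to the mechanism in the precise sense afforded by impartiality — they differ only in the outgoing edges of a few designated vertices — while differing in whether the predicted set attains the maximum (weighted) indegree.

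For the inequalities involving $\beta$ alone — $\beta\le\tfrac12$ in (i), $\beta\le\tfrac34$ in (ii) and (iii), and $\beta\le\tfrac45$ in (iv) — I would argue by reduction to the prediction-free setting: if $f$ is impartial and $\beta$-robust, then for any fixed $\hat S\in\binom{[n]}{k}$ the map $G\mapsto f(\hat S,G)$ is an impartial $k$-selection mechanism in the classical sense that is $\beta$-optimal on every graph, so the known upper bounds — $\tfrac12$ for $k=1$ \citep{alon2011sum,fischer2015optimal}, $\tfrac34$ for $k=2$ and for plurality \citep{bjelde2017impartial}, and the analogous $\tfrac45$ for $k=3$, which I would either cite or establish by the same gadget technique — transfer verbatim. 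For the plurality statement one only needs to check additionally that the witnessing instances lie in the plurality class and that $\hat S$ is a legitimate singleton prediction there.

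The substance is in the mixed inequalities $\alpha+\beta\le 1$ in (i), $\alpha+\beta\le\tfrac32$ in (ii) and (iii), and $4\alpha+3\beta\le 6$ together with $4\alpha+21\beta\le 20$ in (iv). For these I would start from a graph on which several vertices are tied for the maximum indegree and the predicted set is among the optimal ones, so that consistency forces the total selection probability of the optimal vertices to be at least $\alpha$, up to a vanishing term in the spirit of the $\Delta$-dependent bounds of \cref{cor:plurality}. Then, for each optimal vertex $v$, I would reroute a single outgoing edge of $v$ so as to create a strictly larger competitor and render the prediction inaccurate; since the modified graph differs from the original only in the out-edges of $v$, impartiality freezes $f_v$, and robustness on the modified graph yields an upper bound on $f_v$ of the form roughly $1-\beta$ (more precisely $1-\beta$ diluted by the ratio between the new optimum and the old one). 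Summing these upper bounds over the optimal vertices and comparing with the consistency lower bound gives a relation of the shape $\alpha \lesssim (\text{number of optimal slots})\cdot(1-\beta)$, whose instantiation with a single optimal vertex recovers $\alpha+\beta\le 1$; for $k=2,3$ and for plurality, a single rerouted edge only produces a competitor that is marginally larger than the optimum, which weakens the robustness bound and accounts for the constant $\tfrac32$ instead of $1$. Statement (iv) additionally requires isolating the two faces $4\alpha+3\beta\le 6$ and $4\alpha+21\beta\le 20$ through a second, more elaborate gadget that spreads the optimum over the three selection ``slots'' and applies the reroute argument to different vertices in different slots.

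The step I expect to be the main obstacle is precisely the design of these reroute gadgets. Because impartiality preserves only the probability of the vertex whose out-edges are altered, and because a vertex's own out-edges cannot change its own indegree, the frozen vertex must be simultaneously (a) one of the optimal vertices whose probability consistency controls and (b) incident to enough indegree elsewhere that rerouting one of its edges genuinely flips the optimality of the predicted set; reconciling these requirements, while keeping the instance within the plurality class where needed and while obtaining three independent inequalities for $k=3$, is where the real difficulty lies. Once suitable instances are in hand, extracting the linear inequalities and passing to the limit in the size parameters is mechanical.
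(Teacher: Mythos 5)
Your overall framework---families of graphs linked by out-edge modifications, with linear constraints coming from impartiality, consistency, robustness, and the probability-sum bound---is indeed the paper's framework, and your reduction of the pure robustness bounds $\beta\le\frac12,\frac34,\frac34$ to the known prediction-free upper bounds by fixing $\hat{S}$ is sound (for $\beta\le\frac45$ at $k=3$ you would still have to build the gadget yourself, as the paper does with seven $5$-vertex graphs). The genuine gap is in the mixed inequalities, and it is exactly the obstacle you flag but do not resolve: your reroute step asks robustness on the modified graph $G'$ to yield $f_v\le 1-\beta$ up to a dilution factor ``close to $1$''. The correct bound is $f_v\le(1-\beta)\,\Delta(G')/\bigl(\Delta(G')-\delta^-(v,G')\bigr)$, and since only $v$'s own out-edges change, $\delta^-(v)$ is untouched while every competitor's indegree grows by at most one. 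Hence either $v$ was the unique optimum in $G$, in which case it remains optimal in $G'$ and the prediction never becomes inaccurate, or some competitor was already tied with $v$, in which case consistency on $G$ does not pin down $f_v$ alone and the dilution factor is at least $\delta^-(v)+1$, making the robustness bound vacuous. There is no instance in which $v$ is simultaneously dominant enough for consistency to force $f_v\ge\alpha$ and demotable to a vanishing fraction of the new optimum by editing only $v$'s out-edges.

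The paper's resolution is structurally different and is the idea you are missing: it never extracts an \emph{upper} bound on a frozen probability from robustness. For $k=1$ it builds three graphs $A$, $B$, $C$ with the same predicted set such that $C$ differs from $A$ only in the out-edges of $v$ and from $B$ only in the out-edges of $w\neq v$ (\Cref{fig:a2c1}). Consistency on $A$, where $v$ is predicted and uniquely optimal, gives $f_v(A)\ge\alpha$; robustness on $B$, where $w$ is uniquely optimal and not predicted, gives $f_w(B)\ge\beta$; impartiality transports both lower bounds to $C$, and the \emph{only} constraint used on $C$ is $\sum_i f_i(C)\le k$, yielding $\alpha+\beta\le f_v(C)+f_w(C)\le 1$. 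The cases $k=2,3$ and plurality iterate this with more graphs and more variables. Two further ingredients you omit: the symmetry reduction (\Cref{lem:symmetry}), which is needed for $k\ge 2$ to identify the probabilities of symmetric vertices within a single graph and keep the resulting linear program small enough to solve by hand, and any concrete construction for part \ref{thm:ub-rand-3selection}, where the paper needs seven $5$-vertex graphs and twelve probability variables, combined with carefully chosen multipliers, to isolate the two faces $4\alpha+3\beta\le 6$ and $4\alpha+21\beta\le 20$.
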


We prove these results in \Cref{app:thm:ubs}. To this end we consider appropriate families of graphs and for each vertex in these graphs introduce a variable for the probability with which some impartial, $\alpha$-consistent, and $\beta$-robust $k$-selection mechanism selects that vertex. We generalize a lemma of \citet{holzman2013impartial} to show that one can restrict attention to symmetric mechanisms, and use impartiality, consistency, robustness, and the fact that the probabilities for each graph must sum up to $k$ to obtain a set of linear inequalities involving the probability variables, $\alpha$, and $\beta$. We then show that any values of $\alpha$ and $\beta$ \emph{not} satisfying the statements violate the linear inequalities.

\section{Discussion}

We have initiated the study of impartial selection mechanisms with predictions.
Unlike majority voting, these mechanisms are not prone to strategic manipulation.
While we have made substantial progress regarding the approximation guarantees achievable by such mechanisms, in most settings a moderate gap remains between the upper and lower bounds.
We leave closing these gaps for future work.
In addition, it would be interesting to test the mechanisms we have proposed in practical applications, for example in the aggregation of outputs of different LLMs.

\section*{Acknowledgements}

Research was supported by the Deutsche Forschungsgemeinschaft under project number~431465007, by the Engineering and Physical Sciences Research Council under grant~EP/T015187/1, and by a Structural Democracy Fellowship through the Brooks School of Public Policy at Cornell University.

\newpage
\appendix

\section{Proofs Deferred from \Cref{sec:1-selection}}

\subsection{Proof of \Cref{prop:perm-rho}}\label{app:prop:perm-rho}

Impartiality follows directly from the impartiality of the permutation mechanism with any (fixed or randomly chosen) permutation, established by \citet{fischer2015optimal}.
    This is because the permutation $\pi$ constructed from $x$ in our setting only depends on the identity of the predicted vertex, which does not depend on the outgoing edges, and is otherwise sampled randomly and independently of any vertex's identity or outgoing edges.

    For the approximation guarantees, we fix $G=([n],E)$ and $\pred\in [n]$.
    For the consistency guarantee, we further assume that $\delta^-(\pred)=\Delta$ and observe that
    \[
        \E[\iPerm(G,[n],x)] \geq \E[\delta^-_{\pi_{<\pred}}(\pred)] = \sum_{i\in N^-(\pred)} \PP[i\in \pi_{<\pred}] = \rho \delta^-(\pred) = \rho\Delta,
    \]
    where the first inequality follows from \Cref{lem:perm-max-indegree-left} and the second equality from the fact that $x_{\pred}=\rho$ and $x_i\in [0,1]$ is taken uniformly at random for every $i\in [n]\setminus \{\pred\}$.
    We conclude that, in this case, $\frac{1}{\Delta}\E[\iPerm(G,[n],x)] \geq \rho$, so the mechanism is $\rho$-consistent.

    Similarly, for the robustness guarantee we denote by $i^*\in [n]$ any vertex with $\delta^-(i^*)=\Delta$ and observe that
    \begin{align*}
        \E[\iPerm(G,[n],x)] \geq \E[\delta^-_{\pi_{<i^*}}(i^*)] & = \sum_{i\in N^-(i^*)} \PP[i\in \pi_{<i^*}] \\
        & = \mathds{1}_{\pred\in N^-(i^*)} \PP[i\in \pi_{<i^*}]+ \sum_{i\in N^-(i^*)\setminus \{\pred\}} \PP[i\in \pi_{<i^*}] \\
        & \geq (1-\rho)\mathds{1}_{\pred\in N^-(i^*)} + \frac{1}{2}\delta^-_{N^-(i^*)\setminus \pred}(i^*) \geq (1-\rho)\Delta,
    \end{align*}
    where the first inequality follows from \Cref{lem:perm-max-indegree-left}, the first inequality from the fact that $x_{\pred}=\rho$ and $x_i\in [0,1]$ is taken uniformly at random for every $i\in [n]\setminus \{\pred\}$, and the last inequality due to $\rho\geq \frac{1}{2}$.
    We conclude that, regardless of $\delta^-(\pred)$, we have $\frac{1}{\Delta}\E[\iPerm(G,[n],x)] \geq 1-\rho$, so the mechanism is $(1-\rho)$-robust.\qed

\subsection{Proof of \Cref{thm:last-fixed-perm}}\label{app:thm:last-fixed-perm}

Throughout this appendix, we let $\score_\pi(i,G)=\delta_{\pi_{<i}}(i,G)$ denote the indegree of vertex $i$ from its left, either for a fixed permutation $\pi$ or for the random variable corresponding to the permutation $\pi(x)$.
Note that this permutation is taken uniformly at random from within all permutations that have $\pred$ in the last position, i.e., from the set $\Pi_n(\pred\to n)$.
For a graph $G=([n],E)$ and a vertex $i\in [n]$, we let
\[
    A_r(i,G) = [\score_\pi(i,G)= r] \quad \text{for } r\in [\delta^-(i,G)]
\]
denote the event that $i$ has indegree $r$ from the left, and 
\[
    B_r(i,G) = \bigcup_{j\in [n]\setminus \{i\}} [\score_\pi(j,G)\geq r] \quad \text{for } r\in [\Delta(G)]
\]
denote the event that a vertex other than $i$ has indegree $r$ or higher from the left.
We omit the graph $G$ from the previous notation when clear from context.

Before proceeding with the proof of \Cref{thm:last-fixed-perm}, state and prove the following lemma.
\begin{lemma}\label{lem:correlation}
    For every graph $G=([n],E)\in \GG_n$, vertices $\pred\in [n]$, $i\in [n]\setminus \pred$, and values $r,s\in \big[\delta^-(i,G)-\mathds{1}_{\pred\in N^-(i)}\big]$ with $r>s$,
    \[
        \PP[B_r(i,G) \,|\, A_s(i,G)] \geq  \PP[B_r(i,G) \,|\, A_r(i,G)],
    \]
    where the probabilities are taken over $x\in[0,1]^n$, with $x_{\pred}=1$ and $x_j$ taken uniformly at random for every $j\in [n]\setminus \{\pred\}$.
\end{lemma}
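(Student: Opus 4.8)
The plan is to establish the negative-correlation inequality $\PP[B_r(i) \mid A_s(i)] \geq \PP[B_r(i) \mid A_r(i)]$ for $r > s$ by exhibiting an injection from the event $A_r(i)$ into the event $A_s(i)$ that weakly preserves membership in $B_r(i)$, in the spirit of the coupling argument of \citet{cembrano2023single}. Since $\pred$ is fixed at position $n$, the relevant probability space is the uniform distribution over $\Pi_n(\pred \to n)$, equivalently the uniform distribution over permutations of $[n]\setminus\{\pred\}$ with $\pred$ appended at the end. I would work with this reduced space throughout.

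First I would set up the combinatorial map. Condition on $A_r(i)$: this fixes that exactly $r$ of the in-neighbors of $i$ appear to the left of $i$. Because $\pred$ sits at the very end and $i \neq \pred$, the in-neighbors of $i$ that are candidates for "appearing to the left of $i$" are exactly $N^-(i) \setminus \{\pred\}$, and the hypothesis $r,s \leq \delta^-(i) - \mathds{1}_{\pred \in N^-(i)}$ guarantees $A_r(i)$ and $A_s(i)$ are attainable. The idea is: given a permutation $\pi$ realizing $A_r(i)$, select a canonical set of $r - s$ in-neighbors of $i$ lying to the left of $i$ (say, the $r-s$ of them that are closest to $i$ from the left, or chosen by smallest index — some deterministic rule), and move them to canonical positions just to the right of $i$, shifting the intervening vertices left by one each. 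This produces a permutation $\pi'$ realizing $A_s(i)$. The map must be injective: the canonical rule for choosing which vertices moved and where they landed must be invertible, which is why I'd fix the destination slots deterministically (e.g., immediately following $i$, in increasing index order) so that from $\pi'$ one can recover exactly which block was relocated and undo it.

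The crux is the implication $\pi \in B_r(i) \Rightarrow \pi' \in B_r(i)$, i.e., if some $j \neq i$ already had $\score_\pi(j) \geq r$ then still $\score_{\pi'}(j) \geq r$. The only vertices whose left-indegree changes are those between $i$'s old position and the block's old positions; moving in-neighbors of $i$ from left-of-$i$ to right-of-$i$ can only remove left-neighbors from such a $j$ if $j$ itself lies strictly between $i$ and a moved vertex, and at most one unit per moved vertex. Here I expect to need the plurality structure: since every vertex has outdegree exactly $1$, each moved vertex is an in-neighbor of $i$ and hence contributes an outgoing edge to $i$, not to any intermediate $j$, so relocating it past $j$ does not decrease $\score(j)$ unless the moved vertex points to $j$ — impossible, as it points to $i$. (If the argument instead needs the witness $j$ of $B_r$ to be chosen carefully — e.g. the one with globally maximum left-indegree — I'd pick that witness and track it explicitly.) Summing the resulting inequality $|A_r(i) \cap B_r(i)| \leq |A_s(i) \cap B_r(i)|$ together with the easier bound relating $|A_r(i)|$ and $|A_s(i)|$ (which follows from a similar, even simpler shift since one has more "freedom" to place fewer neighbors left — or directly by a symmetry/counting identity) yields the claimed ratio inequality after dividing. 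The main obstacle will be getting the canonical relocation rule exactly right so that it is simultaneously (a) injective and (b) monotone for $B_r$; I expect this to require carefully fixing both the selection of which in-neighbors move and the slots they occupy, and then a short case check using outdegree-$1$ to rule out any decrease in $\score(j)$ for the $B_r$-witness.
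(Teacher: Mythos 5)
Your overall strategy coincides with the paper's: reduce the conditional inequality to the counting inequality $|A_r(i)\cap B_r(i)| \le |A_s(i)\cap B_r(i)|$, using that $\PP[A_r(i)]=\PP[A_s(i)]$ (the paper computes both as $\frac{1}{\delta^-(i)}$ or $\frac{1}{\delta^-(i)+1}$ according to whether $(\pred,i)\in E$), and then exhibit an injection between these event sets that keeps a $B_r$-witness alive. Your identification of where the plurality structure enters — a relocated in-neighbor of $i$ has its unique outgoing edge pointing to $i$, so no vertex other than $i$ can lose left-indegree — is exactly the argument the paper makes.

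The gap is in the injection itself. The map you propose — extract $r-s$ in-neighbors of $i$ from the left of $i$, compact the permutation, and reinsert them in canonical slots immediately to the right of $i$ — is not injective, and no choice of \emph{which} in-neighbors move or \emph{which} slots they land in repairs this, because the information that is destroyed is the original interleaving of the moved vertices with the non-in-neighbors to the left of $i$. Concretely, with $N^-(i)=\{a,b\}$, $r=2$, $s=1$, and $c\notin N^-(i)$, the permutations $(\ldots,a,b,c,i,\ldots)$ and $(\ldots,a,c,b,i,\ldots)$ (identical elsewhere) both select $b$ as the vertex to relocate and both map to $(\ldots,a,c,i,b,\ldots)$. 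You flag this as "the main obstacle", and it is; the paper's resolution is to relocate nothing and instead perform a single transposition \emph{within the set of positions occupied by} $\{i\}\cup N^-(i)$: swap $i$, which sits at relative position $r+1$ within that set, with the in-neighbor at relative position $s+1$. This leaves the absolute position of every other vertex untouched, is trivially invertible, achieves $\score(i)=s$ in one step, and the same outdegree-one argument shows the witness's left-indegree cannot decrease ($i'$ moves right so its own score cannot drop, $i'$ points only to $i\neq j'$, and $i$ moves left so its out-neighbor's score cannot drop). If you want to retain your "move several vertices" picture, you would need to phrase it as a permutation of the occupants of the fixed slot set of $\{i\}\cup N^-(i)$ rather than as a physical removal and reinsertion.
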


\begin{proof}
    Let $G$, $\pred$, $i$, $r$, and $s$ be as in the statement. 
    From the definition of the events, we have that
    \begin{align*}
        \PP[B_r(i) \,|\, A_s(i)] & = \frac{\PP[\score_{\pi}(i)=s \text{ and } \exists j\in [n]\setminus \{i\}: \score_{\pi}(j)\geq r]}{\PP[\score_{\pi}(i)=s]}\\
        \PP[B_r(i) \,|\, A_r(i)] & = \frac{\PP[\score_{\pi}(i)=r \text{ and } \exists j\in [n]\setminus \{i\}: \score_{\pi}(j)\geq r]}{\PP[\score_{\pi}(i)=r]}.
    \end{align*}
    Note that $\PP[\score_{\pi}(i)=s]=\PP[\score_{\pi}(i)=r]$. 
    Indeed, these probabilities are both equal to $\frac{1}{\delta^-(i)+1}$ if $(\pred,i)\notin E$ and to $\frac{1}{\delta^-(i)}$ if $(\pred,i)\in E$.
    Thus, it suffices to show the inequality for the numerators.
    Letting 
    \begin{align*}
        \Pi^{rs}_n & = \big\{ \pi\in \Pi_n(\pred\to n): \score_\pi(i)=s \text{ and } \exists j\in [n]\setminus \{i\}: \score_\pi(j)\geq r\}\\
        \Pi^{r}_n & = \big\{ \pi\in \Pi_n(\pred\to n): \score_\pi(i)=r \text{ and } \exists j\in [n]\setminus \{i\}: \score_\pi(j)\geq r\},
    \end{align*}
    we only need to prove that $|\Pi^{rs}_n| \geq |\Pi^{r}_n|$, since the permutation is chosen uniformly at random from $\Pi_n(\pred\to n)$.
    We prove this inequality by constructing an injective function $f\colon \Pi^{r}_n \to \Pi^{rs}_n$.

    For $\pi \in \Pi^{r}_n$, we construct $g(\pi)$ by exchanging $i$ with the $(s+1)$th vertex among its in-neighbors, i.e., by exchanging $i=\pi_{r+1}(\{i\}\cup N^-(i))$ with $i'=\pi_{s+1}(\{i\}\cup N^-(i)\})$.
    This function is clearly injective and, moreover, $\score_{g(\pi)}(i)=s$.
    To conclude that $g(\pi)\in \Pi^{rs}_n$, it only remains to show that $\score_{g(\pi)}(j)\geq r$ for some $j\in [n]\setminus \{i\}$. 
    To see this, observe that $\score_{\pi}(j')\geq r$ for some $j'\in [n]\setminus \{i\}$, because $\pi\in \Pi^r_n$. 
    We claim that $\score_{g(\pi)}(j')\geq r$ holds as well; i.e., that the indegree from the left of this vertex $j'$ does not decrease after applying $g$.
    This is the case because $g(\pi)$ only differs from $\pi$ in the position of the vertices $i$ and $i'$.
    Since $i'$ moved to the right, its indegree from the left cannot decrease.
    The indegree from the left of $i'$ outneighbor $i$ decreases by one, but we know that $j'\neq i$.
    The indegree from the left of $i$'s outneighbor, finally, may or may not increase by one, but cannot decrease.
    The indegree from the left of all other vertices remains constant.
    Thus, we have indeed $\score_{g(\pi)}(j')\geq r$, hence $g(\pi)\in \Pi^{rs}_n$ and we conclude.
\end{proof}

We now proceed with the proof of \Cref{thm:last-fixed-perm}.

\begin{proof}[Proof of \Cref{thm:last-fixed-perm}]
Impartiality of the $1$-permutation mechanism follows directly from the impartiality of the permutation mechanism with any fixed permutation, established by \citet{fischer2015optimal}, since the $1$-permutation mechanism samples the permutation independently from all outgoing edges.

For the approximation guarantees, we fix $G=([n],E)\in \GG_n$ and $\pred\in [n]$, and denote $\Delta=\Delta(G)$.
We let $x$ be the random vector taken when running the mechanism and $\pi=\pi(x)\in \Pi_n(\pred\to n)$ the associated random permutation.
We write $\iPerm$ instead of $\iPerm(G,S,x)$ for the (random) vertex selected by the mechanism.

To see that the mechanism is $1$-consistent, we assume that $\delta^-(\pred)=\Delta$ and observe that $\score_{\pi}(\pred) = \delta^-(\pred)$ because $\pi_n=\pred$.
Thus, we obtain
\[
    \delta^-(\iPerm) \geq \max\{\score_{\pi}(i):i\in [n]\} \geq \delta^-(\pred) = \Delta, 
\]
where the first inequality follows from \Cref{lem:perm-max-indegree-left}.
We conclude that the mechanism is $1$-consistent.

For the robustness guarantee, we assume that $\delta^-(\pred) < \Delta$ since otherwise the bound follows trivially. We let $i^*\in \arg\max\{\delta^-(i): i\in [n]\}$ be a maximum-indegree vertex and note that $i^*\neq \pred$.
Importantly, if there is more than one maximum indegree vertex, we fix $i^*$ such that $(\pred,i^*)\notin E$, whose existence is guaranteed in this case as $\pred$ has outdegree one.
For $r\in [\Delta]$, we write $A_r$ and $B_r$ instead of $A_r(i^*)$ and $B_r(i^*)$ for compactness.

We aim to bound the expectation of $\delta^-(\iPerm)$ in terms of conditional expectations of disjoint events, generalizing the proof by \citet{cembrano2023single} to the case where the permutation is no longer taken uniformly at random but with a fixed vertex at the end.
We denote $X=\delta^-(\iPerm)$ for compactness.
Note that we can assume that $\Delta\geq 2$, since otherwise $\delta^-(i)=1$ for every $i\in [n]$ and $\E[X]$ is trivially equal to $\Delta$.

We observe that the following pairs of events are disjoint:
\begin{enumerate}[label=(\roman*)]
    \item $[A_r \cap \neg B_r]$ and $[A_{r'} \cap \neg B_{r'}]$ are disjoint for $r\neq r'$, because $A_r \cap A_{r'} = \emptyset$;\label{item:disjoint-i}
    \item $[A_s \cap B_r \cap \neg B_{r+1}]$ and $[A_{s'} \cap B_r \cap \neg B_{r+1}]$ are disjoint for $s\neq s'$ and any $r,r'$, because $A_s \cap A_{s'} = \emptyset$;\label{item:disjoint-ii}
    \item  $[A_s \cap B_r \cap \neg B_{r+1}]$ and $[A_{s} \cap B_{r'} \cap \neg B_{r'+1}]$ are disjoint for $r\neq r'$ and any $s$, because the former implies $\max\{\score_{\pi}(i):i\in [n]\setminus \{i^*\}\} = r$ and the latter implies $\max\{\score_{\pi}(i):i\in [n]\setminus \{i^*\}\} = r'$;\label{item:disjoint-iii}
    \item $[A_r \cap \neg B_r]$ and $[A_s \cap B_{r'} \cap \neg B_{r'+1}]$ are disjoint for $s\leq r'$, because the former implies $\score_{\pi}(i^*) > \max\{\score_{\pi}(i):i\in [n]\setminus \{i^*\}\}$ and the latter implies the opposite inequality.\label{item:disjoint-iv}
\end{enumerate}

In what follows, we distinguish two cases, depending on whether $(\pred,i^*)\in E$ or not.

We first consider the case where $(\pred,i^*)\in E$ and thus $i^*$ has indegree at most $\Delta-1$ from the left.
Importantly, because of the way $i^*$ was fixed, it is the unique maximum indegree vertex in this case, so that $\score_{\pi}(i)\leq\delta^-(i)\leq \Delta-1$ for every $i\in [n]$.
Since $\pi\in \Pi_n(\pred\to n)$ is taken uniformly at random besides the fixed vertex $i^*$, we have $\PP[A_r] = \frac{1}{\Delta}$ for every $r\in \{0,1,\ldots,\Delta-1\}$.
Thus, for every $r\in \{0,1,\ldots,\Delta-1\}$ and $s\in \{0,1,\ldots,r\}$
\begin{align}
    \PP[A_r \cap \neg B_r] & = \PP[\neg B_r \,|\, A_r] \, \PP[A_r] = \frac{1}{\Delta}(1-\PP[B_r\,|\, A_r]),\label{eq:prob-Ar-Br}\\
    \PP[A_s \cap B_r \cap \neg B_{r+1}] & = \PP[B_r \cap \neg B_{r+1} \,|\, A_s] \, \PP[A_s] = \frac{1}{\Delta} (\PP[B_r \,|\, A_s] - \PP[B_{r+1} \,|\, A_s]),\label{eq:prob-As-Br}
\end{align}
where we used that $B_{r+1}\subseteq B_r$ in the second chain of equalities.
By \Cref{lem:perm-max-indegree-left}, we further know that $\iPerm=i^*$ and thus $\delta^-(\iPerm)=  \Delta$ whenever $\score_{\pi}(i^*)>\score_{\pi}(i)$ holds for every $i\neq i^*$, and that $\delta^-(\iPerm)\geq r$ whenever $\score_{\pi}(i)\geq r$ holds for some $r$.
Thus,
\begin{equation}
    \E[X \,|\, A_r \cap \neg B_r] = \Delta,\quad \text{and}\quad  \E[X \,|\, A_s \cap B_r \cap \neg B_{r+1}] \geq r\label{eq:lb-selected-indegree}
\end{equation}
for every $r\in \{0,1,\ldots,\Delta-1\}$ and $s\in \{0,1,\ldots,r\}$.

We now combine the previous observations to obtain the following chain of inequalities:
\allowdisplaybreaks
\begin{align*}
    \E[X] & \geq \sum_{r=1}^{\Delta-1} \E[X \,|\, A_r \cap \neg B_r] \PP[A_r \cap \neg B_r] + \sum_{r=0}^{\Delta-1} \sum_{s=0}^{r} \E[X \,|\, A_s \cap B_r \cap \neg B_{r+1}] \PP[A_s \cap B_r \cap \neg B_{r+1}]\\
    & \geq \frac{1}{\Delta} \bigg( \Delta \sum_{r=1}^{\Delta-1}(1-\PP[B_r\,|\, A_r]) + \sum_{r=0}^{\Delta-1} r \sum_{s=0}^{r} (\PP[B_r \,|\, A_s] - \PP[B_{r+1} \,|\, A_s]) \bigg)\\
    & = \frac{1}{\Delta}\bigg(\Delta(\Delta-1) - \Delta\sum_{r=1}^{\Delta-1}\PP[B_r\,|\, A_r] + \sum_{r=1}^{\Delta-1} r \sum_{s=0}^{r}\PP[B_r \,|\, A_s] - \sum_{r=2}^{\Delta}(r-1) \sum_{s=0}^{r-1}\PP[B_{r} \,|\, A_s]  \bigg)\\
    & = \frac{1}{\Delta}\bigg(\Delta(\Delta-1) - \Delta\sum_{r=1}^{\Delta-1}\PP[B_r\,|\, A_r] + \sum_{r=1}^{\Delta-1} r \sum_{s=0}^{r}\PP[B_r \,|\, A_s] - \sum_{r=1}^{\Delta-1}(r-1) \sum_{s=0}^{r-1}\PP[B_{r} \,|\, A_s]  \bigg)\\
    & = \frac{1}{\Delta}\bigg(\Delta(\Delta-1) - \Delta\sum_{r=1}^{\Delta-1}\PP[B_r\,|\, A_r] + \sum_{r=1}^{\Delta-1} \sum_{s=0}^{r-1}\PP[B_{r} \,|\, A_s] + \sum_{r=1}^{\Delta-1} r\, \PP[B_{r} \,|\, A_r] \bigg).
\end{align*}
The first inequality holds because the sum is over disjoint events, as argued in items \ref{item:disjoint-i} to \ref{item:disjoint-iv}; the second inequality because of inequalities \eqref{eq:prob-Ar-Br}, \eqref{eq:prob-As-Br}, and \eqref{eq:lb-selected-indegree}; the equalities from observing that $B_{\Delta}$ never holds and from rearranging terms.
We can now apply \Cref{lem:correlation} to bound $\PP[B_{r} \,|\, A_s]$ from below by $\PP[B_{r} \,|\, A_r]$ for each $r\in [\Delta-1]$ and $s\in \{0,1,\ldots,r\}$, and obtain
\begin{align*}
    \E[X] & \geq \frac{1}{\Delta}\bigg(\Delta(\Delta-1) - \Delta\sum_{r=1}^{\Delta-1}\PP[B_r\,|\, A_r] + \sum_{r=1}^{\Delta-1} \sum_{s=0}^{r-1}\PP[B_{r} \,|\, A_r] + \sum_{r=1}^{\Delta-1} r\, \PP[B_{r} \,|\, A_r] \bigg)\\
    & = \frac{1}{\Delta}\bigg(\Delta(\Delta-1) - \sum_{r=1}^{\Delta-1} (\Delta-2r) \PP[B_{r} \,|\, A_r]\bigg).
\end{align*}
Since $\Delta\geq 2$ and $\PP[B_{r} \,|\, A_r]\leq 1$, we can bound the sum from above as follows:
\begin{equation}
    \sum_{r=1}^{\Delta-1} (\Delta-2r) \PP[B_{r} \,|\, A_r] \leq \sum_{r=1}^{\lfloor\nicefrac{\Delta}{2}\rfloor} (\Delta-2r) \PP[B_{r} \,|\, A_r] \leq \sum_{r=1}^{\lfloor\nicefrac{\Delta}{2}\rfloor} (\Delta-2r) = \bigg\lfloor \frac{\Delta}{2}\bigg\rfloor \bigg(\Delta-\bigg\lfloor \frac{\Delta}{2}\bigg\rfloor-1\bigg).\label{ineq:sum-bound}
\end{equation}
Thus, if $\Delta$ is even,
\[
    \frac{\E[X]}{\Delta} \geq \frac{1}{\Delta^2} \bigg(\Delta(\Delta-1) - \frac{\Delta}{2} \bigg(\frac{\Delta}{2}-1\bigg) \bigg) = \frac{3\Delta-2}{4\Delta},
\]
and if $\Delta$ is odd,
\[
    \frac{\E[X]}{\Delta} \geq \frac{1}{\Delta^2} \bigg(\Delta(\Delta-1) - \frac{\Delta-1}{2} \bigg(\frac{\Delta+1}{2}-1\bigg) \bigg) = \frac{3\Delta^2-2\Delta-1}{4\Delta^2}.
\]

We now consider the case where $(\pred,i^*)\notin E$.
Since $\pi\in \Pi_n(\pred\to n)$ is taken uniformly at random besides the fixed vertex $i^*$, we now have $\PP[A_r] = \frac{1}{\Delta+1}$ for every $r\in \{0,1,\ldots,\Delta\}$.
Thus, for every $r\in \{0,1,\ldots,\Delta\}$ and $s\in \{0,1,\ldots,r\}$
\begin{align}
    \PP[A_r \cap \neg B_r] & = \PP[\neg B_r \,|\, A_r] \PP[A_r] = \frac{1}{\Delta+1}(1-\PP[B_r\,|\, A_r]),\label{eq:prob-Ar-Br-case2}\\
    \PP[A_s \cap B_r \cap \neg B_{r+1}] & = \PP[B_r \cap \neg B_{r+1} \,|\, A_s] \PP[A_s] = \frac{1}{\Delta+1} (\PP[B_r \,|\, A_s] - \PP[B_{r+1} \,|\, A_s]),\label{eq:prob-As-Br-case2}
\end{align}
where we used that $B_{r+1}\subseteq B_r$ in the second chain of equalities.
By \Cref{lem:perm-max-indegree-left}, we further know that $\iPerm=i^*$ and thus $\delta^-(\iPerm)=  \Delta$ whenever $\score_{\pi}(i^*)>\score_{\pi}(i)$ holds for every $i\neq i^*$, and that $\delta^-(\iPerm)\geq r$ whenever $\score_{\pi}(i)\geq r$ holds for some $r$.
Thus,
\begin{equation}
    \E[X \,|\, A_r \cap \neg B_r] = \Delta,\quad \text{and}\quad  \E[X \,|\, A_s \cap B_r \cap \neg B_{r+1}] \geq r\label{eq:lb-selected-indegree-case2}
\end{equation}
for every $r\in \{0,1,\ldots,\Delta\}$ and $s\in \{0,1,\ldots,r\}$.

We now combine the previous observations to obtain the following chain of inequalities:
\begin{align*}
    \E[X] & \geq \sum_{r=1}^{\Delta} \E[X \,|\, A_r \cap \neg B_r] \,\PP[A_r \cap \neg B_r] \\
    &\quad + \sum_{r=0}^{\Delta} \sum_{s=0}^{r} \E[X \,|\, A_s \cap B_r \cap \neg B_{r+1}]\, \PP[A_s \cap B_r \cap \neg B_{r+1}]\\
    & \geq \frac{1}{\Delta+1} \bigg( \Delta \sum_{r=1}^{\Delta}(1-\PP[B_r\,|\, A_r]) + \sum_{r=0}^{\Delta} r \sum_{s=0}^{r} (\PP[B_r \,|\, A_s] - \PP[B_{r+1} \,|\, A_s]) \bigg)\\
    & = \frac{1}{\Delta+1}\bigg(\Delta^2 - \Delta\sum_{r=1}^{\Delta}\PP[B_r\,|\, A_r] + \sum_{r=1}^{\Delta} r \sum_{s=0}^{r}\PP[B_r \,|\, A_s] - \sum_{r=2}^{\Delta+1}(r-1) \sum_{s=0}^{r-1}\PP[B_{r} \,|\, A_s]  \bigg)\\
    & = \frac{1}{\Delta+1}\bigg(\Delta^2 - \Delta\sum_{r=1}^{\Delta}\PP[B_r\,|\, A_r] + \sum_{r=1}^{\Delta} r \sum_{s=0}^{r}\PP[B_r \,|\, A_s] - \sum_{r=1}^{\Delta}(r-1) \sum_{s=0}^{r-1}\PP[B_{r} \,|\, A_s]  \bigg)\\
    & = \frac{1}{\Delta+1}\bigg(\Delta^2 - \Delta\sum_{r=1}^{\Delta}\PP[B_r\,|\, A_r] + \sum_{r=1}^{\Delta} \sum_{s=0}^{r-1}\PP[B_{r} \,|\, A_s] + \sum_{r=1}^{\Delta} r\, \PP[B_{r} \,|\, A_r] \bigg).
\end{align*}
The first inequality holds because the sum is over disjoint events, as argued in items \ref{item:disjoint-i} to \ref{item:disjoint-iv}; the second inequality because of inequalities \eqref{eq:prob-Ar-Br-case2}, \eqref{eq:prob-As-Br-case2}, and \eqref{eq:lb-selected-indegree-case2}; the equalities from observing that $B_{\Delta+1}$ never holds and from rearranging terms.
We can apply \Cref{lem:correlation} to bound $\PP[B_{r} \,|\, A_s]$ from below by $\PP[B_{r} \,|\, A_r]$ for each $r\in [\Delta]$ and $s\in \{0,1,\ldots,r\}$, and obtain
\begin{align*}
    \E[X] & \geq \frac{1}{\Delta+1}\bigg(\Delta^2 - \Delta\sum_{r=1}^{\Delta}\PP[B_r\,|\, A_r] + \sum_{r=1}^{\Delta} \sum_{s=0}^{r-1}\PP[B_{r} \,|\, A_r] + \sum_{r=1}^{\Delta} r\, \PP[B_{r} \,|\, A_r] \bigg)\\
    & = \frac{1}{\Delta+1}\bigg(\Delta^2 - \sum_{r=1}^{\Delta} (\Delta-2r) \PP[B_{r} \,|\, A_r]\bigg).
\end{align*}
Since $\Delta\geq 2$ and $\PP[B_{r} \,|\, A_r]\leq 1$, the bound on the sum in the final expression established in inequality \eqref{ineq:sum-bound} remains valid.
Thus, if $\Delta$ is even,
\[
    \frac{\E[X]}{\Delta} \geq \frac{1}{\Delta(\Delta+1)} \bigg(\Delta^2 - \frac{\Delta}{2} \bigg(\frac{\Delta}{2}-1\bigg) \bigg) = \frac{3\Delta+2}{4(\Delta+1)},
\]
and if $\Delta$ is odd,
\[
    \frac{\E[X]}{\Delta} \geq \frac{1}{\Delta(\Delta+1)} \bigg(\Delta^2 - \frac{\Delta-1}{2} \bigg(\frac{\Delta+1}{2}-1\bigg) \bigg) = \frac{3\Delta-1}{4\Delta}.
\]
We conclude that for any given $\Delta$, the lower bounds on $\frac{\E[X]}{\Delta}$ in this case are larger than in the case with $(\pred,i^*)\in E$.
Those obtained in the previous case are thus valid bounds on the robustness of the mechanism.

For the last part of the statement, we show that $\beta$ is an increasing function in $\Delta\geq 2$. 
To show this property, we distinguish between even and odd values of $\Delta$.
If $\Delta\geq 2$ is even, we have that
\[
    \beta(\Delta+1)-\beta(\Delta) = \frac{3(\Delta+1)^2-2(\Delta+1)-1}{4(\Delta+1)^2} - \frac{3\Delta-2}{4\Delta} = \frac{\Delta+2}{4\Delta(\Delta+1)^2} > 0.
\]
Similarly, if $\Delta\geq 3$ is odd, we have that 
\[
    \beta(\Delta+1)-\beta(\Delta) = \frac{3(\Delta+1)-2}{4(\Delta+1)} - \frac{3\Delta^2-2\Delta-1}{4\Delta^2} = \frac{3\Delta+1}{4\Delta^2(\Delta+1)}.
\]
We conclude that $\beta$ is increasing in $\Delta\geq 2$, as claimed.
Since $\Delta\geq 2$ and $\beta(2)=\frac{1}{2}$, we conclude that the mechanism is $\frac{1}{2}$-robust on plurality graphs.
\end{proof}

\subsection{Proof of \Cref{cor:plurality}}\label{app:cor:plurality}

We claim the result for the mechanism that runs the $1$-permutation mechanisms with probability $\rho$ and the random permutation mechanism with probability $1-\rho$.
    The former is $1$-consistent and $\beta_1(\Delta)$-robust on plurality graphs with maximum indegree $\Delta$, where
    \[
        \beta_1(\Delta) = \begin{cases} \frac{3\Delta-2}{4\Delta} & \text{if } \Delta \text{ is even,}\\ \frac{3\Delta^2-2\Delta-1}{4\Delta^2} & \text{if } \Delta \text{ is odd,} \end{cases}
    \]
    as established in \Cref{thm:last-fixed-perm}.
    The random permutation mechanism was shown by \citet{cembrano2023single} to be $\beta_2(\Delta)$-robust on plurality graphs with maximum indegree $\Delta$, where
    \[
        \beta_2(\Delta) = \begin{cases} \frac{3\Delta+2}{4(\Delta+1)} & \text{if } \Delta \text{ is even,}\\ \frac{3\Delta-1}{4\Delta} & \text{if } \Delta \text{ is odd.} \end{cases}
    \]
    Thus, the mixture between these mechanisms with parameter $\rho$ is $\alpha(\Delta)$-consistent and $\beta(\Delta)$-robust on graphs with maximum indegree $\Delta$, where
    \begin{align*}
        \alpha(\Delta) & = \begin{cases} \rho + \frac{3\Delta+2}{4(\Delta+1)}(1-\rho) = \frac{3\Delta+2}{4(\Delta+1)} + \frac{\Delta+2}{4(\Delta+1)}\rho & \text{if } \Delta \text{ is even,}\\ \rho + \frac{3\Delta-1}{4\Delta}(1-\rho) = \frac{3\Delta-1+(\Delta+1)\rho}{4\Delta} = \alpha(\Delta-1) & \text{if } \Delta \text{ is odd,} \end{cases}\\
        \beta(\Delta) & = \begin{cases} \frac{3\Delta-2}{4\Delta} \rho + \frac{3\Delta+2}{4(\Delta+1)}(1-\rho) = \frac{3\Delta+2}{4(\Delta+1)} - \frac{\Delta+2}{4\Delta(\Delta+1)}\rho & \text{if } \Delta \text{ is even,}\\
        \frac{3\Delta^2-2\Delta-1}{4\Delta^2} \rho + \frac{3\Delta-1}{4\Delta} (1-\rho) = \frac{3\Delta-1}{4\Delta} - \frac{\Delta+1}{4\Delta^2}\rho & \text{if } \Delta \text{ is odd.}\end{cases}
    \end{align*}
    This concludes the first claim in the statement.

    For the second claim, we observe that the functions $\alpha$ and $\beta$ are non-decreasing in $\Delta\geq 2$ for any fixed $\rho \in [0,1]$.
    Indeed, $\beta_1$ is non-decreasing in $\Delta\geq 2$, as proven in \Cref{thm:last-fixed-perm}, and $\beta_2$ is non-decreasing in $\Delta\geq 2$, as proven by \citet{cembrano2023single}.
    Since $\alpha$ is a fixed convex combination of two non-decreasing functions (the constant function with value $1$ and $\beta_2$), its monotonicity follows. 
    Similarly, that $\beta$ is non-decreasing follows from it being a fixed convex combination of two non-decreasing functions ($\beta_1$ and $\beta_2$).
    Thus, the guarantees for $\Delta=2$ are valid for any plurality graph: The $\big(\frac{2}{3}+\frac{1}{3}\rho\big)$-consistency and $\big(\frac{2}{3}-\frac{1}{6}\rho\big)$-robustness follow immediately by computing the previous expressions for $\Delta=2$.\qed

\section{Proofs Deferred from \Cref{sec:2-selection}}

\subsection{Proof of \Cref{thm:permutation-bi}}\label{app:thm:permutation-bi}
\begin{algorithm}[t]
    \caption{Fixed bidirectional permutation mechanism, $\algPermBi(\hat{S},G)$}
    \label{alg:permutation-bi}
    \begin{algorithmic}
    \Require graph $G=([n],E)$, predicted set $\hat{S} = \{\pred_1,\pred_2\}\subseteq [n]$.
    \Ensure set $S \subseteq [n]$ with $|S|\leq 2$.
    \State Fix $x_{\pred_1}\gets 0$ and $x_{\pred_2}\gets 1$
    \State fix $x_i\in (0,1)$ arbitrarily for each $i\in [n]\setminus \hat{S}$
    \State $\bar{x}_i\gets 1-x_i$ for every $i\in [n]$\
    \State {\bf return} $\algPerm(G,[n],x) \cup \algPerm(G,[n],\bar{x})$
     \end{algorithmic}
\end{algorithm}

Impartiality follows directly from the impartiality of the bidirectional permutation mechanism by \citet{bjelde2017impartial}, since the outgoing edges play no role in fixing the permutation.

For the approximation guarantees, we fix an arbitrary graph $G=([n],E)$ and predicted set $\hat{S}=\{\pred_1,\pred_2\}$.
We also fix any vector $x\in [0,1]^n$ with $x_{\pred_1}=0$, $x_{\pred_2}=1$, and $x_i\in(0,1)$ for every $\in [n]\setminus \{\pred_1,\pred_2\}$, and denote the permutation induced by $x$ by $\pi\in \Pi_n$.
In particular, we have $\pi_1=\pred_1$ and $\pi_n=\pred_2$.
Note that $\pi$ corresponds to the permutation used by the mechanism when running $\algPerm(G,[n],x)$ and its reverse $\bar{\pi}$ to the permutation used by the mechanism when running $\algPerm(G,[n],\bar{x})$.
We denote the vertex output by the former by $\iPerm_1$ and that output by the latter by $\iPerm_2$, so that $\algPermBi(\hat{S},G)=\{\iPerm_1,\iPerm_2\}$.

To show consistency, we suppose that $\delta^-(\hat{S})=\Delta_2$ and observe that, in this case
\[
    \delta^-\big(\{\iPerm_1,\iPerm_2\}\big) \geq \delta^-_{\pi_{<\pred_2}}(\pred_2) + \delta^-_{\bar{\pi}_{<\pred_1}}(\pred_1) = \delta^-(\pred_2)+\delta^-(\pred_1) = \Delta_2,
\]
where the first inequality follows from \Cref{lem:perm-max-indegree-left}, the second one from $\pi_1=\pred_1$ and $\pi_n=\pred_2$, and the third one from the assumption that $\delta^-(\hat{S})=\Delta_2$.
Thus, we conclude that $\frac{1}{\Delta_2}\delta^-(\{\iPerm_1,\iPerm_2\})\geq 1$; i.e., the mechanism is $1$-consistent.

For the robustness guarantee, we let $i^*\in \arg\max\{\delta^-(i): i\in [n]\}$ be a maximum-indegree vertex.
Note that, in particular, this implies that $\delta^-(i^*) \geq \frac{\Delta_2}{2}$.
We can bound the indegree selected by the mechanism as follows:
\[
    \delta^-\big(\{\iPerm_1,\iPerm_2\}\big) \geq \delta^-_{\pi_{<i^*}}(i^*) + \delta^-_{\bar{\pi}_{<i^*}}(i^*) = \delta^-(i^*) \geq \frac{\Delta_2}{2},
\]
where the first inequality follows from \Cref{lem:perm-max-indegree-left}, the second one from the fact that $\pi_{<i^*} \cup \bar{\pi}_{<i^*} = [n]\setminus \{i^*\}$, and the third one from the assumption that $i^*$ is a maximum-indegree vertex.
We conclude that $\frac{1}{\Delta_2}\delta^-(\{\iPerm_1,\iPerm_2\})\geq \frac{1}{2}$; i.e., the mechanism is $\frac{1}{2}$-robust.\qed

\section{Proofs Deferred from \Cref{sec:k-selection}}

\subsection{Proof of \Cref{prop:det-klarge}}\label{app:prop:det-klarge}

We claim the result for the mechanism that, for an input graph $G=([n],E)$ and predicted set \smash{$\hat{S}\in {[n]\choose k}$}, returns \smash{$\{\pred_1,\ldots,\pred_{k-2}\}\cup \algPermBi(\{\pred_{k-1},\pred_k\},G)$}.

Impartiality follows directly from the impartiality of the bidirectional permutation mechanism by \citet{bjelde2017impartial}, since the other $k-2$ selected vertices are fixed predicted vertices, independent of the input graph. 

For the approximation guarantees, we fix a graph $G=([n],E)$ and a predicted set~$\hat{S}\in \binom{[n]}{k}$. 
To show consistency, we suppose that $\delta^-(\hat{S})=\Delta_k$ and observe that the indegree of the set selected by the mechanism is
\[
    \delta^-\big(\{\pred_1,\ldots,\pred_{k-2}\}) + \delta^-\big(\algPermBi(\{\pred_{k-1},\pred_k\},G)\big) \geq \delta^-\big(\{\pred_1,\ldots,\pred_{k}\}) = \delta^-(\hat{S}) = \Delta_k,
\]
where the inequality follows from the $1$-consistency of the bidirectional permutation mechanism, established in \Cref{thm:permutation-bi}.
We conclude that the mechanism is $1$-consistent.
To show robustness, we observe that the indegree of the set selected by the mechanism is
\begin{align*}
    \delta^-\big(\{\pred_1,\ldots,\pred_{k-2}\}) + \delta^-\big(\algPermBi(\{\pred_{k-1},\pred_k\},G)\big) & \geq \delta^-\big(\algPermBi(\{\pred_{k-1},\pred_k\},G)\big)\\ & \geq \frac{1}{2}\max\{\delta^-(S): S\subseteq [n], |S|=2\} \\
    & \geq \frac{1}{2}\cdot \frac{2}{k} \max\{\delta^-(S): S\subseteq [n], |S|=k\} = \frac{1}{k} \Delta_k,
\end{align*}
where the second inequality follows from the $\frac{1}{2}$-robustness of the bidirectional permutation mechanism, established in \Cref{thm:permutation-bi}.
We conclude that the mechanism is $\frac{1}{2}$-robust.\qed

\subsection{Proof of \Cref{thm:kpartition}}\label{app:thm:kpartition}

    Impartiality follows directly from the impartiality of the $k$-partition mechanism, established by \citet{bjelde2017impartial}, as both the placement of vertices in the sets and the internal permutations are independent of the outgoing edges of the vertices.

    To show both guarantees, we fix an arbitrary graph $G=([n],E)$, a value $k\in \{2,\ldots,n-1\}$, and a predicted set $\hat{S}\in {[n]\choose k}$.
    We let $\SPart=\{\iPart_1,\ldots,\iPart_k\}$ denote the set output by the mechanism for this input, where $\iPart_j$ is the vertex selected from each set $A_j$.
    For each $j\in [k]$, we denote by $x^j$ and $\pi^j$ the (random) vector constructed by the mechanism on this set and its associated permutation in $\Pi_{A_j}$, respectively.
    
    For the consistency guarantee, we assume that $\delta^-(\hat{S})=\Delta_k$.
    From \Cref{lem:perm-max-indegree-left}, we know that
    \begin{align}
        \E\big[\delta^-(\SPart)\big] = \sum_{j=1}^{k}\E\big[\delta^-(\iPart_j)\big] & \geq \sum_{j=1}^k \E\Big[\delta^-_{([n]\setminus A_j)\cup \pi^j_{<\pred_j}}(\pred_j)\Big] \nonumber\\
        & = \sum_{j=1}^{k} \sum_{i\in N^-(\pred_j)} \PP\big[i\in ([n]\setminus A_j)\cup \pi^j_{<\pred_j}\big].\label{eq:indegree-predicted-cons}
    \end{align}
    We now observe that, for each $j\in [k]$ and $i\in \hat{S}\setminus \{\pred_j\}$, we have $\PP\big[i\in ([n]\setminus A_j)\cup \pi^j_{<\pred_j}\big]=1$, because each predicted vertex is assigned to a different set.
    Furthermore, for each $j\in [k]$ and $i\in [n]\setminus \hat{S}$,
    \begin{align*}
        \PP\big[i\in ([n]\setminus A_j)\cup \pi^j_{<\pred_j}\big]= \PP[i\in [n]\setminus A_j] +\PP\big[i\in \pi^j_{<\pred_j}\big] & = \frac{k-1}{k}+\frac{1}{k}\PP[x_i<x_{\pred_j}]\\
        &= \frac{k-1}{k}+\frac{\rho}{k} = 1-\frac{1-\rho}{k},
    \end{align*}
    since all vertices in $ [n]\setminus \hat{S}$ are assigned independently and uniformly at random to a set among $A_1,\ldots,A_k$, $x_{\pred_j}=\rho$, and $x_i$ is taken uniformly from the interval $[0,1]$.
    We conclude from the previous inequalities that $\PP\big[i\in ([n]\setminus A_j)\cup \pi^j_{<\pred_j}\big]\geq 1-\frac{1-\rho}{k}$ $j\in [k]$ and every $i\in N^-(\pred_j)$.
    Replacing in inequality \eqref{eq:indegree-predicted-cons}, we conclude that
    \[
        \frac{\E[\delta^-(\SPart)]}{\Delta_k} \geq \frac{1}{\Delta_k}\sum_{j=1}^{k} \bigg( 1-\frac{1-\rho}{k}\bigg) \delta^-(\pred_j) = 1-\frac{1-\rho}{k},
    \]
    i.e., the mechanism is $\big( 1-\frac{1-\rho}{k}\big)$-consistent.
    
    For the robustness guarantee, we fix an optimal set of $k$ agents $S^*\in {[n]\choose k}$ such that $\delta^-(S^*)=\Delta_k$.
    We let $p=|S^*\cap \hat{S}|$ denote the number of optimal vertices among the predicted ones and assume that $p\in \{0,\ldots,k-1\}$, since the guarantee follows trivially from the consistency guarantee when $p=k$.
    We let $j\in [k]$ be an arbitrary index such that $S^*\cap A_j\neq \emptyset$.
    For $i\in S^*\cap A_j$, we let $\Ev_i$ denote the event that $i$ is chosen among vertices $S^*\cap A_j$ when choosing a vertex in this set uniformly at random, and we write $i^*_j$ for the (random) vertex in $A_j\cap S^*$ taken uniformly at random.
    We further denote by $\iPart_j$ the vertex selected by the mechanism in this set.
    From \Cref{lem:perm-max-indegree-left}, we know that
    \begin{equation}
        \E\big[\delta^-(\iPart_j)\big] \geq \E\Big[\delta^-_{([n]\setminus A_j)\cup \pi^j_{<i^*_j}}(i^*_j)\Big] = \sum_{i\in N^-(i^*_j)} \PP\big[i\in ([n]\setminus A_j)\cup \pi^j_{<i^*_j}\big].\label{ineq:lb-selected-indegree-par}
    \end{equation}
    In what follows, we proceed to bound the probabilities on the right-hand side of this inequality for each in-neighbor $i$ of $i^*_j$.
    To do so, we distinguish whether $i^*_j\in \hat{S}$ or not, and whether its in-neighbor $i$ belongs to the optimal set $S^*$, to the predicted set $\hat{S}$, or to neither of them.
    This is necessary because, as we will see, the probability that $i$ belongs to the set $([n]\setminus A_j)\cup \pi^j_{<i^*_j}$ depends on these facts.
    Since $j$ will remain fixed, we write $i^*$, $x$, and $\pi$ instead of $i^*_j$, $x^j$, and $\pi^j$ for compactness.
    
    We first consider the simplest case when $i\in N^-(i^*)$ is such that $i\notin S^*\cup \hat{S}$.
    Since vertices in $[n]\setminus \hat{S}$ are assigned independently and uniformly at random to one of the $k$ sets, and since the event $\Ev_{i^*}$ does not affect the distribution of $i$ due to $i\notin S^*$, we have
    \begin{align}
        \PP[i\in ([n]\setminus A_j)\cup \pi_{<i^*} \mid \Ev_{i^*}] & = \PP[i\in ([n]\setminus A_j)\mid \Ev_{i^*}] + \PP[i\in A_j, x_i<x_{i^*}\mid \Ev_{i^*}]\nonumber\\ 
        & \geq \frac{k-1}{k}+\frac{1}{2k}.\label{eq:prob-observing-1}
    \end{align}
    Indeed, the equality follows directly from the definition of the sets, while the inequality follows from two facts.
    First, we use that the sets to which $i^*$ and $i$ belong distribute independently and uniformly at random because $i\notin S^*\cup \hat{S}$.
    Second, we use that $i^*$ is taken uniformly at random among optimal agents in $A_j$ and $i$'s position in the permutation is taken uniformly at random from $[0,1]$.
    Thus, conditional on $i\in A_j$, we have $x_i<x_{i^*}$ with probability $\frac{1}{2}$ if $i^*\notin \hat{S}$ and with probability $\rho$ if $i^*\in \hat{S}$; the inequality then follows since $\rho\geq \frac{1}{2}$.
    In what follows, we only need to consider cases with $i\in S^*\cup \hat{S}$.

    We next consider the case where $i^*\in \hat{S}$ and start by bounding $\PP[i\in A_j \mid \Ev_{i^*}]$.
    If $i\in \hat{S}\setminus \{i^*\}$, we have that $\PP[i\in A_j \mid \Ev_{i^*}]=0$ since predicted vertices are assigned to different sets.
    If $i\in S^*\setminus \hat{S}$, on the other hand, we have
    \begin{align*}
        & \PP[i\in A_j \mid \Ev_{i^*}] \\
        ={} & \frac{\PP[\Ev_{i^*} \mid i\in A_j]}{\PP[\Ev_{i^*}]} \PP[i\in A_j]\\
        ={} & \frac{\sum_{\ell=0}^{k-p-1} \PP[\Ev_{i^*} \mid i\in A_j, (S^*\setminus (\hat{S}\cup \{i\})) \cap A_j =\ell] \,\PP[(S^*\setminus (\hat{S}\cup \{i\})) \cap A_j =\ell]}{\sum_{\ell=0}^{k-p}\, \PP[\Ev_{i^*} \mid (S^*\setminus \hat{S}) \cap A_j =\ell]\,\PP[(S^*\setminus \hat{S}) \cap A_j =\ell]} \PP[i\in A_j] \\
        ={}& \frac{\E\big[\frac{1}{X+1} \,\big|\, X\geq 1\big]}{\E\big[\frac{1}{X+1} \big]} \PP[i\in A_j]
        \leq \PP[i\in A_j] = \frac{1}{k},
    \end{align*}
    where $X\sim B\big(k-p,\frac{1}{k}\big)$ represents a binomial random variable.
    Indeed, the first equality follows from Bayes rule; the second and third equalities follow from the facts that $i^*\in \hat{S}$ and $i\in S^*\setminus \hat{S}$, the definition of the event $D_{i^*}$, and the fact that the vertices in $[n]\setminus \hat{S}$ are assigned to a set independently and uniformly at random.
    The last equality follows from this last fact as well.
    We conclude that, no matter whether $i$ is in $\hat{S}$ or in $S^*\setminus \hat{S}$, we have $\PP[i\in A_j \mid \Ev_{i^*}]\leq \frac{1}{k}$.
    Therefore,
    \begin{align}
        \PP[i\in ([n]\setminus A_j)\cup \pi_{<i^*} \,|\, \Ev_{i^*}] & = \PP[i\in ([n]\setminus A_j)\,|\,\Ev_{i^*}] + \PP[i\in A_j, x_i<x_{i^*} \,|\, \Ev_{i^*}] \nonumber\\
        &= 1-\PP[i\in A_j \,|\, \Ev_{i^*}] + \PP[x_i<x_{i^*} \,|\, i\in A_j] \,\PP[i\in A_j \,|\, \Ev_{i^*}] \nonumber\\
        &= 1-\PP[i\in A_j \,|\, \Ev_{i^*}] + \rho \,\PP[i\in A_j \,|\, \Ev_{i^*}] \nonumber\\
        &= 1-(1-\rho)\PP[i\in A_j \,|\, \Ev_{i^*}] \geq 1-\frac{1-\rho}{k}.\label{eq:prob-observing-2}
    \end{align}
    Indeed, the second equality holds because the events $x_i<x_{i^*}$ and $\Ev_{i^*}$ are independent since $i^*$ is chosen uniformly at random,  the third one because $x_{i^*}=\rho$ and, conditional on $i\in A_j$, $x_i$ is sampled uniformly in $[0,1]$, and the last inequality because of the previous bound on $\PP[i\in A_j \mid \Ev_{i^*}]$.
    
    In what follows, we consider cases with $i^*\in S^*\setminus \hat{S}$ and $i\in S^*\cup \hat{S}$, and we make use of an explicit expression for $\PP[D_{i^*}]$
    From the way the partition is computed and since $i^*$ is chosen uniformly at random among the vertices in $S^*\cap A_j$, we have
    \begin{align}
        \PP[\Ev_{i^*}] & = \PP[\Ev_{i^*}\mid \pred_j\in S^*]\, \PP[\pred_j\in S^*] + \PP[\Ev_{i^*}\mid \pred_j\notin S^*]\, \PP[\pred_j\notin S^*] \nonumber\\
        & =\frac{p}{k}\sum_{\ell=0}^{k-p-1} \PP[\Ev_{i^*} \mid \pred_j\in S^*, (S^*\setminus (\hat{S}\cup \{i^*\})) \cap A_j =\ell]\, \PP[(S^*\setminus (\hat{S}\cup \{i^*\})) \cap A_j =\ell] \nonumber\\
        & \phantom{{}={}} + \frac{k\!-\!p}{k}\sum_{\ell=0}^{k-p-1} \!\! \PP[\Ev_{i^*} \mid \pred_j \!\notin\! S^*, (S^* \!\setminus\! (\hat{S}\cup \{i^*\})) \cap A_j =\ell]\, \PP[(S^* \!\setminus\! (\hat{S}\cup \{i^*\})) \cap A_j =\ell]\nonumber\\
        & = \frac{p}{k}\sum_{\ell=0}^{k-p-1} \frac{1}{\ell+2} {k-p-1 \choose \ell} \bigg(\frac{1}{k}\bigg)^{\ell}\bigg(\frac{k-1}{k}\bigg)^{k-p-1-\ell} \nonumber \\
        & \phantom{{}={}} + \frac{k-p}{k}\sum_{\ell=0}^{k-p-1} \frac{1}{\ell+1} {k-p-1 \choose \ell} \bigg(\frac{1}{k}\bigg)^{\ell}\bigg(\frac{k-1}{k}\bigg)^{k-p-1-\ell}.\label{eq:prob-main-event}
    \end{align}
    Note that, in particular, we have used for the second equality that the events $(S^*\setminus (\hat{S}\cup \{i^*\})) \cap A_j =\ell$ and $\pred_j\in S^*$ are independent, because vertices in $S^*\setminus \hat{S}$ are assigned to sets independently and uniformly at random.
    We will make use of this expression in the cases that follow.

    For the cases where $i\in S^*\cap \hat{S}$ or $i\in S^*\setminus \hat{S}$, we state the corresponding bounds on the probability $\PP[i\in ([n]\setminus A_j)\cup \pi_{<i^*} \,|\, \Ev_{i^*}]$ in the following claims and defer their respective proofs to Sections~\ref{app:claim:lb-prob-i-opt-pred} and~\ref{app:claim:lb-prob-i-opt-not-pred}.
    \begin{claim}\label{claim:lb-prob-i-opt-pred}
        If $i^*\in S^*\setminus \hat{S}$ and $i\in N^-(i^*)\cap S^*\cap \hat{S}$, then $\PP[i\in ([n]\setminus A_j)\cup \pi_{<i^*} \,|\, \Ev_{i^*}] \geq 1-\frac{\rho}{k}$.
    \end{claim}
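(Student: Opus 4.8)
The plan is to reduce \Cref{claim:lb-prob-i-opt-pred} to the single inequality $\PP[i\in A_j\mid \Ev_{i^*}]\le \frac1k$ and then establish that inequality by a short Bayes'-rule argument. Two facts about the mechanism will be used repeatedly: since $i\in\hat{S}$, the vertex $i$ is the predicted vertex of exactly one of the $k$ sets, each equally likely, so $\PP[i\in A_j]=\frac1k$, and on the event $\{i\in A_j\}$ the vertex $i$ is precisely the predicted vertex of $A_j$, so $x_i=\rho$; moreover $i\ne i^*$ because $i^*\notin\hat{S}$.

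For the reduction, I would write
\[
\PP\bigl[i\in([n]\setminus A_j)\cup\pi_{<i^*}\mid \Ev_{i^*}\bigr]=\PP[i\notin A_j\mid \Ev_{i^*}]+\PP[i\in A_j,\; x_i<x_{i^*}\mid \Ev_{i^*}].
\]
On $\{i\in A_j\}$ we have $x_i=\rho$, while $x_{i^*}$ is uniform on $[0,1]$ (as $i^*\notin\hat{S}$) and independent of the partition and of the uniform choice in $S^*\cap A_j$ defining $\Ev_{i^*}$; hence, conditionally on $\Ev_{i^*}$, the event $\{x_{i^*}>\rho\}$ is independent of $\{i\in A_j\}$, and the second term equals $(1-\rho)\,\PP[i\in A_j\mid \Ev_{i^*}]$. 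Thus the target probability is $1-\rho\,\PP[i\in A_j\mid \Ev_{i^*}]$, and the claim follows provided $\PP[i\in A_j\mid \Ev_{i^*}]\le\frac1k$.

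To prove the latter, Bayes' rule gives $\PP[i\in A_j\mid \Ev_{i^*}]=\frac{\PP[\Ev_{i^*}\mid i\in A_j]}{\PP[\Ev_{i^*}]}\cdot\PP[i\in A_j]$, so with $\PP[i\in A_j]=\frac1k$ it suffices to show $\PP[\Ev_{i^*}\mid i\in A_j]\le\PP[\Ev_{i^*}]$. Conditionally on $i^*\in A_j$ --- which is implied by $\Ev_{i^*}$, occurs with probability $\frac1k$, and is independent of $\{i\in A_j\}$ since $i^*\notin\hat{S}$ --- the size $|S^*\cap A_j|$ equals $1+W+Z$, where $W=|(S^*\setminus(\hat{S}\cup\{i^*\}))\cap A_j|\sim B\bigl(k-p-1,\frac1k\bigr)$ (as in \eqref{eq:prob-main-event}) and $Z=\mathds{1}[\text{the predicted vertex of }A_j\text{ lies in }S^*]$, with $W$ and $Z$ independent; given the partition, $i^*$ is the uniform choice with probability $\frac{1}{1+W+Z}$. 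Conditioning additionally on $\{i\in A_j\}$ pins the predicted vertex of $A_j$ to $i\in S^*\cap\hat{S}$, forcing $Z=1$, while leaving the law of $W$ and the event $\{i^*\in A_j\}$ unchanged (both depend only on the assignment of non-predicted vertices). Since $Z\le1$ implies $\frac{1}{1+W+Z}\ge\frac{1}{2+W}$, taking expectations yields $\PP[\Ev_{i^*}\mid i\in A_j]=\frac1k\E\!\bigl[\frac{1}{2+W}\bigr]\le\frac1k\E\!\bigl[\frac{1}{1+W+Z}\bigr]=\PP[\Ev_{i^*}]$, which completes the argument.

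The main obstacle is precisely this last step: unlike in the mechanism without predictions, $\{i\in A_j\}$ and $\Ev_{i^*}$ are not independent, so a priori conditioning on $\Ev_{i^*}$ could inflate $\PP[i\in A_j]$ beyond $\frac1k$. The point is that the correlation runs in the favourable direction --- placing the optimal predicted vertex $i$ in $A_j$ only enlarges $S^*\cap A_j$, which makes the uniformly random representative of $A_j$ less, not more, likely to be $i^*$ --- and the decomposition $|S^*\cap A_j|=1+W+Z$ is what makes this quantitative.
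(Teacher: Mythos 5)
Your proof is correct and takes essentially the same route as the paper's: the identical reduction of the target probability to $1-\rho\,\PP[i\in A_j\mid \Ev_{i^*}]$ (using $x_i=\rho$ on $\{i\in A_j\}$ and the uniformity of $x_{i^*}$), followed by the same Bayes'-rule comparison of $\PP[\Ev_{i^*}\mid i\in A_j]$ with $\PP[\Ev_{i^*}]$. The paper performs that comparison by writing both quantities as explicit binomial sums and observing termwise that $\frac{1}{\ell+1}\geq\frac{1}{\ell+2}$; your decomposition $|S^*\cap A_j|=1+W+Z$ with $Z\leq 1$ encodes exactly the same inequality, only more compactly.
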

    \begin{claim}\label{claim:lb-prob-i-opt-not-pred}
        If $i^*\in S^*\setminus \hat{S}$ and $i\in (N^-(i^*)\cap S^*)\setminus \hat{S}$, then $\PP[i\in ([n]\setminus A_j)\cup \pi_{<i^*} \,|\, \Ev_{i^*}] \geq 1-\frac{1}{2k}$.
    \end{claim}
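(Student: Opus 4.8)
The plan is to reduce the claim, as in the preceding case, to the single inequality $\PP[i\in A_j\mid \Ev_{i^*}]\le \tfrac1k$, which here expresses that placing a second optimal vertex into $A_j$ cannot help $i^*$ be the vertex sampled from $S^*\cap A_j$. First I would write
\[
    \PP[i\in ([n]\setminus A_j)\cup \pi_{<i^*}\mid \Ev_{i^*}]
    = 1-\PP[i\in A_j\mid \Ev_{i^*}] + \PP[i\in A_j,\, x_i<x_{i^*}\mid \Ev_{i^*}].
\]
Because $i,i^*\in S^*\setminus\hat S$, the positions $x_i$ and $x_{i^*}$ are independent uniform draws from $[0,1]$ that are independent of $\Ev_{i^*}$ (which depends only on the assignment of vertices to $A_1,\dots,A_k$ and on the uniform choice within $S^*\cap A_j$), so $\PP[x_i<x_{i^*}\mid i\in A_j,\,\Ev_{i^*}]=\tfrac12$ and the right-hand side equals $1-\tfrac12\PP[i\in A_j\mid \Ev_{i^*}]$. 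Hence it suffices to show $\PP[i\in A_j\mid \Ev_{i^*}]\le\tfrac1k$.

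For this inequality I would condition on $\mathcal F$, the assignment to sets of every non-predicted vertex other than $i$. Given $\mathcal F$, the membership $i^*\in A_j$ and the number $c:=|S^*\cap A_j\setminus\{i\}|$ are determined, while $i$ still joins $A_j$ with probability $\tfrac1k$ independently of $\mathcal F$. If $i^*\notin A_j$ then $\PP[\Ev_{i^*}\mid\mathcal F]=0=\PP[i\in A_j,\Ev_{i^*}\mid\mathcal F]$; otherwise $c\ge 1$ and
\[
    \PP[\Ev_{i^*}\mid\mathcal F]=\frac1k\cdot\frac1{c+1}+\frac{k-1}{k}\cdot\frac1c,
    \qquad
    \PP[i\in A_j,\,\Ev_{i^*}\mid\mathcal F]=\frac1k\cdot\frac1{c+1},
\]
where the first term of each expression is the contribution of $i\in A_j$ (so that $|S^*\cap A_j|=c+1$) and the second, on the left, that of $i\notin A_j$. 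Dividing gives $\PP[i\in A_j\mid\Ev_{i^*},\mathcal F]=\tfrac{c}{c+(k-1)(c+1)}=\tfrac{c}{kc+k-1}\le\tfrac1k$, hence $\PP[i\in A_j,\Ev_{i^*}\mid\mathcal F]\le\tfrac1k\,\PP[\Ev_{i^*}\mid\mathcal F]$ for every $\mathcal F$; taking expectations over $\mathcal F$ yields $\PP[i\in A_j\mid\Ev_{i^*}]\le\tfrac1k$, as required. Alternatively one can recompute $\PP[\Ev_{i^*}\mid i\in A_j]$ by the case analysis behind \eqref{eq:prob-main-event}, now over $k-p-2$ free optimal vertices and with every set size raised by one, and compare it termwise with \eqref{eq:prob-main-event} by writing a $B(k-p-1,\tfrac1k)$ variable as $B(k-p-2,\tfrac1k)$ plus an independent $\mathrm{Bernoulli}(\tfrac1k)$ and using $\tfrac1{m+1}\ge\tfrac1{m+2}$; the conditioning argument above is shorter and avoids this bookkeeping.

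The main obstacle is precisely the comparison $\PP[i\in A_j\mid\Ev_{i^*}]\le\tfrac1k$: the events ``$i^*$ is sampled from $S^*\cap A_j$'' and ``$i\in A_j$'' are genuinely correlated because both concern optimal vertices, and — unlike in the case $i^*\in\hat S$ — this correlation is not neutralized by the fixed position of a predicted vertex, so one has to argue directly that the correlation has the favourable sign. The probability split in the first paragraph and the evaluation of the conditional probabilities given $\mathcal F$ are then routine.
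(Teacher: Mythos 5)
Your proof is correct, and its outer shell is the same as the paper's: both arguments split the target probability as $1-\tfrac{1}{2}\PP[i\in A_j\mid \Ev_{i^*}]$ (using that $x_i$ and $x_{i^*}$ are i.i.d.\ uniform and independent of $\Ev_{i^*}$, so the tie-breaking probability is exactly $\tfrac12$) and reduce everything to the single inequality $\PP[i\in A_j\mid \Ev_{i^*}]\le\tfrac1k$. Where you genuinely diverge is in how that inequality is proved. The paper expands both $\PP[\Ev_{i^*}\mid i\in A_j]$ and $\PP[\Ev_{i^*}]$ as explicit mixtures $\tfrac{p}{k}\E\bigl[\tfrac{1}{X+2}\bigr]+\tfrac{k-p}{k}\E\bigl[\tfrac{1}{X+1}\bigr]$ over a binomial $X$, reusing the case analysis behind \eqref{eq:prob-main-event}, and then compares the two mixtures via inequalities of the form $\E\bigl[\tfrac{1}{X+c}\mid X\ge 1\bigr]\le\E\bigl[\tfrac{1}{X+c}\bigr]$. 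You instead condition on the assignment $\mathcal F$ of all vertices other than $i$, compute the conditional probability in closed form as $\tfrac{c}{kc+k-1}\le\tfrac1k$ whenever $i^*\in A_j$ (and note both sides vanish otherwise), and average over $\mathcal F$. This is a cleaner and more elementary route: it eliminates the binomial bookkeeping and the case split on whether $\pred_j\in S^*$, it makes the favourable sign of the correlation visible pointwise rather than only in aggregate, and it sidesteps any need to match the distribution of the ``remaining'' optimal vertices under the conditioning $i\in A_j$ with a conditioned binomial. What the paper's version buys is economy of means at the level of the whole proof of \Cref{thm:kpartition}: the explicit formula for $\PP[\Ev_{i^*}]$ must be derived anyway for \Cref{claim:prob-main-event-nicer}, so reusing the same expansion here costs little. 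Your computation checks out in every step, including the inequality $\tfrac{c}{kc+k-1}\le\tfrac1k$ for all $c\ge 1$, and your sketched ``alternative'' via a $B(k-p-2,\tfrac1k)$ plus an independent Bernoulli is also a valid way to carry out the paper's termwise comparison.
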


    We finally consider the case with $i^*\in S^*\setminus \hat{S}$ and $i\in \hat{S}\setminus S^*$.
   Since $\PP[\Ev_{i^*} \mid i\in A_j]$ cannot be bounded from above by $\PP[\Ev_{i^*}]$, we proceed by directly computing a lower bound on $\PP[\Ev_{i^*}]$.
    We start by computing both sums on the right-hand side of equality \eqref{eq:prob-main-event} to reach a simpler expression for this probability.
    The following claim states the result of this computation, which can be found in Section~\ref{app:claim:prob-main-event-nicer}.
    \begin{claim}\label{claim:prob-main-event-nicer}
        $\PP[\Ev_{i^*}] = \frac{k(k-2p+1) - (k^2+k-3pk+p^2)\big(\frac{k-1}{k}\big)^{k-p}}{(k-p+1)(k-p)}$.
    \end{claim}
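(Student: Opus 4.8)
The plan is to evaluate the two sums on the right-hand side of \eqref{eq:prob-main-event} in closed form and then simplify. Throughout, abbreviate $q=k-p$ (so the summation index runs from $\ell=0$ to $q-1$), and write $a=\frac{1}{k}$, $b=\frac{k-1}{k}$, noting $a+b=1$ and $b^q=\bigl(\frac{k-1}{k}\bigr)^{k-p}$. The whole argument is a computation; the one device needed is a way to handle the factors $\frac{1}{\ell+1}$ and $\frac{1}{\ell+2}$ multiplying the binomial terms.

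First I would use the integral representations $\frac{1}{\ell+1}=\int_0^1 t^\ell\,dt$ and $\frac{1}{\ell+2}=\int_0^1 t^{\ell+1}\,dt$, pull the sum inside, and apply the binomial theorem to obtain
$\sum_{\ell=0}^{q-1}\frac{1}{\ell+1}\binom{q-1}{\ell}a^\ell b^{q-1-\ell}=\int_0^1(at+b)^{q-1}\,dt$
and
$\sum_{\ell=0}^{q-1}\frac{1}{\ell+2}\binom{q-1}{\ell}a^\ell b^{q-1-\ell}=\int_0^1 t\,(at+b)^{q-1}\,dt$.
Both integrals are evaluated by the substitution $u=at+b$, which runs from $b$ to $a+b=1$; this gives the first sum as $\frac{1-b^{q}}{aq}$ and the second as $\frac{1}{a^2}\bigl(\frac{1-b^{q+1}}{q+1}-b\cdot\frac{1-b^{q}}{q}\bigr)$. (Alternatively one can avoid integrals entirely via the identities $\frac{1}{\ell+1}\binom{q-1}{\ell}=\frac{1}{q}\binom{q}{\ell+1}$ and an index shift, and a similar manipulation for the $\frac{1}{\ell+2}$ term.)

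Next I would substitute $a=\frac1k$, $b=\frac{k-1}{k}$ back in, so the two sums become $\frac{k(1-b^q)}{q}$ and $k^2\bigl(\frac{1-b^{q+1}}{q+1}-b\cdot\frac{1-b^q}{q}\bigr)$, and combine them with the weights $\frac{p}{k}$ and $\frac{k-p}{k}=\frac{q}{k}$ from \eqref{eq:prob-main-event}. This yields
$\PP[\Ev_{i^*}]=(1-b^q)+pk\Bigl(\frac{1-b^{q+1}}{q+1}-b\cdot\frac{1-b^q}{q}\Bigr)$.
Multiplying through by $q(q+1)$ and using $b^{q+1}=b\,b^q$, $pkb=p(k-1)$, the terms collapse: the part free of $b^q$ becomes $q(q+1)+p\bigl(kq-(k-1)(q+1)\bigr)=q(q+1)+p(1-p)$, which equals $k(k-2p+1)$ since $q=k-p$; and the coefficient of $b^q$ becomes $p(k-1)-q(q+1)=-(k^2+k-3pk+p^2)$, again by expanding $q(q+1)=(k-p)(k-p+1)$. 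Dividing back by $q(q+1)=(k-p+1)(k-p)$ gives exactly the claimed identity.

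I do not expect a genuine obstacle here: the result is a closed-form evaluation, and the only insight is recognizing that $\sum_\ell \frac{1}{\ell+c}\binom{q-1}{\ell}a^\ell b^{q-1-\ell}$ is an incomplete-beta-type quantity that the substitution $u=at+b$ turns into elementary powers of $b$. The remaining work is routine polynomial algebra in $k$ and $p$; the two identities $q(q+1)+p(1-p)=k(k-2p+1)$ and $q(q+1)-p(k-1)=k^2+k-3pk+p^2$ (with $q=k-p$) are the only nontrivial-looking simplifications, and both are immediate upon expansion.
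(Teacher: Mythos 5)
Your computation is correct and arrives at the stated identity. The overall structure is the same as the paper's: both start from \eqref{eq:prob-main-event} and evaluate the two weighted binomial sums in closed form before simplifying. The only real difference is the device used to handle the factors $\frac{1}{\ell+1}$ and $\frac{1}{\ell+2}$: the paper absorbs them into the binomial coefficients via $\frac{1}{\ell+1}\binom{q-1}{\ell}=\frac{1}{q}\binom{q}{\ell+1}$ together with the split $\frac{1}{\ell+2}=\frac{1}{\ell+1}-\frac{1}{(\ell+1)(\ell+2)}$ and index shifts, whereas you use the Beta-integral representation $\frac{1}{\ell+c}=\int_0^1 t^{\ell+c-1}\,dt$ and the substitution $u=at+b$. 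Your route evaluates both sums uniformly with a single mechanism and avoids the partial-fraction split, at the cost of introducing an (elementary) integral; the paper's route is purely finitary. You even note the paper's identities as an alternative, so the two arguments are interchangeable. I verified the intermediate expressions: your second sum $k^2\bigl(\frac{1-b^{q+1}}{q+1}-b\,\frac{1-b^q}{q}\bigr)$ agrees with the paper's $\frac{k}{(k-p+1)(k-p)}\bigl((k-1)\bigl(\frac{k-1}{k}\bigr)^{k-p}-(p-1)\bigr)$, and the final polynomial identities $q(q+1)+p(1-p)=k(k-2p+1)$ and $p(k-1)-q(q+1)=-(k^2+k-3pk+p^2)$ both check out.
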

    
    The following claim allows us to compute a lower bound for the expression we have computed for $\PP[\Ev_{i^*}]$.
    \begin{claim}\label{claim:function-decreasing}
        For any $k\in \N$, the function $g_k\colon \{0,\ldots,k-1\}\to \R$ defined as
        \[
            g_k(p) = \frac{k(k-2p+1) - (k^2+k-3pk+p^2)\big(\frac{k-1}{k}\big)^{k-p}}{(k-p+1)(k-p)}
        \]
        is non-increasing in $p$.
    \end{claim}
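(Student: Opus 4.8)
The plan is to reduce the claim, via the substitution $q=k-p$, to a one-line polynomial inequality. Since the domain is a single point when $k=1$, I may assume $k\ge 2$ and fix it. Writing $q=k-p\in\{1,\dots,k\}$, $t=\frac{k-1}{k}$, $A(q)=2kq-k(k-1)$ and $B(q)=q^2+kq-k(k-1)$, a short computation (using $k-2p+1=2q-k+1$, $k^2+k-3pk+p^2=B(q)$, $(k-p+1)(k-p)=q(q+1)$) shows that $g_k(p)=\frac{A(q)-B(q)\,t^{q}}{q(q+1)}=:h(q)$, so that ``$g_k$ non-increasing in $p$'' is equivalent to ``$h$ non-decreasing in $q$'', i.e.\ to $h(q+1)\ge h(q)$ for every $q\in\{1,\dots,k-1\}$.

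To prove the latter I would clear denominators: since $q(q+1)(q+2)>0$, the inequality $h(q+1)\ge h(q)$ is equivalent to $q\bigl[A(q+1)-B(q+1)t^{q+1}\bigr]\ge(q+2)\bigl[A(q)-B(q)t^{q}\bigr]$. The ``linear'' part simplifies cleanly to $qA(q+1)-(q+2)A(q)=2k(k-1-q)$, and, using $t^{q+1}=t\,t^{q}$ and $B(q+1)=B(q)+k+2q+1$, the ``exponential'' part collapses to $t^{q}\bigl[(q+2)B(q)-q\,t\,B(q+1)\bigr]=-\tfrac{1}{k}\,t^{q}D(q)$ with $D(q)=q(k-1)(k+2q+1)-(q+2k)B(q)$. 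Hence the target inequality becomes $2k^{2}(k-1-q)\ge t^{q}D(q)$.

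The key computation, and the step I expect to be the main obstacle, is to expand $D(q)$ and recognize that it factors as
\[
D(q)=(k-1-q)\bigl(q^{2}+(2k+1)q+2k^{2}\bigr).
\]
Granting this, $2k^{2}(k-1-q)-t^{q}D(q)=(k-1-q)\bigl(2k^{2}-t^{q}(q^{2}+(2k+1)q+2k^{2})\bigr)$, and the first factor is $\ge 0$ because $q\le k-1$. For the second factor, set $P(q)=q^{2}+(2k+1)q+2k^{2}$ and $\phi(q)=t^{q}P(q)$; since $\phi(0)=P(0)=2k^{2}$, it suffices to show $\phi$ is non-increasing on $\{0,1,2,\dots\}$. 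Dividing $\phi(q+1)\le\phi(q)$ by $t^{q}>0$ and multiplying by $k$, this becomes $(k-1)P(q+1)\le kP(q)$; using $P(q+1)=P(q)+2(q+k+1)$ it rearranges to $P(q)\ge 2(k-1)(q+k+1)$, and $P(q)-2(k-1)(q+k+1)=q^{2}+3q+2=(q+1)(q+2)\ge 0$. Hence $\phi(q)\le 2k^{2}$ for all $q\ge 0$, which finishes the proof. The only non-routine part is spotting the factorization of $D(q)$; the reductions above are pure bookkeeping, and the monotonicity of $\phi$ falls out of an elementary quadratic being nonnegative.
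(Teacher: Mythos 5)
Your proof is correct and, after the reindexing $q=k-p$, follows essentially the same route as the paper: both compute the consecutive difference, extract the common factor ($p$ in the paper, $k-1-q$ in yours), and reduce to the same polynomial--exponential inequality, which both arguments settle by a second telescoping step anchored at the endpoint where equality holds (your $(q+1)(q+2)\ge 0$ is literally the paper's $(k-p)(k-p+1)\ge 0$). The factorization of $D(q)$ that you flag as the main obstacle is exactly the simplification the paper performs when collapsing $g_k(p)-g_k(p+1)$ to a single product.
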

    The proof of this claim is deferred to Section~\ref{app:claim:function-decreasing}. 
    It proceeds by showing that the expression $g_k(p)-g_k(p+1)$ is non-negative for $p\in \{0,\ldots,k-2\}$.

    Equipped with the previous claims, we can now bound $\PP[\Ev_{i^*}]$ by the value of $g_k$ at $p=k-1$. Indeed, we conclude from \Cref{claim:prob-main-event-nicer} and \Cref{claim:function-decreasing} that
    \begin{align*}
        \PP[\Ev_{i^*}] \geq g(k-1)& = \frac{k(k-2(k-1)+1) - (k^2+k-3k(k-1)+(k-1)^2)\big(\frac{k-1}{k}\big)}{2}\\
        & = \frac{k^2(-k+3)-(k-1)(-k^2+2k+1)}{2k}
        = \frac{k+1}{2k},
    \end{align*}
    and proceed as in the other cases.
    Specifically, we observe that
    \[
        \PP[i\in A_j \mid \Ev_{i^*}] =\frac{\PP[\Ev_{i^*} \mid i\in A_j]}{\PP[\Ev_{i^*}]} \PP[i\in A_j] \leq \frac{1}{\frac{k+1}{2k}} \PP[i\in A_j] = \frac{2k}{k+1}\cdot \frac{1}{k} = \frac{2}{k+1},
    \]
    where the first equality follows from Bayes' rule, the inequality from the previous bound on $\PP[\Ev_{i^*}]$, and the second equality from the fact that $i$ is assigned to a set uniformly at random.
    We obtain
    \begin{align}
        \PP[i\in ([n]\setminus A_j)\cup \pi_{<i^*} \,|\, \Ev_{i^*}] & = \PP[i\in ([n]\setminus A_j)\,|\,\Ev_{i^*}] + \PP[i\in A_j, x_i<x_{i^*} \,|\, \Ev_{i^*}] \nonumber\\
        &= 1-\PP[i\in A_j \,|\, \Ev_{i^*}] + \PP[x_i<x_{i^*} \,|\, i\in A_j] \, \PP[i\in A_j \,|\, \Ev_{i^*}] \nonumber\\
        &= 1-\PP[i\in A_j \,|\, \Ev_{i^*}] + (1-\rho) \PP[i\in A_j \,|\, \Ev_{i^*}] \nonumber\\
        &= 1-\rho\, \PP[i\in A_j \,|\, \Ev_{i^*}] \geq 1-\frac{2\rho}{k+1}.\label{eq:prob-observing-5}
    \end{align}
    Indeed, the second equality holds because the events $x_i<x_{i^*}$ and $\Ev_{i^*}$ are independent, the third one because $x_{i}=\rho$ and $x_{i^*}$ is sampled independently and uniformly in $[0,1]$, and the last inequality because of the previous bound on $\PP[i\in A_j \mid \Ev_{i^*}]$.

    We can now combine all lower bounds on $\PP[i\in ([n]\setminus A_j)\cup \pi_{<i^*} \,|\, \Ev_{i^*}]$ we have computed to conclude the robustness guarantee in the statement. 
    Specifically, by combining inequalities \eqref{eq:prob-observing-1}, \eqref{eq:prob-observing-2}, and \eqref{eq:prob-observing-5}, as well as \Cref{claim:lb-prob-i-opt-pred} and \Cref{claim:lb-prob-i-opt-not-pred}, we obtain
    \[
        \PP[i\in ([n]\setminus A_j)\cup \pi_{<i^*} \,|\, \Ev_{i^*}]\geq \min\bigg\{1-\frac{1}{2k},\, 1-\frac{1-\rho}{k},\, 1-\frac{\rho}{k},\, 1-\frac{2\rho}{k+1}\bigg\} = 1-\frac{2\rho}{k+1},
    \]
    where we used that $\rho \geq \frac{1}{2}$ and $k\geq 2$.
    Replacing in inequality \eqref{ineq:lb-selected-indegree-par}, we obtain
    \[
        \E\big[\delta^-(\iPart_j)\big] \geq 
        \frac{1}{k} \sum_{i^*\in S^*} \sum_{i\in N^-(i^*)} \PP\big[i\in ([n]\setminus A_j)\cup \pi^j_{<i^*} \mid D_{i^*} \big] \geq 
        \bigg(1-\frac{2\rho}{k+1}\bigg) \frac{\Delta_k}{k},
    \]
    where the first inequality holds since $i^*$ distributes uniformly among all vertices in $S^*$.
    Finally, since the previous analysis is valid for all $j\in [k]$ with $S^*\cap A_j\neq \emptyset$, we conclude that
    \begin{align*}
        \frac{\E[\delta^-(\SPart)]}{\Delta_k} &\geq \frac{1}{\Delta_k} \sum_{j=1}^{k} \E\big[\delta^-(\iPart_j)\big]\, \PP[S^*\cap A_j\neq \emptyset] \\
        & = \bigg(1-\frac{2\rho}{k+1}\bigg) \sum_{\ell=1}^{k} \PP[|S^*\cap A_j| = \ell] \\
        & = \bigg(1-\frac{2\rho}{k+1}\bigg) \sum_{\ell=1}^{k} {k\choose \ell} \bigg(\frac{1}{k}\bigg)^\ell \bigg(\frac{k-1}{k}\bigg)^{k-\ell}\\
        & = \bigg(1-\frac{2\rho}{k+1}\bigg) \bigg(1-\bigg(\frac{k-1}{k}\bigg)^k\bigg).\qed
    \end{align*}

\subsubsection{Proof of \Cref{claim:lb-prob-i-opt-pred}}\label{app:claim:lb-prob-i-opt-pred}

We consider $i^*\in S^*\setminus \hat{S}$ and $i\in N^-(i^*)$ with $i\in S^*\cap \hat{S}$.
    We observe that
    \begin{align*}
        & \PP[i\in A_j \mid \Ev_{i^*}] \\
        ={} & \frac{\PP[\Ev_{i^*} \mid i\in A_j]}{\PP[\Ev_{i^*}]} \PP[i\in A_j]\\
        = {} & \frac{\sum_{\ell=0}^{k-p-1} \PP[\Ev_{i^*} \mid i \!\in\! A_j, (S^*\setminus (\hat{S}\cup \{i^*\})) \cap A_j =\ell] \,\PP[(S^*\setminus (\hat{S}\cup \{i^*\})) \cap A_j =\ell]}{\PP[\Ev_{i^*}]} \PP[i\in A_j] \\ 
        ={} & \frac{\sum_{\ell=0}^{k-p -1} \frac{1}{\ell+2} {k-p-1 \choose \ell} \big(\frac{1}{k}\big)^{\ell}\big(\frac{k-1}{k}\big)^{k-p-1-\ell} \cdot \PP[i\in A_j]}{\frac{p}{k}\sum_{\ell=0}^{k-p-1} \frac{1}{\ell+2} {k-p-1 \choose \ell} \big(\frac{1}{k}\big)^{\ell}\big(\frac{k-1}{k}\big)^{k - p - 1 -\ell} + \frac{k-p}{k}\sum_{\ell=0}^{k-p-1} \frac{1}{\ell+1} {k-p-1 \choose \ell} \big(\frac{1}{k}\big)^{\ell}\big(\frac{k-1}{k}\big)^{k-p-1-\ell}} \\
        \leq {} &  \PP[i\in A_j] = \frac{1}{k},
    \end{align*}
    Similar to previous cases addressed in the proof of \Cref{thm:kpartition}, the first equality follows from Bayes' rule.
    The second and third equalities now follow from the facts that $i^*\notin \hat{S}$ and $i\in S^*$, the fact that the vertices in $[n]\setminus \hat{S}$ are assigned to a set independently and uniformly at random, and equality \eqref{eq:prob-main-event}.
    The last equality follows from this last fact as well.
    The inequality is now straightforward, as every term in the second sum in the denominator is larger than every term in the other sum since $\frac{1}{\ell+1} \geq \frac{1}{\ell+2}$, and the expression in the denominator is a convex combination of both sums.
    We conclude that $\PP[i\in A_j \mid \Ev_{i^*}]\leq \frac{1}{k}$.
    Therefore,
    \begin{align*}
        \PP[i\in ([n]\setminus A_j)\cup \pi_{<i^*} \,|\, \Ev_{i^*}] & = \PP[i\in ([n]\setminus A_j)\,|\,\Ev_{i^*}] + \PP[i\in A_j, x_i<x_{i^*} \,|\, \Ev_{i^*}] \\
        &= 1-\PP[i\in A_j \,|\, \Ev_{i^*}] + \PP[x_i<x_{i^*} \,|\, i\in A_j] \,\PP[i\in A_j \,|\, \Ev_{i^*}] \\
        &= 1-\PP[i\in A_j \,|\, \Ev_{i^*}] + (1-\rho) \PP[i\in A_j \,|\, \Ev_{i^*}]\\
        &= 1-\rho\, \PP[i\in A_j \,|\, \Ev_{i^*}] \geq 1-\frac{\rho}{k}.
    \end{align*}
    As in previous cases, the second equality holds because the events $x_i<x_{i^*}$ and $\Ev_{i^*}$ are independent since $i^*$ is chosen uniformly at random, the third one because $x_{i}=\rho$ and $x_{i^*}$ is sampled uniformly in $[0,1]$, and the last inequality because of the previous bound on $\PP[i\in A_j \mid \Ev_{i^*}]$.\qed

\subsubsection{Proof of \Cref{claim:lb-prob-i-opt-not-pred}}\label{app:claim:lb-prob-i-opt-not-pred}

We consider $i^*\in S^*\setminus \hat{S}$ and $i\in N^-(i^*)$ such that $i\in S^*\setminus \hat{S}$.
    We let $X\sim B\big(k-p-1,\frac{1}{k}\big)$ represent a binomial random variable and first observe that, similarly to inequality~\eqref{eq:prob-main-event}, we have
    \begin{align*}
        & \PP[\Ev_{i^*} \mid i\in A_j] \\
        = {} & \PP[\Ev_{i^*}\mid i\in A_j, \pred_j\in S^*]\, \PP[\pred_j\in S^*] + \PP[\Ev_{i^*}\mid i\in A_j, \pred_j\notin S^*]\, \PP[\pred_j\notin S^*] \\
        ={} & \frac{p}{k}\sum_{\ell=0}^{k-p-2} \PP[\Ev_{i^*} \mid i \!\in\! A_j, \pred_j \!\in\! S^*, (S^*\setminus (\hat{S}\cup \{i,i^*\})) \cap A_j =\ell]\, \PP[(S^*\setminus (\hat{S}\cup \{i,i^*\})) \cap A_j =\ell] \\
        & + \frac{k-p}{k}\sum_{\ell=0}^{k-p-2} \PP[\Ev_{i^*} \mid i\in A_j, \pred_j\notin S^*, (S^*\setminus (\hat{S}\cup \{i,i^*\})) \cap A_j =\ell]\, \cdot \\
        & \qquad\qquad\qquad\qquad \PP[(S^*\setminus (\hat{S}\cup \{i,i^*\})) \cap A_j =\ell]\\
        ={} & \frac{p}{k}\E\bigg[\frac{1}{X+2} \,\Big|\, X\geq 1\bigg] + \frac{k-p}{k}\E\bigg[\frac{1}{X+1} \,\Big|\, X\geq 1\bigg].
    \end{align*}
    Indeed, this follows from the way the partition is computed, the fact that $i^*$ is chosen uniformly at random among the vertices in $S^*\cap A_j$, the independence between the events $i\in A_j$ and $\pred_j\in S^*$ due to $i\notin \hat{S}$, and the independence between the events $(S^*\setminus (\hat{S}\cup \{i,i^*\})) \cap A_j =\ell$ and $\pred_j\in S^*$ are independent because vertices in $S^*\setminus \hat{S}$ are assigned to sets independently and uniformly at random.
    This implies
    \begin{align*}
        \PP[i\in A_j \mid \Ev_{i^*}] & = \frac{\PP[\Ev_{i^*} \mid i\in A_j]}{\PP[\Ev_{i^*}]} \PP[i\in A_j]\\
        & =\frac{\frac{p}{k}\E\big[\frac{1}{X+2} \,\big|\, X\geq 1\big] + \frac{k-p}{k}\E\big[\frac{1}{X+1} \,\big|\, X\geq 1\big]}{\frac{p}{k}\E\big[\frac{1}{X+2} \big]+\frac{k-p}{k}\E\big[\frac{1}{X+1}\big]} \PP[i\in A_j] \leq \PP[i\in A_j] = \frac{1}{k},
    \end{align*}
    Indeed, the first equality follows from Bayes' rule and the last equality from the fact that vertices in $[n]\setminus \hat{S}$ are assigned to sets uniformly at random.
    The second equality now follows from the previous bound on $\PP[\Ev_{i^*} \mid i\in A_j]$ and from equality \eqref{eq:prob-main-event}.
    The inequality holds because $\E\big[\frac{1}{X+2} \,\big|\, X\geq 1\big] \leq E\big[\frac{1}{X+2} \big]$ and $\E\big[\frac{1}{X+1} \,\big|\, X\geq 1\big] \leq E\big[\frac{1}{X+1} \big]$.
    We conclude that $\PP[i\in A_j \mid \Ev_{i^*}]\leq \frac{1}{k}$, and thus,
    \begin{align*}
        \PP[i\in ([n]\setminus A_j)\cup \pi_{<i^*} \,|\, \Ev_{i^*}] & = \PP[i\in ([n]\setminus A_j)\,|\,\Ev_{i^*}] + \PP[i\in A_j, x_i<x_{i^*} \,|\, \Ev_{i^*}] \\
        &= 1-\PP[i\in A_j \,|\, \Ev_{i^*}] + \PP[x_i<x_{i^*} \,|\, i\in A_j] \, \PP[i\in A_j \,|\, \Ev_{i^*}] \\
        &= 1-\PP[i\in A_j \,|\, \Ev_{i^*}] + \frac{1}{2} \PP[i\in A_j \,|\, \Ev_{i^*}] \\
        &= 1-\frac{1}{2} \PP[i\in A_j \,|\, \Ev_{i^*}] \geq 1-\frac{1}{2k}.
    \end{align*}
    The second equality holds because the events $x_i<x_{i^*}$ and $\Ev_{i^*}$ are independent since $i^*$ is chosen uniformly at random, the third one because both $x_{i}$ and $x_{i^*}$ are sampled independently and uniformly in $[0,1]$, and the last inequality because of the previous bound on $\PP[i\in A_j \mid \Ev_{i^*}]$.
    \qed

\subsubsection{Proof of \Cref{claim:prob-main-event-nicer}}\label{app:claim:prob-main-event-nicer}

From equality \eqref{eq:prob-main-event}, we know that
\begin{align*}
\PP[\Ev_{i^*}] & = \frac{p}{k}\sum_{\ell=0}^{k-p-1} \frac{1}{\ell+2} {k-p-1 \choose \ell} \bigg(\frac{1}{k}\bigg)^{\ell}\bigg(\frac{k-1}{k}\bigg)^{k-p-1-\ell} \nonumber \\
        & \phantom{{}={}} + \frac{k-p}{k}\sum_{\ell=0}^{k-p-1} \frac{1}{\ell+1} {k-p-1 \choose \ell} \bigg(\frac{1}{k}\bigg)^{\ell}\bigg(\frac{k-1}{k}\bigg)^{k-p-1-\ell}.
\end{align*}
In what follows, we compute both terms on the right-hand side of this equality to conclude the equality in the statement.

For the second term, we observe that
    \allowdisplaybreaks
    \begin{align}
        & \sum_{\ell=0}^{k-p-1} \frac{1}{\ell+1} {k-p-1 \choose \ell} \bigg(\frac{1}{k}\bigg)^{\ell}\bigg(\frac{k-1}{k}\bigg)^{k-p-1-\ell}\nonumber\\
        = {} & \sum_{\ell=0}^{k-p-1} \frac{1}{\ell+1} \cdot \frac{(k-p-1)!}{\ell!(k-p-1-\ell)!}\bigg(\frac{1}{k}\bigg)^{\ell}\bigg(\frac{k-1}{k}\bigg)^{k-p-1-\ell}\nonumber\\
        = {} & \frac{k}{k-p} \sum_{\ell=0}^{k-p-1} \frac{(k-p)!}{(\ell+1)!(k-p-1-\ell)!}\bigg(\frac{1}{k}\bigg)^{\ell+1}\bigg(\frac{k-1}{k}\bigg)^{k-p-1-\ell}\nonumber\\
        = {} & \frac{k}{k-p} \sum_{\ell=0}^{k-p-1} {k-p \choose \ell+1}\bigg(\frac{1}{k}\bigg)^{\ell+1}\bigg(\frac{k-1}{k}\bigg)^{k-p-(\ell+1)}\nonumber\\ 
        = {} & \frac{k}{k-p} \sum_{\ell=1}^{k-p} {k-p \choose \ell}\bigg(\frac{1}{k}\bigg)^{\ell}\bigg(\frac{k-1}{k}\bigg)^{k-p-\ell}\nonumber\\ 
        = {} & \frac{k}{k-p} \bigg( 1-\bigg(\frac{k-1}{k}\bigg)^{k-p}\bigg).\label{eq:sum-second}
    \end{align}
    For the first term, the computation is similar but more demanding. We start replacing $\frac{1}{\ell+2}$ by $\frac{1}{\ell+1}-\frac{1}{(\ell+2)(\ell+1)}$ to obtain
    \allowdisplaybreaks
    \begin{align}
        & \sum_{\ell=0}^{k-p-1} \frac{1}{\ell+2} {k-p-1 \choose \ell} \bigg(\frac{1}{k}\bigg)^{\ell}\bigg(\frac{k-1}{k}\bigg)^{k-p-1-\ell} \nonumber\\
        ={} & \sum_{\ell=0}^{k-p-1} \frac{1}{\ell+1} {k-p-1 \choose \ell} \bigg(\frac{1}{k}\bigg)^{\ell}\bigg(\frac{k-1}{k}\bigg)^{k-p-1-\ell} \\
        & - \sum_{\ell=0}^{k-p-1} \frac{1}{(\ell+2)(\ell+1)} {k-p-1 \choose \ell} \bigg(\frac{1}{k}\bigg)^{\ell}\bigg(\frac{k-1}{k}\bigg)^{k-p-1-\ell}\nonumber\\
        ={} & \frac{k}{k-p} \bigg( 1-\bigg(\frac{k-1}{k}\bigg)^{k-p}\bigg) - \sum_{\ell=0}^{k-p-1} \frac{1}{(\ell+2)(\ell+1)} {k-p-1 \choose \ell} \bigg(\frac{1}{k}\bigg)^{\ell}\bigg(\frac{k-1}{k}\bigg)^{k-p-1-\ell},\label{eq:sum-first}
    \end{align}
    where the second equality follows from equality \eqref{eq:sum-second}.
    For the other sum on the right-hand side, we obtain
    \allowdisplaybreaks
    \begin{align*}
        & \sum_{\ell=0}^{k-p-1} \frac{1}{(\ell+2)(\ell+1)} {k-p-1 \choose \ell} \bigg(\frac{1}{k}\bigg)^{\ell}\bigg(\frac{k-1}{k}\bigg)^{k-p-1-\ell} \\
        ={} & \sum_{\ell=0}^{k-p-1} \frac{1}{(\ell+2)(\ell+1)} \cdot \frac{(k-p-1)!}{\ell!(k-p-1-\ell)!} \bigg(\frac{1}{k}\bigg)^{\ell}\bigg(\frac{k-1}{k}\bigg)^{k-p-1-\ell} \\
        ={} & \frac{k^2}{(k-p+1)(k-p)} \sum_{\ell=0}^{k-p-1} \frac{(k-p+1)!}{(\ell+2)!(k-p-1-\ell)!} \bigg(\frac{1}{k}\bigg)^{\ell+2}\bigg(\frac{k-1}{k}\bigg)^{k-p-1-\ell} \\
        ={} & \frac{k^2}{(k-p+1)(k-p)} \sum_{\ell=0}^{k-p-1} {k-p+1 \choose \ell+2} \bigg(\frac{1}{k}\bigg)^{\ell+2}\bigg(\frac{k-1}{k}\bigg)^{k-p+1-(\ell+2)} \\
        ={} & \frac{k^2}{(k-p+1)(k-p)} \sum_{\ell=2}^{k-p+1} {k-p+1 \choose \ell} \bigg(\frac{1}{k}\bigg)^{\ell}\bigg(\frac{k-1}{k}\bigg)^{k-p+1-\ell} \\
        ={} & \frac{k^2}{(k-p+1)(k-p)} \bigg(1-\bigg(\frac{k-1}{k}\bigg)^{k-p+1} - \frac{k-p+1}{k}\bigg(\frac{k-1}{k}\bigg)^{k-p}\bigg) \\
        ={} & \frac{k}{(k-p+1)(k-p)}\bigg( k - (2k-p)\bigg(\frac{k-1}{k}\bigg)^{k-p}\bigg)
    \end{align*}
    Replacing in equality \eqref{eq:sum-first}, we obtain
    \begin{align*}
        & \sum_{\ell=0}^{k-p-1} \frac{1}{\ell+2} {k-p-1 \choose \ell} \bigg(\frac{1}{k}\bigg)^{\ell}\bigg(\frac{k-1}{k}\bigg)^{k-p-1-\ell} \\
        ={} & \frac{k}{(k-p+1)(k-p)}\bigg( (k-p+1)-(k-p+1)\bigg(\frac{k-1}{k}\bigg)^{k-p} - k + (2k-p)\bigg(\frac{k-1}{k}\bigg)^{k-p}\bigg)\\
        ={} & \frac{k}{(k-p+1)(k-p)}\bigg((k-1)\bigg(\frac{k-1}{k}\bigg)^{k-p} - (p-1)\bigg). 
    \end{align*}
    Combining this equality with equalities \eqref{eq:prob-main-event} and \eqref{eq:sum-second}, we conclude that
    \begin{align*}
        \PP[\Ev_{i^*}] & = \frac{p}{(k-p+1)(k-p)}\bigg((k-1)\bigg(\frac{k-1}{k}\bigg)^{k-p} - (p-1)\bigg) + 1-\bigg(\frac{k-1}{k}\bigg)^{k-p}\\
        & = \frac{(k-p+1)(k-p)-p(p-1) - ((k-p+1)(k-p)-p(k-1)) \big(\frac{k-1}{k}\big)^{k-p}}{(k-p+1)(k-p)}\\
        & = \frac{k(k-2p+1) - (k^2+k-3pk+p^2)\big(\frac{k-1}{k}\big)^{k-p}}{(k-p+1)(k-p)}.\qed
    \end{align*}

\subsubsection{Proof of \Cref{claim:function-decreasing}}\label{app:claim:function-decreasing}

We compute the difference between the value of $g_k(p)$ at two consecutive values of $p$.
For each $p\in \{0,\ldots,k-2\}$, we have
\allowdisplaybreaks
\begin{align*}
    & g_k(p)-g_k(p+1)\\
    ={} & \frac{k(k-2p+1) - (k^2+k-3pk+p^2)\big(\frac{k-1}{k}\big)^{k-p}}{(k-p+1)(k-p)} \\
    & - \frac{k(k-2p-1) - (k^2-2k-3pk+p^2+2p+1)\big(\frac{k-1}{k}\big)^{k-p-1}}{(k-p)(k-p-1)}\\
    ={} & \frac{k(k-p-1)(k-2p+1)-k(k-p+1)(k-2p-1)}{(k-p+1)(k-p)(k-p-1)}\\
    & - \frac{(k\!-\!1)(k^2\!+\!k\!-\!3pk\!+\!p^2)(k\!-\!p\!-\!1)-k(k^2\!-\!2k\!-\!3pk\!+\!p^2\!+\!2p\!+\!1)(k\!-\!p\!+\!1)}{k(k-p+1)(k-p)(k-p-1)}\bigg(\frac{k\!-\!1}{k}\bigg)^{k-p-1}\\
    ={} & \frac{2pk}{(k-p+1)(k-p)(k-p-1)}-\frac{(k(5k-4p-3)+p^2+p)p}{k(k-p+1)(k-p)(k-p-1)}\bigg(\frac{k-1}{k}\bigg)^{k-p-1}\\
    ={} & \frac{p}{(k-p+1)(k-p)(k-p-1)(k-1)}\bigg(2k(k-1) - \bigg(\frac{k-1}{k}\bigg)^{k-p}(k(5k-4p-3)+p^2+p)\bigg).
\end{align*}

To conclude, we now prove that the last expression is non-negative for every $p\in \{0,\ldots,k-2\}$.
To see this, let
\[
  h(p,k) = 2k(k-1) - \biggl(\frac{k-1}{k}\biggr)^{k-p} \bigl( k(5k-4p-3) + p^2 + p \bigr).
\]
Then
\begin{align*}
  h(k,k) &= 2k(k-1) - \biggl(\frac{k-1}{k}\biggr)^{k-k} \bigl( k(5k-4k-3) + k^2 + k \bigr) = 2k(k-1) - (2k^2 - 2k )= 0
\end{align*}
and
\allowdisplaybreaks
\begin{align*}
  h(p-1,k) - h(p,k) 
  &= 2k(k-1) - \biggl(\frac{k-1}{k}\biggr)^{k-p+1} \bigl( k(5k-4(p-1)-3) + (p-1)^2 + (p-1) \bigr) \\
  &\phantom{{}=}
  - 2k(k-1) + \biggl(\frac{k-1}{k}\biggr)^{k-p} \bigl( k(5k-4p-3) + p + p^2 \bigr) \\
  &= \biggl(\frac{k-1}{k}\biggr)^{k-p} \bigl( 5k^2 - 4kp - 3 k + p^2 + p \bigr) \\
  &\phantom{{}=} - \biggl(\frac{k-1}{k}\biggr)^{k-p} \frac{k-1}{k}\bigl( 5k^2 + k - 4kp + p^2 - p \bigr) \\
  &=  \biggl(\frac{k-1}{k}\biggr)^{k-p} \frac{1}{k} \bigl( 5k^3 - 3k^2 - 4k^2p + kp^2 + kp \bigl) \\
  &\phantom{{}=} - \biggl(\frac{k-1}{k}\biggr)^{k-p} \frac{1}{k}\bigl( 5k^3 - 4k^2 - k - 4k^2p + 3kp + kp^2 - p^2 + p \bigr) \\
  &= \biggl(\frac{k-1}{k}\biggr)^{k-p} \frac{1}{k} \bigl( k^2 + k - 2 kp + p^2 - p \bigr) \\
  &= \biggl(\frac{k-1}{k}\biggr)^{k-p} \frac{(k-p) (k-p+1)}{k} \geq 0.
\end{align*}
We conclude that $h(p,k) \geq 0$ for every $p\in \{0,\ldots,k\}$ and thus $g_k(p)-g_k(p+1) \geq 0$ for every $p\in \{0,\ldots,k-2\}$.\qed

\section{Proofs Deferred from \Cref{sec:upper-bounds}}

\subsection{Proof of \Cref{thm:ubs}}\label{app:thm:ubs}

We first state and prove a lemma that allows us to restrict to mechanisms that assign the same probabilities to symmetric vertices in the input graph, as long as they all belong or all do not belong to the predicted set.
This is a natural extension to our setting of a lemma first formulated by \citet{holzman2013impartial} and used extensively to prove upper bounds on the performance guarantees of impartial mechanisms \citep{fischer2015optimal,bjelde2017impartial,cembrano2023single}.

For a set of vertices $[n]$ and a predicted set $\hat{S}\in \binom{[n]}{k}$, we say that a permutation $\pi = (\pi_1,\ldots, \pi_{n})$ of $[n]$ is \textit{$\hat{S}$-invariant} if, for every $j\in [n]$, $j\in \hat{S} \Leftrightarrow \pi_j\in \hat{S}$; we denote the set of such permutations by $\Pi^{\hat{S}}_n$. 
A $k$-selection mechanism with predictions $f$ is \emph{symmetric} if it is invariant with respect to renaming of the predicted and non-predicted vertices: For every $G = ([n], E)\in \GG_n$, every $\hat{S} \in \binom{[n]}{k}$, every  $i\in [n]$, and every $\hat{S}$-invariant permutation $\pi = (\pi_1,\ldots, \pi_{n})$ of $[n]$, it holds that
$(f(\hat{S},G_{\pi}))_{\pi_i} = (f(\hat{S},G))_i$,
where $G_{\pi} = ([n], E_{\pi})$ with $E_{\pi} = \{(\pi_j, \pi_{j'}):  (j, j') \in E\}$. 
For a given $k$-selection mechanism with predictions $f$ and a given predicted set $\hat{S}$ of size $k$, we denote by $f_{\text{s}}$ the mechanism obtained by applying an $\hat{S}$-invariant permutation $\pi$ taken uniformly at random to the vertices of the input graph, invoking $f$, and permuting the result by the inverse of $\pi$. Thus, for all $n\in \N,\ G \in \GG_n$, and $i \in [n]$,
\[
(f_{\text{s}}(\hat{S},G))_i = \frac{k!(n-k)!}{n!} \sum_{\pi\in \Pi^{\hat{S}}_n}(f(\hat{S},G_{\pi}))_{\pi_i}.
\]

\begin{lemma}\label{lem:symmetry}
Let $f$ be a $k$-selection mechanism with predictions that is impartial, $\alpha$-consistent, and $\beta$-robust on $\GG_n$. Then, $f_{\text{s}}$ is symmetric, impartial, $\alpha$-consistent, and $\beta$-robust on $\GG_n$.
\end{lemma}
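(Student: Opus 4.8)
The plan is to verify the four asserted properties of $f_{\text{s}}$ one at a time, each reducing to bookkeeping with the relabelling $G\mapsto G_\pi$. Two facts should be recorded first and reused throughout. First, for \emph{any} permutation $\pi$ of $[n]$, relabelling preserves indegrees, $\delta^-(\pi_i,G_\pi)=\delta^-(i,G)$ for every $i\in[n]$, and hence $\Delta_k(G_\pi)=\Delta_k(G)$; moreover $(G_\sigma)_\pi=G_{\pi\circ\sigma}$, which one checks at the level of edge sets. Second, $\Pi^{\hat{S}}_n$ is exactly the set-stabiliser of $\hat{S}$ in the symmetric group, hence a subgroup of order $k!(n-k)!$ (consistent with the normalising constant in the definition of $f_{\text{s}}$), and every $\pi\in\Pi^{\hat{S}}_n$ satisfies $\pi^{-1}(\hat{S})=\hat{S}$. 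I would also note in passing that $\sum_{i\in[n]}(f_{\text{s}}(\hat{S},G))_i\leq k$, since each term in the average over $\pi$ is a permutation of $f(\hat{S},G_\pi)$, so $f_{\text{s}}$ is a legitimate $k$-selection mechanism with predictions.

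For \emph{symmetry}, this is the standard averaging argument. Fix an $\hat{S}$-invariant permutation $\sigma$ and expand $(f_{\text{s}}(\hat{S},G_\sigma))_{\sigma_i}$ by definition; using $(G_\sigma)_\pi=G_{\pi\circ\sigma}$ the sum over $\pi\in\Pi^{\hat{S}}_n$ becomes a sum over $\tau=\pi\circ\sigma$, and since right multiplication by $\sigma$ permutes the group $\Pi^{\hat{S}}_n$ this is the same sum, equal to $(f_{\text{s}}(\hat{S},G))_i$. For \emph{impartiality}, let $G=([n],E)$ and $G'=([n],E')$ agree outside the outgoing edges of vertex $i$; for each $\pi\in\Pi^{\hat{S}}_n$ the edge sets $E_\pi$ and $E'_\pi$ agree outside the outgoing edges of $\pi_i$ (an edge $(\pi_j,\pi_{j'})$ lies among those iff $j=i$), so impartiality of $f$ gives $(f(\hat{S},G_\pi))_{\pi_i}=(f(\hat{S},G'_\pi))_{\pi_i}$, and averaging over $\pi$ yields $(f_{\text{s}}(\hat{S},G))_i=(f_{\text{s}}(\hat{S},G'))_i$.

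For \emph{robustness} and \emph{consistency}, the key identity is that, for any $G$ and any $\pi\in\Pi^{\hat{S}}_n$, re-indexing by $m=\pi_i$ and using indegree-preservation gives
\[
\sum_{i\in[n]}(f(\hat{S},G_\pi))_{\pi_i}\,\delta^-(i,G)=\sum_{m\in[n]}(f(\hat{S},G_\pi))_m\,\delta^-(m,G_\pi).
\]
For robustness, apply $\beta$-robustness of $f$ to $G_\pi$ to bound the right-hand side below by $\beta\,\Delta_k(G_\pi)=\beta\,\Delta_k(G)$, then average over $\pi$. For consistency, the extra point is that accuracy of the prediction transfers: if $\delta^-(\hat{S},G)=\Delta_k(G)$, then since $\pi^{-1}(\hat{S})=\hat{S}$ the edges of $G_\pi$ into $\hat{S}$ correspond bijectively to the edges of $G$ into $\hat{S}$, so $\delta^-(\hat{S},G_\pi)=\delta^-(\hat{S},G)=\Delta_k(G)=\Delta_k(G_\pi)$; hence $\alpha$-consistency of $f$ applies to each $G_\pi$, and averaging finishes the argument.

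I expect no genuine obstacle here; the only step that requires care rather than mechanical checking is the consistency step, namely confirming that $\hat{S}$-invariant relabellings preserve the event ``the prediction is accurate.'' This is exactly where $\hat{S}$-invariance (as opposed to an arbitrary permutation) is used, via $\pi^{-1}(\hat{S})=\hat{S}$; if one averaged over all of $\Pi_n$ the averaged mechanism would no longer be well defined relative to a \emph{fixed} predicted set. Everything else is bookkeeping, with the group structure of $\Pi^{\hat{S}}_n$ underpinning the symmetry claim.
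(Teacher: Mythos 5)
Your proof is correct and follows essentially the same averaging argument as the paper's. It is in fact more complete: the paper's proof handles impartiality, consistency, and robustness by the same manipulations but omits the verification that $f_{\text{s}}$ is symmetric and leaves implicit both the re-indexing step (via $\delta^-(\pi_i,G_\pi)=\delta^-(i,G)$ and $\Delta_k(G_\pi)=\Delta_k(G)$) and the observation that $\hat{S}$-invariance preserves the accuracy of the prediction, all of which you spell out correctly using the group structure of $\Pi^{\hat{S}}_n$.
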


\begin{proof}
Let $f$ be as in the statement. 
To prove that $f_\text{s}$ is impartial, let $G=([n],E),\ G'=([n],E')\in\GG_n,\ \hat{S}\in \binom{[n]}{k}$, and $i\in [n]$ such that $E\setminus (\{i\}\times [n])=E'\setminus (\{i\}\times [n])$. 
Since $f$ is impartial,
\[
(f_{\text{s}}(\hat{S},G))_i = \frac{k!(n-k)!}{n!} \sum_{\pi\in \Pi^{\hat{S}}_n}(f(\hat{S},G_{\pi}))_{\pi_i} = \frac{k!(n-k)!}{n!} \sum_{\pi\in \Pi^{\hat{S}}_n}(f(\hat{S},G'_{\pi}))_{\pi_i}=(f_{\text{s}}(\hat{S},G'))_i,
\]
and thus $f_{\text{s}}$ is impartial.

To prove that $f_{\text{s}}$ is $\alpha$-consistent, let $G=([n],E)\in \GG_n$ and $\hat{S}\in \binom{[n]}{k}$ be such that $\delta^-(\hat{S},G)=\Delta_k(G)$.
Then,
\begin{align*}
	\sum_{i\in [n]}(f_{\text{s}}(\hat{S},G))_i\delta^-(i,G) & = \sum_{i\in [n]} \frac{k!(n-k)!}{n!} \sum_{\pi\in \Pi^{\hat{S}}_n}(f(\hat{S},G_{\pi}))_{\pi_i} \delta^-(i,G)\\
	& = \frac{k!(n-k)!}{n!} \sum_{\pi\in \Pi^{\hat{S}}_n} \sum_{i\in [n]} (f(\hat{S},G_{\pi}))_{\pi_i} \delta^-(i,G)\\
	& \geq \alpha \Delta_k(G),
\end{align*}
where the last inequality holds because $f$ is $\alpha$-consistent.

Similarly, to prove that $f_{\text{s}}$ is $\beta$-robust, we fix $G=([n],E)\in \GG_n$ and $\hat{S}\in \binom{[n]}{k}$.
Then,
\begin{align*}
	\sum_{i\in [n]}(f_{\text{s}}(\hat{S},G))_i\delta^-(i,G) & = \sum_{i\in [n]} \frac{k!(n-k)!}{n!} \sum_{\pi\in \Pi^{\hat{S}}_n}(f(\hat{S},G_{\pi}))_{\pi_i} \delta^-(i,G)\\
	& = \frac{k!(n-k)!}{n!} \sum_{\pi\in \Pi^{\hat{S}}_n} \sum_{i\in [n]} (f(\hat{S},G_{\pi}))_{\pi_i} \delta^-(i,G)\\
	& \geq \beta \Delta_k(G),
\end{align*}
where the last inequality holds because $f$ is $\beta$-robust.
\end{proof}

We now proceed to prove each of the items in \Cref{thm:ubs} as lemmas; the theorem then follows directly.

\begin{lemma}\label{lem:ub-1selection}
	 If a randomized $1$-selection mechanism with predictions is impartial, $\alpha$-consistent, and $\beta$-robust, then $\beta\leq\frac{1}{2}$ and $\alpha+\beta\leq 1$.
\end{lemma}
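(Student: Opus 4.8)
The plan is to establish the two inequalities $\alpha+\beta\le 1$ and $\beta\le\frac12$ separately, in each case through a tiny explicit family of graphs on a fixed vertex set, linked by impartiality, together with the constraint that for a $1$-selection mechanism the selection probabilities within a single graph sum to at most $1$. The symmetrization lemma (Lemma~\ref{lem:symmetry}) is available but is not actually needed here: for $k=1$ the constructions can be made fully explicit.

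For $\alpha+\beta\le 1$ I would take $n=2$ and the prediction $\hat{S}=\{1\}$, and consider three graphs: $G_a$ with the single edge $(2,1)$, $G_b$ with both edges $(1,2)$ and $(2,1)$, and $G_c$ with the single edge $(1,2)$. In $G_a$ the prediction is accurate and vertex $1$ is the unique vertex of maximum indegree, so $\alpha$-consistency gives $f_1(\hat{S},G_a)\ge\alpha$; since $G_a$ and $G_b$ differ only in the outgoing edges of vertex $1$, impartiality gives $f_1(\hat{S},G_b)=f_1(\hat{S},G_a)\ge\alpha$. In $G_c$ the prediction is inaccurate (vertex $2$ has indegree $1=\Delta$ while vertex $1$ has indegree $0$), so $\beta$-robustness gives $f_2(\hat{S},G_c)\ge\beta$; since $G_b$ and $G_c$ differ only in the outgoing edges of vertex $2$, impartiality gives $f_2(\hat{S},G_b)=f_2(\hat{S},G_c)\ge\beta$. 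Finally $f_1(\hat{S},G_b)+f_2(\hat{S},G_b)\le 1$, and combining the three displayed bounds yields $\alpha+\beta\le 1$. For $\beta\le\frac12$ I would instead use $n=3$ and the prediction $\hat{S}=\{3\}$, which is inaccurate in every graph below. Take $G$ with edges $(1,2)$ and $(2,1)$, $G_1$ with the single edge $(2,1)$, and $G_2$ with the single edge $(1,2)$. In $G_1$ vertex $1$ has maximum indegree and the prediction is wrong, so robustness gives $f_1(\hat{S},G_1)\ge\beta$, and since $G$ and $G_1$ differ only in the outgoing edges of vertex $1$, impartiality gives $f_1(\hat{S},G)=f_1(\hat{S},G_1)\ge\beta$; symmetrically $f_2(\hat{S},G)\ge\beta$ via $G_2$. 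Since $f_1(\hat{S},G)+f_2(\hat{S},G)+f_3(\hat{S},G)\le 1$, this gives $2\beta\le 1$.

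The proof has no genuinely difficult step, but the point one must get right is that a single impartiality step transfers information about only one coordinate $f_i$, so one cannot directly chain ``the predicted vertex is optimal'' to ``the predicted vertex is not optimal''; instead one needs a middle graph (here $G_b$, resp.\ $G$) in which the consistency-derived bound on one coordinate and the robustness-derived bound on another coordinate can be read off simultaneously, and the inequality $\sum_i f_i\le 1$ is what ties the two together. The accompanying subtlety for the bound $\beta\le\frac12$ is that, to keep the prediction inaccurate throughout the whole chain, the predicted vertex must be placed outside the ``active'' part of the graph, which is why three vertices are required rather than two.
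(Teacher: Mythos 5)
Your proof is correct and follows essentially the same route as the paper's: the same small gadgets (a single edge into the predicted vertex, a single edge out of it, and the two-cycle) linked by impartiality one coordinate at a time, with $\sum_i f_i(\hat{S},G)\le 1$ closing the argument. The only difference is that for $\beta\le\frac{1}{2}$ you relocate the prediction to an isolated third vertex to make it inaccurate; this is valid but unnecessary, since $\beta$-robustness applies regardless of the prediction's quality, so the paper simply reads $\beta\le p_1$ off the first graph, in which vertex $1$ is both predicted and optimal.
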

\begin{proof}
	Consider the graphs in \Cref{fig:a2c1}. It is easily verified that any symmetric impartial mechanism must assign probabilities as shown, and the symmetry assumption is without loss of generality due to \Cref{lem:symmetry}.
\begin{figure}[tb]
\centering
\begin{tikzpicture}[bend angle=12]
  \tikzset{v/.style={circle,draw,fill,inner sep=2.5pt,outer sep=0pt},
           vs/.style={circle,draw,thick,inner sep=2.5pt,outer sep=0pt}}
	\newcommand{\gtwo}[4]{%
    \draw node[vs](1){} node[above left]{#3} +(2,0) node[v](2){} node[above right]{#4};
		\foreach \x/\y in {#1} {\draw[-latex] (\x) to (\y);}
		\foreach \x/\y in {#2} {\draw[-latex] (\x) to[bend left] (\y);}}
  \matrix[column sep=1.2cm,row sep=.8cm]{
    \gtwo{2/1}{}{$p_1$}{} &
    \gtwo{1/2}{}{}{$p_2$} &
    \gtwo{}{1/2,2/1}{$p_1$}{$p_2$} \\};
\end{tikzpicture}
\caption{Impartial $1$-selection from $n$-vertex graphs. The predicted vertex is shown in white. Only $2$ vertices are shown; the remaining $n-2$ vertices do not have any incident edges.}
\label{fig:a2c1}
\end{figure}

By the first graph, $\alpha\leq p_1$ and $\beta\leq p_1$. By the second graph, $\beta\leq p_2$. By the third graph, $p_1+p_2\leq 1$. 
Thus $2\beta\leq p_1+p_2\leq 1$ and $\alpha+\beta\leq p_1+p_2\leq 1$, as claimed.
\end{proof}

\begin{lemma}\label{lem:ub-1selection-plurality}
	If a randomized $1$-selection mechanism with predictions is impartial, $\alpha$-consistent, and $\beta$-robust on plurality graphs, then $\beta\leq\frac{3}{4}$ and $\alpha+\beta\leq\frac{3}{2}$.	
\end{lemma}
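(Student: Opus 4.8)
The plan is to extend the gadget argument of \Cref{lem:ub-1selection} to plurality graphs. As there, I would (optionally) invoke \Cref{lem:symmetry} so that the mechanism may be taken symmetric, introduce one probability variable per vertex of each gadget, use impartiality to identify variables across graphs that differ only in the outgoing edges of a single vertex, and then combine the constraints coming from $\alpha$-consistency, $\beta$-robustness, and $\sum_i p_i\le 1$ into linear inequalities in $\alpha$, $\beta$, and the probabilities.

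For the bound $\alpha+\beta\le\tfrac32$, I would use the plurality graph $G_3$ on the vertex set $\{1,2,a,b\}$ with edges $1\to2$, $2\to1$, $a\to1$, $b\to2$: here $\delta^-(1)=\delta^-(2)=2$ and $\delta^-(a)=\delta^-(b)=0$, so $\Delta=2$, and with $\hat S=\{1\}$ the prediction is accurate. Write $p_1,p_2$ for the probabilities the mechanism assigns to vertices $1,2$ in $G_3$. The key observation is that, by impartiality, $p_1$ also equals the probability of vertex $1$ in the graph $G_1$ obtained from $G_3$ by redirecting vertex $1$'s single outgoing edge from $2$ to $a$ (the two graphs differ only in vertex $1$'s outgoing edges); in $G_1$ vertex $1$ is the \emph{unique} maximum-indegree vertex, still with indegree $2$, and the prediction is still accurate, so $\alpha$-consistency together with the facts that every other vertex has indegree at most $1$ and $\sum_i p_i\le1$ forces $p_1\ge 2\alpha-1$. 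Symmetrically, $p_2$ equals the probability of vertex $2$ in the graph $G_2$ obtained by redirecting vertex $2$'s outgoing edge from $1$ to $b$; there vertex $2$ is the unique maximum-indegree vertex but now $\hat S=\{1\}$ is \emph{inaccurate} (since $\delta^-(1)=1<2$), so $\beta$-robustness gives $p_2\ge 2\beta-1$. Finally $p_1+p_2\le1$ in $G_3$ yields $(2\alpha-1)+(2\beta-1)\le1$, i.e. $\alpha+\beta\le\tfrac32$.

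For the bound $\beta\le\tfrac34$ I would run the same argument with a deliberately wrong prediction. Take $H$ on $\{u,v,a,b\}$ with edges $u\to v$, $v\to u$, $a\to u$, $b\to v$ and $\hat S=\{a\}$, so $\delta^-(u)=\delta^-(v)=2=\Delta$ while $\delta^-(a)=0$, making the prediction inaccurate. Redirecting $u$'s outgoing edge from $v$ to $a$ (respectively $v$'s from $u$ to $b$) produces a graph in which $u$ (respectively $v$) is the unique maximum-indegree vertex while $\hat S=\{a\}$ stays inaccurate, so $\beta$-robustness together with $\sum_i p_i\le1$ gives, via impartiality, $p_u\ge 2\beta-1$ and $p_v\ge 2\beta-1$ for the probabilities in $H$; combined with $p_u+p_v\le1$ in $H$ this gives $2(2\beta-1)\le1$, i.e. $\beta\le\tfrac34$.

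The main obstacle, and the reason the plurality bounds $\tfrac34$ and $\tfrac32$ are weaker than the $\tfrac12$ and $1$ of \Cref{lem:ub-1selection}, is exactly the outdegree-one constraint. In the general setting one links two graphs by \emph{deleting} a vertex's outgoing edge, which changes no indegree except that of the (single) former target; in a plurality graph the edge cannot be deleted, only redirected, and the redirection necessarily raises some other vertex's indegree by one. The delicate point is to choose the redirection target (here the indegree-zero vertices $a$ and $b$) so that it does not become a co-maximum-indegree vertex, since otherwise the clean bounds $p\ge 2\alpha-1$ and $p\ge 2\beta-1$ degrade to the much weaker assertion that the \emph{combined} probability of two vertices is large. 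Beyond this, only routine checks remain (that $n=4$ is admissible, that $\Delta\ge2$ throughout so that consistency and robustness are non-trivial, and that each redirected graph indeed differs from its base graph in only one vertex's outgoing edges).
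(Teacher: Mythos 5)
Your proposal is correct, and it follows the same overall methodology as the paper's proof: four-vertex plurality gadgets, impartiality used to transfer a vertex's selection probability between two graphs that differ only in that vertex's nomination, and the normalization $\sum_i p_i\leq 1$ to close the argument. The instantiation is genuinely different, though. The paper first reduces to symmetric mechanisms via \Cref{lem:symmetry}, uses a single family of three graphs (all with the prediction on a maximum-indegree vertex), introduces six probability variables, and obtains both bounds by summing weighted combinations of the per-graph consistency, robustness, and normalization constraints. You instead extract pointwise bounds $p\geq 2\alpha-1$ and $p\geq 2\beta-1$ on individual probabilities by arranging for the tracked vertex to be the \emph{unique} maximum-indegree vertex in each redirected graph, you use two separate gadget families (one with an accurate prediction for $\alpha+\beta\leq\frac{3}{2}$, one with the prediction on an indegree-zero vertex for $\beta\leq\frac{3}{4}$), and you never need the symmetry reduction. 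One feature of your version worth highlighting: every cross-graph identification you make is of the form ``redirect vertex $i$'s own outgoing edge, hence $f_i$ is preserved,'' which is exactly what the impartiality definition licenses; the paper's figure, by contrast, reuses the label $p_1$ for the predicted vertex across its first two graphs even though those graphs differ in a \emph{third} vertex's nomination, an identification that needs additional justification beyond a single application of impartiality. Your closing observation --- that in plurality graphs one can only redirect, not delete, a nomination, and that the redirection target must be chosen so as not to become a co-maximum vertex --- is precisely the delicate point, and your choice of indegree-zero targets handles it.
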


\begin{proof}
Consider the plurality graphs in \Cref{fig:p4c1}. It is easily verified that any symmetric impartial mechanism must assign probabilities as shown, and the symmetry assumption is without loss of generality due to \Cref{lem:symmetry}.
\begin{figure}[tb]
\centering
\begin{tikzpicture}[bend angle=12]
  \tikzset{v/.style={circle,draw,fill,inner sep=2.5pt,outer sep=0pt},
           vs/.style={circle,draw,thick,inner sep=2.5pt,outer sep=0pt}}
	\newcommand{\gfour}[6]{%
    \draw node[vs](1){} node[above left]{#3} +(2,0) node[v](2){} node[above right]{#4} +(0,-2) node[v](3){} node[below left]{#5} +(2,-2) node[v](4){} node[below right]{#6} ;
		\foreach \x/\y in {#1} {\draw[-latex] (\x) to (\y);}
		\foreach \x/\y in {#2} {\draw[-latex] (\x) to[bend left] (\y);}}
  \matrix[column sep=1cm,row sep=.8cm]{
    \gfour{3/1,4/2}{1/2,2/1}{$p_1$}{$p_2$}{}{} &
    \gfour{3/1,4/3}{1/2,2/1}{$p_1$}{$p_3$}{$p_4$}{} &
    \gfour{1/2,2/3,3/1,4/2}{}{$p_5$}{$p_2$}{$p_6$}{} \\};
\end{tikzpicture}
\caption{Impartial $1$-selection from $4$-vertex plurality graphs. The predicted vertex is shown in white.}
\label{fig:p4c1}
\end{figure}

By the first graph, 
\begin{gather*}
    p_1+p_2\leq 1.
\end{gather*}
By the second graph, 
\begin{align*}
    p_1+p_3+p_4 &\leq 1 \\
    2\alpha &\leq 2p_1+p_3+p_4, \quad \text{and} \\
    2\beta& \leq 2p_1+p_3+p_4.
\intertext{By the third graph,}
    p_2+p_5+p_6 &\leq 1 \quad \text{and} \\
    2\beta &\leq 2p_2+p_5+p_6.
\end{align*}
Then 
\begin{gather*}
    4\beta\leq 2p_1+2p_2+p_3+p_4+p_5+p_6\leq 3, \quad\text{and thus}\quad \beta\leq\frac{3}{4}.
\end{gather*}
Similarly 
\begin{gather*}
    2\alpha+2\beta\leq 2p_1+2p_2+p_3+p_4+p_5+p_6\leq 3, \quad\text{and thus}\quad \alpha+\beta\leq\frac{3}{2}. \qedhere
\end{gather*}
\end{proof}

\begin{lemma}\label{lem:ub-rand-2selection}
	If a randomized $2$-selection mechanism with predictions is impartial, $\alpha$-consistent, and $\beta$-robust, then $\beta\leq \frac{3}{4}$ and $\alpha+\beta\leq \frac{3}{2}$.
\end{lemma}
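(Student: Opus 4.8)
The plan is to reuse the machinery of the proofs of \Cref{lem:ub-1selection,lem:ub-1selection-plurality}. By \Cref{lem:symmetry} it suffices to bound a \emph{symmetric} impartial mechanism, and symmetry together with impartiality then lets me argue that the selection probability of a vertex is unchanged not only when that vertex's own outgoing edges are rerouted (impartiality), but also under $\hat S$-invariant relabellings (symmetry), so that the probabilities of suitably matched vertices in different graphs can be identified. I would then fix a constant-size family of graphs, each equipped with a predicted pair $\hat S$, and for each vertex introduce a variable for the probability the mechanism assigns to it. Impartiality plus symmetry yields equalities among these variables across graphs; the budget $\sum_i f_i \le 2$ yields inequalities; and $\alpha$-consistency and $\beta$-robustness yield further inequalities relating the variables to $\alpha$ and $\beta$. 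Eliminating the probability variables from this linear system should leave exactly $\beta \le \tfrac34$ and $\alpha+\beta\le\tfrac32$.

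Concretely, I would look for three graphs: a graph $G_2$ in which the predicted pair $\hat S$ is optimal, so that the expected selected indegree is at least both $\alpha\Delta_2$ and $\beta\Delta_2$; a graph $G_3$ in which $\hat S$ is suboptimal while some other pair carries the optimal indegree, which gives a pure robustness inequality; and a ``glue'' graph $G_1$ obtained from $G_2$ and from $G_3$ each by a single-vertex edge-reroute followed by an $\hat S$-invariant relabelling, so that $G_1$'s budget inequality relates the probability the mechanism places on an optimal vertex of $G_2$ to the probability it places on an optimal vertex of $G_3$. The gadgets would be patterned on the plurality $1$-selection construction of \Cref{lem:ub-1selection-plurality} --- a mutually nominating pair of high-indegree vertices together with lower-indegree ``spoiler'' vertices, with the optimal structure shifted off $\hat S$ in $G_3$ --- but the indegrees and the number of gadget vertices have to be re-tuned so that, with a budget of $2$ rather than $1$, the three budget inequalities combine with the consistency/robustness inequalities to yield precisely $\beta\le\tfrac34$ and $\alpha+\beta\le\tfrac32$; it may well be necessary to use somewhat larger or more intricate gadgets than in the $k=1$ case.

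The main obstacle is designing the graph family so that this linear-programming bookkeeping closes exactly. Three things have to be controlled: (a) the non-predicted vertices must be indistinguishable from one another and from the predicted vertices only in impartiality/symmetry-respecting ways, so that \Cref{lem:symmetry} pins down enough equalities among the probability variables --- in particular a $2$-selection mechanism can hedge by placing probability on a second, low-indegree vertex, and this extra freedom must be neutralised; (b) the matching between $G_1$ and $G_2$, and between $G_1$ and $G_3$, must be realisable by one edge-reroute plus an $\hat S$-invariant permutation, which (exactly as in the plurality $1$-selection proof, where one inserts a transposition of two non-predicted vertices) generally forces a specific choice of that permutation and hence a careful edge layout; and (c) the indegrees must be chosen so that the coefficients produced by consistency and robustness sum, across the chosen graphs, to the combined budget. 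As a cross-check, the bound $\beta\le\tfrac34$ must agree with the known upper bound for impartial $2$-selection without predictions of \citet{bjelde2017impartial}, since fixing any prediction turns an impartial $\beta$-robust mechanism with predictions into an impartial $\beta$-optimal mechanism without predictions.
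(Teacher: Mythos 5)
Your outline coincides with the paper's actual strategy: invoke \Cref{lem:symmetry} to reduce to symmetric mechanisms, introduce per-vertex probability variables over a small family of graphs with designated predicted pairs, extract equalities from impartiality and symmetry and inequalities from the budget $\sum_i f_i\le 2$ and from consistency/robustness, and then eliminate the variables. The cross-check against the $\frac{3}{4}$ upper bound of \citet{bjelde2017impartial} for $2$-selection without predictions is also sound. However, what you have written is a plan rather than a proof: the entire mathematical content of the lemma lies in exhibiting the graph family and the explicit nonnegative combination of inequalities that certifies the bounds, and you explicitly defer both, conceding that ``the main obstacle is designing the graph family so that this linear-programming bookkeeping closes exactly.'' Until that is done, nothing rules out that the best bounds obtainable this way are weaker than $\beta\le\frac{3}{4}$ and $\alpha+\beta\le\frac{3}{2}$.

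Your proposed three-graph architecture is also likely too lean. The paper's construction (\Cref{fig:a3c2}) uses \emph{five} graphs on three active vertices (two predicted, one not) and five distinct probability values $p_1,\dots,p_5$: one graph where the predicted pair is optimal (giving $2\alpha\le 2p_1$, $2\beta\le 2p_1$), one where it is not (giving $2\beta\le p_2+p_3$), a fifth ``complete'' graph giving $4\alpha\le 2p_4+4p_5$ and $4\beta\le 2p_4+4p_5$, and \emph{two} separate budget graphs ($2p_2+p_4\le 2$ and $p_1+p_3+p_5\le 2$) that are each linked to the others by single-vertex edge-reroutes plus $\hat S$-invariant transpositions. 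The certificate is then the chaining $8\beta=2\beta+4\beta+2\beta\le 2p_1+2(p_2+p_3)+(p_4+2p_5)=(2p_2+p_4)+2(p_1+p_3+p_5)\le 6$, and analogously for $4\alpha+4\beta$. A single ``glue'' graph cannot simultaneously tie together the five probability values appearing in the consistency and robustness inequalities, so you should expect to need at least two budget graphs and an additional prediction-accurate graph beyond the two you list. I would recommend working out the gadgets concretely before claiming the constants.
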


\begin{proof}
	Consider the graphs in \Cref{fig:a3c2}. It is easily verified that any symmetric impartial mechanism must assign probabilities as shown, and the symmetry assumption is without loss of generality due to \Cref{lem:symmetry}.
\begin{figure}[tb]
\centering
\begin{tikzpicture}[bend angle=12]
  \tikzset{v/.style={circle,draw,fill,inner sep=2.5pt,outer sep=0pt},
           vs/.style={circle,draw,thick,inner sep=2.5pt,outer sep=0pt}}
	\newcommand{\gthree}[5]{%
    \draw node[vs](1){} node[above left]{#3} +(0:2) node[vs](2){} node[above right]{#4} +(-60:2) node[v](3){} node[below right]{#5};
		\foreach \x/\y in {#1} {\draw[-latex] (\x) to (\y);}
		\foreach \x/\y in {#2} {\draw[-latex] (\x) to[bend left] (\y);}}
  \matrix[column sep=-.7cm,row sep=.6cm]{
    \gthree{3/1,3/2}{}{$p_1$}{$p_1$}{} &&
    \gthree{1/2,1/3}{}{}{$p_2$}{$p_3$} &&
    \gthree{1/3,2/3}{1/2,2/1}{$p_2$}{$p_2$}{$p_4$} \\
    & \gthree{1/2,3/2}{1/3,3/1}{$p_1$}{$p_5$}{$p_3$} &&
    \gthree{}{1/2,2/1,1/3,3/1,2/3,3/2}{$p_5$}{$p_5$}{$p_4$} \\};
\end{tikzpicture}
\caption{Impartial $2$-selection from $n$-vertex graphs. Predicted vertices are shown in white. Only $3$ vertices are shown; the remaining $n-3$ vertices do not have any incident edges.}
\label{fig:a3c2}
\end{figure}
  
By the first graph,
\begin{gather*}
    2\alpha \leq 2p_1  \quad \text{and} \\
    2\beta \leq 2p_1.
\end{gather*}
By the second graph,
\begin{gather*}
2\beta\leq p_2+p_3.
\end{gather*}
By the third graph,
\begin{gather*}
2p_2+p_4\leq 2.
\end{gather*}
By the fourth graph,
\begin{gather*}
p_1+p_3+p_5\leq 2.
\end{gather*}
Finally, by the fifth graph,
\begin{gather*}
4\alpha \leq 2p_4+4p_5 \quad\text{and} \\
4\beta \leq 2p_4+4p_5.
\end{gather*}

Then 
\begin{align*}
8\beta&=2\beta+4\beta+2\beta \\
&\leq 2p_1+2(p_2+p_3)+(p_4+2p_5) \\
&=(2p_2+p_4)+2(p_1+p_3+p_5) \\
&\leq 2+4=6,
\end{align*}
and thus
\begin{gather*}
\beta\leq \frac{3}{4}.
\end{gather*}

Similarly, 
\begin{align*}
4\alpha+4\beta &= 2\alpha+4\beta+2\alpha \\
&\leq 2p_1+2(p_2+p_3)+(p_4+2p_5) \\
&=(2p_2+p_4)+2(p_1+p_3+p_5) \\
&\leq 2+4=6,
\end{align*}
and thus 
\begin{gather*}
\alpha+\beta\leq \frac{3}{2}.  \qedhere
\end{gather*}
\end{proof}

\begin{lemma}\label{lem:ub-rand-3selection}
	 If a randomized $3$-selection mechanism with predictions is impartial, $\alpha$-consistent, and $\beta$-robust, then $\beta\leq \frac{4}{5}$, $4\alpha+3\beta\leq 6$, and $4\alpha+21\beta\leq 20$.
\end{lemma}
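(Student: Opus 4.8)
The plan is to follow the same route as the proofs of \Cref{lem:ub-1selection}, \Cref{lem:ub-1selection-plurality}, and \Cref{lem:ub-rand-2selection}. First I would invoke \Cref{lem:symmetry} to restrict attention to symmetric mechanisms; this is without loss of generality and lets me assume that, on any instance I construct, the selection probabilities are constant on each orbit of vertices under the automorphisms that fix the predicted set $\hat S$ setwise. Then I would fix a family of $n$-vertex graphs (with $n$ as large as convenient, all but a constant number of vertices isolated and irrelevant) in which exactly three vertices are predicted, chosen so that the orbit structure forces the relevant probabilities to collapse to a handful of scalar variables $p_1,\dots,p_m$.

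For each graph $G$ in the family I would extract three types of linear inequality in $p_1,\dots,p_m,\alpha,\beta$: the normalization inequality $\sum_i f_i(\hat S,G)\le 3$; the robustness inequality $\sum_i f_i(\hat S,G)\,\delta^-(i,G)\ge\beta\,\Delta_3(G)$; and, whenever $\hat S$ is an optimal set in $G$, the consistency inequality $\sum_i f_i(\hat S,G)\,\delta^-(i,G)\ge\alpha\,\Delta_3(G)$. Impartiality is the glue: if two graphs agree on all edges not emanating from a vertex $v$ and have the same predicted set, then the probability assigned to $v$ is the same in both, so a lower bound on a probability obtained from a small ``witness'' graph can be substituted into the normalization and non-optimality inequalities of a denser graph that contains the witness as the in-neighbourhood of $v$. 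Concretely I expect to need a dense graph on four vertices (the three predicted ones plus one extra) in which all four vertices have equal indegree, to produce the key $\sum p_j\le 3$ constraint alongside matching $\ge\alpha$ and $\ge\beta$ constraints; witness graphs in which a single vertex of high indegree $d$ is nominated by everyone, so that robustness forces a lower bound on its selection probability, together with companion graphs in which that vertex has outdegree zero so that $\hat S$ becomes optimal and consistency forces a lower bound in terms of $\alpha$; and a few further small graphs relating probabilities of predicted versus non-predicted vertices. Feeding the witness bounds into the normalization inequality of the dense graph, and optimising over the degree parameter $d$, should leave exactly $\beta\le\tfrac45$, $4\alpha+3\beta\le6$, and $4\alpha+21\beta\le20$ as the binding inequalities; the integer coefficients $4,3,21,20$ come out of the optimal choice of $d$ and from clearing denominators in the resulting rational bounds (and, as a sanity check, the lines $4\alpha+3\beta=6$ and $4\alpha+21\beta=20$ should meet at $(\alpha,\beta)=(\tfrac{11}{12},\tfrac79)$, matching the gray region for $k=3$ in \Cref{fig:results}).

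The hard part is the design of the graph family: one needs instances with enough symmetry that the probabilities reduce to few variables, yet rich enough that the linear system is tight simultaneously at all three stated facets of the feasible region — in particular at $(\tfrac45,\tfrac45)$ and at $(\tfrac{11}{12},\tfrac79)$ — and one has to get the interplay between the ``optimal-$\hat S$'' consistency instances and the ``non-optimal-$\hat S$'' robustness instances exactly right, since a given dense graph plays both roles depending on where the prediction is placed. Once the gadgets are pinned down, the remainder — computing indegrees and maximum $3$-set degrees, listing the inequalities, and exhibiting nonnegative multipliers that combine them into the three target inequalities — is routine and parallels the earlier lemmas.
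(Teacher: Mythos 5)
Your high-level strategy is exactly the paper's: invoke \Cref{lem:symmetry} to restrict to symmetric mechanisms, build a small family of gadget graphs with three predicted vertices and all but a constant number of vertices isolated, extract normalization, consistency, and robustness inequalities in a handful of probability variables, and combine them with nonnegative multipliers. Your sanity check that the facets $4\alpha+3\beta=6$ and $4\alpha+21\beta=20$ meet at $(\alpha,\beta)=\bigl(\frac{11}{12},\frac{7}{9}\bigr)$ is also correct and consistent with \Cref{fig:results}. However, what you have written is a plan rather than a proof: the entire mathematical content of \Cref{lem:ub-rand-3selection} lies in the choice of the gadget graphs and the explicit multipliers certifying the three inequalities, and you explicitly defer both (``the hard part is the design of the graph family,'' ``once the gadgets are pinned down, the remainder \dots is routine''). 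Without the gadgets and the certificate there is nothing to check, so the proof is incomplete.

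Moreover, the specific gadgets you anticipate do not match what is actually needed, so the plan as stated would require substantive revision to execute. The paper's proof uses seven graphs on five relevant vertices (three predicted, two not), all with maximum indegree at most $2$; there is no single ``dense graph on four vertices with all indegrees equal,'' and there is no high-indegree witness vertex with a degree parameter $d$ to optimize over --- the rational constants $\frac{4}{5}$, $\frac{3}{2}$ per unit of $\alpha$, and $\frac{20}{21}$ arise from a fixed linear combination such as $75\beta=2(3\beta)+3(3\beta)+6\beta+6(6\beta)+3(6\beta)\leq 60$ over those seven fixed instances, not from tuning a degree. The two non-predicted vertices are essential: the binding robustness instances are precisely those where an optimal $3$-set overlaps $\hat S$ in two of its three elements, and capturing the interaction between the probabilities of predicted and non-predicted vertices under impartiality requires pairs of graphs that differ only in the outgoing edges of one vertex (e.g.\ reversing a $2$-cycle between a predicted and a non-predicted vertex), which your outline does not identify. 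Until you exhibit the concrete graphs, the resulting linear system, and the multipliers, the claim is not established.
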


\begin{proof}
	Consider the graphs in \Cref{fig:a5c3}. It is easily verified that any symmetric impartial mechanism must assign probabilities as shown, and the symmetry assumption is without loss of generality due to \Cref{lem:symmetry}.
\begin{figure}[tb]
\centering
\begin{tikzpicture}[bend angle=12]
  \tikzset{v/.style={circle,draw,fill,inner sep=2.5pt,outer sep=0pt},
           vs/.style={circle,draw,thick,inner sep=2.5pt,outer sep=0pt}}
  \newcommand{\gfive}[7]{%
    \draw +(162:1) node(1)[vs]{} node[left]{#3} +(90:1) node(2)[vs]{} node[above right]{#4} +(18:1) node(3)[vs]{} node[right]{#5} +(306:1) node(4)[v]{} node[below right]{#6} +(234:1) node(5)[v]{} node[below left]{#7};
		\foreach \x/\y in {#1} {\draw[-latex] (\x) to (\y);}
		\foreach \x/\y in {#2} {\draw[-latex] (\x) to[bend left] (\y);}}
  \matrix[column sep=-1cm,row sep=.7cm]{
    \gfive{5/1,5/2,5/3}{}{$p_1$}{$p_1$}{$p_1$}{}{} &&
    \gfive{5/1,5/2,5/3,5/4}{}{$p_2$}{$p_2$}{$p_2$}{$p_3$}{} &&
    \gfive{1/2,1/3,1/4,1/5}{}{}{$p_4$}{$p_4$}{$p_5$}{$p_5$} &&
    \gfive{1/2,1/3,1/4,5/2,5/3}{1/5,5/1}{$p_1$}{$p_6$}{$p_6$}{$p_7$}{$p_5$} \\
    & \gfive{4/1,4/2,4/3,5/1,5/2,5/3}{4/5,5/4}{$p_8$}{$p_8$}{$p_8$}{$p_3$}{$p_3$} &&
    \gfive{1/2,1/3,1/4,5/2,5/3,5/4}{1/5,5/1}{$p_2$}{$p_9$}{$p_9$}{$p_{10}$}{$p_5$} &&
    \gfive{1/3,1/4,1/5,2/3,2/4,2/5}{1/2,2/1}{$p_4$}{$p_4$}{$p_{11}$}{$p_{12}$}{$p_{12}$} \\};
  \end{tikzpicture}
\caption{Impartial $3$-selection from $n$-vertex graphs. Predicted vertices are shown in white. Only $5$ vertices are shown; the remaining $n-5$ vertices do not have any incident edges.}
\label{fig:a5c3}
\end{figure}
  
By the first graph, 
\begin{gather*}
3\alpha\leq 3p_1.
\end{gather*}
By the second graph, 
\begin{gather*}
3\alpha\leq 3p_2+p_3 \quad\text{and} \\
3\beta\leq 3p_2+p_3.
\end{gather*}
By the third graph, 
\begin{gather*}
3\beta\leq 2p_4+2p_5.
\end{gather*}
By the fourth graph
\begin{gather*}
p_1+p_5+2p_6+p_7\leq 3 \quad\text{and} \\
5\alpha\leq p_1+p_5+4p_6+p_7.
\end{gather*}
By the fifth graph,
\begin{gather*}
    2p_3+3p_8\leq 3, \\
    6\alpha\leq 2p_3+6p_8, \quad\text{and}\quad 
    6\beta\leq 2p_3+6p_8.
\end{gather*}
By the sixth graph, 
\begin{gather*}
    p_2+p_5+2p_9+p_{10}\leq 3 \quad\text{and} \\
    6\beta\leq p_2+p_5+4p_9+2p_{10}. 
\end{gather*}
Finally, by the seventh graph, 
\begin{gather*}
2p_4+p_{11}+2p_{12}\leq 3 \quad\text{and} \\
6\beta\leq 2p_4+2p_{11}+4p_{12}.
\end{gather*}
  
  Then 
  \begin{align*}
    75\beta & =2(3\beta)+3(3\beta)+6\beta+6(6\beta)+3(6\beta)\\
    & =2(3p_2+p_3)+3(2p_4+2p_5)+(2p_3+6p_8)\\
    & \phantom{{}=} +6(p_2+p_5+4p_9+2p_{10})+3(2p_4+2p_{11}+4p_{12})\\
    & =2(2p_3+3p_8)+12(p_2+p_5+2p_9+p_{10})+6(2p_4+p_{11}+2p_{12})\\
    & \leq 2\cdot 3+12\cdot 3+6\cdot 3 = 60,
  \end{align*}
and thus
\begin{gather*}
    \beta\leq \frac{4}{5}.
\end{gather*}

  Also  
  \begin{align*}
      36\alpha+27\beta& =2(3\alpha)+6(5\alpha)+3(3\beta)+3(6\beta)\\
      & \leq 2(3p_1)+6(p_1+p_5+4p_6+p_7)+3(2p_4+2p_5)+3(2p_4+2p_{11}+4p_{12})\\
      & \leq 12(p_1+p_5+2p_6+p_7)+6(2p_4+2p_{11}+4p_{12})\\
      & \leq 12\cdot 3+6\cdot 3 = 54,
\end{align*}
and thus 
\begin{gather*}
    4\alpha+3\beta\leq 6.
\end{gather*}

  Finally  
  \begin{align*}
      12\alpha+63\beta & =2(3\alpha)+6\alpha+3(3\beta)+6(6\beta)+3(6\beta)\\
      & \leq 2(3p_2+p_3)+(2p_3+6p_8)+3(2p_4+2p_5)\\
      & \phantom{{}=}+6(p_2+p_5+4p_9+2p_{10})+3(2p_4+2p_{11}+4p_{12})\\
      & =2(2p_3+3p_8)+12(p_2+p_5+2p_9+p_{10})+6(2p_4+p_{11}+2p_{12})\\
      & \leq 2\cdot 3+12\cdot 3+6\cdot 3 = 60,
\end{align*}
and thus 
\begin{gather*}
4\alpha+21\beta\leq 20.  \qedhere
\end{gather*}
\end{proof}

\bibliographystyle{abbrvnat}
\bibliography{bibliography}

\end{document}